\documentclass[preprint,3p,times]{elsarticle}
\RequirePackage{multirow,booktabs,subfigure,color,array,hhline,makecell}
\usepackage{amssymb}
\usepackage{amsmath}
\usepackage{graphicx}
\usepackage{subfigure}
\usepackage{amsthm}
\usepackage{mathrsfs}
\usepackage{indentfirst}
\usepackage[colorlinks,citecolor=blue,urlcolor=blue]{hyperref}
\allowdisplaybreaks
\usepackage[table]{xcolor}
\usepackage{tikz-network}
\usepackage{pgf}
\usepackage{tikz}
\usepackage{subfigure}
\usetikzlibrary{arrows, decorations.pathmorphing, backgrounds, positioning, fit, petri, automata}
\usetikzlibrary{shadows,arrows,positioning}
\RequirePackage{algorithm,algpseudocode}
\allowdisplaybreaks
\usepackage{changes}

% settings
\newtheorem{thm}{Theorem}
\newtheorem{defin}{Definition}
\newtheorem{lem}{Lemma}
\newtheorem{assum}{Assumption}
\newtheorem{rem}{Remark}
\newtheorem{cor}{Corollary}

\newtheorem{prop}{Proposition}
\newtheorem{Ex}{Example}
\allowdisplaybreaks[4]
%%
%% \BibTeX command to typeset BibTeX logo in the docs
\AtBeginDocument{%
	\providecommand\BibTeX{{%
			\normalfont B\kern-0.5em{\scshape i\kern-0.25em b}\kern-0.8em\TeX}}}

%\journal{Information Sciences}
\journal{}
\begin{document}
\begin{frontmatter}
\title{Latent class analysis with weighted responses}
%\title{Distribution-free latent class model for classification with weighted responses}

\author[label1]{Huan Qing\corref{cor1}}
\ead{qinghuan@u.nus.edu\&qinghuan07131995@163.com}
\cortext[cor1]{Corresponding author.}
\address[label1]{School of Economics and Finance, Chongqing University of Technology, Chongqing, 400054, China}
\begin{abstract}
%The latent class model has been proposed as a powerful tool for cluster analysis of categorical data in various fields such as social, psychological, behavioral, and biological sciences. However, one important limitation of the latent class model is that it is only suitable for data with binary responses, making it fail to model real-world data with continuous or negative responses. In many applications, ignoring the weights throws out a lot of potentially valuable information contained in the weights. Here, to model weighted response data for latent class analysis, we propose a novel generative model, the weighted latent class model (WLCM). Our model allows data's response matrix to be generated from an arbitrary distribution with a latent class structure. In comparison to the popular latent class model, the WLCM is more realistic and more general. To our knowledge, our WLCM is the first model for latent class analysis with weighted responses. We investigate the identifiability of the model and propose an efficient algorithm for estimating the latent classes and other model parameters. We show that the proposed algorithm enjoys consistent estimation. Our findings provide a novel and useful statistical tool for latent class analysis. The performance of the proposed algorithm is investigated using both computer-generated and real-world weighted response data.

The latent class model has been proposed as a powerful tool for cluster analysis of categorical data in various fields such as social, psychological, behavioral, and biological sciences. However, one important limitation of the latent class model is that it is only suitable for data with binary responses, making it fail to model real-world data with continuous or negative responses. In many applications, ignoring the weights throws out a lot of potentially valuable information contained in the weights. To address this limitation, we propose a novel generative model, the weighted latent class model (WLCM). Our model allows data's response matrix to be generated from an arbitrary distribution with a latent class structure. In comparison to the latent class model, our WLCM is more realistic and more general. To our knowledge, our WLCM is the first model for latent class analysis with weighted responses. We investigate the identifiability of the model and propose an efficient algorithm for estimating the latent classes and other model parameters. We show that the proposed algorithm enjoys consistent estimation. The performance of the proposed algorithm is investigated using both computer-generated and real-world weighted response data.
\end{abstract}
\begin{keyword}
Categorical data\sep latent class model\sep spectral method \sep SVD\sep weighted responses
\end{keyword}

\end{frontmatter}
%% \linenumbers
%% main text
\section{Introduction}\label{sec1}
Latent class model (LCM) \citep{dayton1988concomitant,hagenaars2002applied,magidson2004latent} is a powerful tool for categorical data, with many applications across various areas such as social, psychological, behavioral, and biological sciences. These applications include movie rating \citep{guo2014etaf,harper2015movielens},  psychiatric evaluation \citep{meyer2001psychological,silverman2015american,de2018analysis,chen2019joint}, educational assessments \cite{shang2021partial}, political surveys \citep{poole2000nonparametric,clinton2004statistical,bakker2013bayesian,chen2021unfolding}, transport economics personal interview \citep{martinez2022shippers}, and disease etiology detection \citep{formann1996latent,kongsted2017latent,wu2017nested}. In categorical data, subjects (individuals) typically respond to several items (questions). LCM is a theoretical model that categorizes subjects into disjoint groups, known as latent classes, according to their response pattern to a collection of categorical items. For example, in movie rating, latent classes may represent different groups of users with an affinity for certain movie themes; in psychological tests, latent classes may represent different types of personalities. In educational assessments, latent classes may indicate different levels of abilities. In political surveys, latent classes may represent distinct types of political ideologies. In transport economics personal interview, each latent class stands for a partition of the population. In disease etiology detection, latent classes may represent different disease categories. To infer latent classes for categorical data generated from LCM, various approaches have been developed in recent years, including maximum likelihood estimation techniques \citep{van1996estimating,bakk2016robustness,chen2022beyond,gu2023joint} and tensor-based methods \citep{anandkumar2014tensor,zeng2023tensor}.

To mathematically describe categorical data, let $R$ be the $N$-by-$J$ observed response matrix such that $R(i,j)$ represents subject $i$'s response to item $j$, where $N$ denotes the number of subjects and $J$ denotes the number of items. For LCM, researchers commonly focus on binary choice data where elements of the observed response matrix $R$ only take 0 or 1 \citep{formann1996latent,shang2021partial,zeng2023tensor,formann1985constrained,lindsay1991semiparametric,zhang2004hierarchical,yang2006evaluating,xu2017aos,xu2018identifying,ma2019cognitive,gu2020partial}. LCM models binary response matrix by generating its elements from a Bernoulli distribution. In categorical data, binary responses can be agree/disagree responses in psychiatric evaluation, correct/wrong responses in educational assessments, and presence/absence of symptoms in disease etiology detection. However, categorical data is more than binary response. Categorical data with weighted responses is also commonly encountered in the real world and ignoring weighted data may lose potentially meaningful information \citep{newman2004analysis}. For example, in movie rating \citep{guo2014etaf}, rating scores range in $\{1,2,3,4,5\}$ and simply letting $R$ be binary by recording rated/not rated loses valuable information that can reflect users’ preference patterns; for real-world categorical data from various online personality tests in the link \url{https://openpsychometrics.org/_rawdata/}, the range of most responses are $\{0,1,2,\ldots,m\}$, where $m$ is an integer like 2, 5, and 10; in the buyer-seller rating e-commerce data \citep{derr2019balance}, elements of the observed response matrix take values in $\{-1,0,1\}$ (for convenience, we call such $R$ as signed response matrix in this paper) since sellers are rated by users by applying three levels of rating, ``Positive”, ``Neutral”, and ``Negative”. In the users-jokes ratting categorical data Jester 100 \citep{goldberg2001eigentaste}, all responses (i.e., ratings) are continuous numbers ranging in $[-10,10]$. All aforementioned real-world data with weighted responses cannot be generated from a Bernoulli distribution. Therefore, the classical latent class model is inadequate for handling the aforementioned data with weighted responses. As a result, it is desirable to develop a more flexible model for data with weighted responses. With this motivation, our key contributions to the literature of latent class analysis are summarized as follows.
\begin{itemize}
  \item Model. We propose a novel, identifiable, and generative statistical model, the weighted latent class model (WLCM), for categorical data with weighted responses, where the responses can be continuous or negative values. Our WLCM allows the elements of an observed weighted response matrix $R$ to be generated from any distribution provided that the population version of $R$ under WLCM enjoys a latent class structure. For example, our WLCM allows $R$ to be generated from Bernoulli, Normal, Poisson, Binomial, Uniform, and Exponential distributions, etc. By considering a specifically designed discrete distribution, our WLCM can also model signed response matrices. For details, please refer to Examples \ref{Bernoulli}-\ref{Signed}. For comparison, LCM requires $R$ to be generated from Bernoulli distribution and LCM is a sub-model of our WLCM. Under the proposed model, the elements of the observed weighted response matrix $R$ can take any value. Therefore, WLCM is more flexible than LCM. As far as we know, our WLCM is the first statistical model for categorical data in which weighted responses can be continuous or negative values.
  \item Algorithm. We develop an easy-to-implement algorithm, spectral clustering with K-means (SCK), to infer latent classes for weighted response matrices generated from arbitrary distribution under the proposed model. Our algorithm is designed based on a combination of two popular techniques: the singular value decomposition (SVD) and the K-means algorithm.
  \item Theoretical property. We build a theoretical framework to show that SCK enjoys consistent estimation under WLCM. We also provide Examples \ref{Bernoulli}-\ref{Signed} to show that the theoretical performance of the proposed algorithm can be different when the observed weighted response matrices $R$ are generated from different distributions under the proposed model.
  \item Empirical validation. We conduct extensive simulations to validate our theoretical insights. Additionally, we apply our SCK approach to two real-world datasets with meaningful interpretations.
\end{itemize}
The remainder of this paper is organized as follows. Section \ref{sec2} describes the model. Section \ref{sec3} details the algorithm. Section \ref{sec4} establishes the consistency results and provides examples for further analysis. Section \ref{sec5} contains numerical studies that verify our theoretical findings and examine the performance of the proposed method. Section \ref{sec6realdata} demonstrates the proposed method using two real-world datasets. Section \ref{sec7} concludes the paper with a brief discussion of contributions and future work.

The following notations will be used throughout the paper. For any positive integer $m$, let $[m]$ and $I_{m\times m}$ be $[m]:=\{1,2,\ldots,m\}$ and the $m\times m$ identity matrix, respectively. For any vector $x$ and any $q>0$, $\|x\|_q$ denotes $x$’s $l_q$-norm. For any matrix $M$, $M'$ denotes its transpose, $\|M\|$ denotes its spectral norm, $\|M\|_F$ denotes its Frobenius norm, $\mathrm{rank}(M)$ denotes its rank, $\sigma_i(M)$ denotes its $i$-th largest singular value, $\lambda_i(M)$ denotes its $i$-th largest eigenvalue ordered by magnitude, $M(i,:)$ denotes its $i$-th row, and $M(:,j)$ denotes its $j$-th column. Let $\mathbb{R}$ and $\mathbb{N}$ be the set of real numbers and nonnegative integers, respectively. For any random variable $X$, $\mathbb{E}(X)$ and $\mathbb{P}(X=a)$ are the expectation and the probability that $X$ equals to $a$, respectively. Let $\mathbb{M}_{m, K}$ be the collection of all $m\times K$ matrices where each row has only one $1$ and all others $0$.
\section{Weighted latent class model}\label{sec2}
Unlike most researchers that only focus on binary responses, in our weighted response setting in this paper, all elements of the observed weighted response matrix $R$ are allowed to be any real value, i.e., $R\in\mathbb{R}^{N\times J}$.

Consider categorical data with $N$ subjects and $J$ items, where the $N$ subjects belong to $K$ disjoint extreme latent profiles (also known as latent classes). Throughout this paper, the number of classes $K$ is assumed to be a known integer. To describe the membership of each subject, we let $Z$ be a $N\times K$ matrix such that $Z(i,k)$ is 1 if subject $i$ belongs to the $k$-th extreme latent profile and $Z(i,k)$ is 0 otherwise. Call $Z$ the classification matrix in this paper. For each subject $i\in[N]$, it is assumed to belong to a single extreme latent profile. For convenience, define $\ell$ as a $N$-by-$1$ vector whose $i$-th entry $\ell(i)$ is $k$ if the $i$-th subject belongs to the $k$-th extreme latent profile for $i\in[N]$. Thus for subject $i\in[N]$, we have $Z(i,\ell(i))=1$ and the other $(K-1)$ entries of the $K\times 1$ classification vector $Z(i,:)$ is 0.

Introduce the $J\times K$ item parameter matrix $\Theta\in\mathbb{R}^{J\times K}$. For $k\in[K]$, our weighted latent class model (WLCM) assumes that $\Theta(j,k)$ collects the conditional-response expectation for the response of the $i$-th subject to the $j$-th item under arbitrary distribution $\mathcal{F}$ provided that subject $i$ belongs to the $k$-th extreme latent profile. Specifically, for $i\in[N], j\in[J]$, given the classification vector $Z(i,:)$ of subject $i$ and the item parameter matrix $\Theta$, our WLCM assumes that for arbitrary distribution $\mathcal{F}$, the conditional response expectation of the $i$-th subject to the $j$-th item is
\begin{align}\label{Rij}
\mathbb{E}(R(i,j)|Z(i,:),\Theta)=\sum_{k=1}^{K}Z(i,k)\Theta(j,k)\equiv\Theta(j,\ell(i)).
\end{align}
Based on Equation (\ref{Rij}), our WLCM can be simplified as follows.
\begin{defin}\label{WLCM}
Let $R\in\mathbb{R}^{N\times J}$ denote the observed weighted response matrix. Let $Z\in\mathbb{M}_{N,K}$ be the classification matrix and $\Theta\in\mathbb{R}^{J\times K}$ be the item parameter matrix. For $i\in[N], j\in[J]$, our weighted latent class model (WLCM) assumes that for an arbitrary distribution $\mathcal{F}$, $R(i,j)$ are independent random variables generated from the distribution $\mathcal{F}$ and the expectation of $R(i,j)$ under the distribution $\mathcal{F}$ should satisfy the following formula:
\begin{align}\label{RFR0}
\mathbb{E}(R(i,j))=R_{0}(i,j), \mathrm{where~}R_{0}:=Z\Theta'.
\end{align}
\end{defin}
Definition \ref{WLCM} says that WLCM is determined by the classification matrix $Z$, the item parameter matrix $\Theta$, and the distribution $\mathcal{F}$. For brevity, we denote WLCM by $WLCM(Z,\Theta,\mathcal{F})$. Under WLCM, $\mathcal{F}$ is allowed to be any distribution as long as Equation (\ref{RFR0}) is satisfied under $\mathcal{F}$, i.e., WLCM only requires the expectation (i.e., population) response matrix $R_{0}$ of the observed weighted response matrix $R$ to be $Z\Theta'$ under any distribution $\mathcal{F}$.
\begin{rem}
For the case that $\mathcal{F}$ is Bernoulli distribution, all elements of $\Theta$ range in $[0,1]$, $R$ only contains binary responses (i.e., $R(i,j)\in\{0,1\}$ for $i\in[N],j\in[J]$ when  $\mathcal{F}$ is Bernoulli distribution), and Equation (\ref{Rij}) becomes $\mathbb{P}(R(i,j)=1|Z(i,:),\Theta)=\Theta(j,\ell(i))$. For this case, WLCM reduces to the LCM model, i.e., LCM is a special case of our WLCM.
\end{rem}
\begin{rem}
It should be noted that Equation (\ref{RFR0}) does not hold for all distributions. For instance, we cannot set $\mathcal{F}$ as a t-distribution because the expectation of a t-distribution is always 0, which cannot capture the latent structure required by the WLCM model; $\mathcal{F}$ cannot be a Cauchy distribution whose expectation even does not exist; $\mathcal{F}$ cannot be a Chi-square distribution because the expectation of a Chi-square distribution is its degrees of freedom, which is a fixed positive integer and cannot capture the latent structure required by WLCM. We will provide some examples to demonstrate that Equation (\ref{RFR0}) can be satisfied for different distribution $\mathcal{F}$. For details, please refer to Examples \ref{Bernoulli}-\ref{Signed}.
\end{rem}
\begin{rem}
It should be also noted that the ranges of the observed weighted response matrix $R$ and the item parameter matrix $\Theta$ depend on distribution $\mathcal{F}$. For example, when $\mathcal{F}$ is Bernoulli distribution, $R\in\{0,1\}^{N\times J}$ and $\Theta\in[0,1]^{J\times K}$; when $\mathcal{F}$ is Poisson distribution, $R\in\mathbb{N}^{N\times J}$ and $\Theta\in[0,+\infty)^{J\times K}$; If we let $\mathcal{F}$ be Normal distribution, $R\in\mathbb{R}^{N\times J}$ and $\Theta\in(-\infty,+\infty)^{J\times K}$. For details, please refer to Examples \ref{Bernoulli}-\ref{Signed}.
\end{rem}
The following proposition shows that the WLCM model is identifiable as long as there exists at least one subject for every extreme latent profile.
\begin{prop}\label{idWLCM}
(Identifiability). Consider a WLCM model as in Equation (\ref{RFR0}), when each extreme latent profile has at least one subject, the model is identifiable: for any other valid parameter set $(\tilde{Z},\tilde{\Theta})$, if $\tilde{Z}\tilde{\Theta}'=Z\Theta'$, then $(Z,\Theta)$ and $(\tilde{Z},\tilde{\Theta})$ are identical up to a permutation of the $K$ extreme latent profiles.
\end{prop}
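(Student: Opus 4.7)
The plan is to read the factorization $R_{0}=Z\Theta'$ row by row. Because $Z(i,:)$ is a standard basis row $e_{\ell(i)}'$, each row of $R_{0}$ equals a single column of $\Theta$, namely $R_{0}(i,:)=\Theta(:,\ell(i))'$. Consequently, two subjects in the same latent class produce identical rows of $R_{0}$, and, provided the columns of $\Theta$ are pairwise distinct (which is the natural reading of ``$K$ distinct extreme latent profiles'' and is the hidden structural ingredient I will rely on), the partition of $[N]$ induced by equality of rows of $R_{0}$ coincides with the latent partition up to relabeling. The same row-wise reading applies to any alternative factorization $(\tilde Z,\tilde\Theta)$ with $\tilde Z\tilde\Theta'=R_{0}$, so both parametrizations must induce the same partition of subjects.

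Next I promote this partition-level match to a genuine permutation of profile labels. Using the hypothesis that every class contains at least one subject, I pick a representative $s(k)\in[N]$ with $\ell(s(k))=k$ for each $k\in[K]$, and define $\pi:[K]\to[K]$ by $\pi(k):=\tilde\ell(s(k))$. Comparing row $s(k)$ of $R_{0}$ under both factorizations yields $\Theta(:,k)=\tilde\Theta(:,\pi(k))$, and the column-distinctness of $\Theta$ immediately forces $\pi$ to be injective, hence a permutation of $[K]$ by cardinality. Letting $P$ be its permutation matrix, I get $\tilde\Theta=\Theta P$ by construction; applying the row identity to arbitrary $i$ then gives $\tilde\ell(i)=\pi(\ell(i))$ and therefore $\tilde Z=ZP$. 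This is exactly ``identical up to a permutation of the $K$ profiles,'' and a sanity check is to left-invert $Z$ (which has full column rank $K$ because each of the distinct rows $e_1',\dots,e_K'$ appears at least once, making $Z'Z$ invertible) in $ZP\tilde\Theta'=Z\Theta'$ to recover $\tilde\Theta=\Theta P$ independently.

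The main obstacle is not algebraic but conceptual: the whole argument collapses if two columns of $\Theta$ coincide, since then subjects in the corresponding two classes produce the same row in $R_{0}$ and can be relabeled between those classes arbitrarily without changing $R_{0}$, so identifiability genuinely fails in a way that no permutation can absorb. I will therefore flag distinctness of the columns of $\Theta$ as implicit in the notion of ``extreme latent profile,'' after which all remaining work is the bookkeeping above, supported by the rank-$K$ property of $Z$ handed to us by the nonemptyness assumption; no nontrivial computation beyond this is required.
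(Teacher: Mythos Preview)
Your argument is correct and in fact takes a more elementary route than the paper's. The paper proves identifiability by going through the SVD of $R_{0}$: it invokes Lemma~\ref{UVWLCM} to write $U=ZX$ with $X=\Theta'V\Sigma^{-1}$ (and similarly $U=\tilde Z\tilde X$), then argues row by row that $U(i,:)=X(\ell(i),:)=\tilde X(\tilde\ell(i),:)$, so that $\ell(i)=\ell(\bar i)$ forces $\tilde\ell(i)=\tilde\ell(\bar i)$; from $\tilde Z=Z\mathcal{P}$ it then left-inverts $Z$ via $(Z'Z)^{-1}$ exactly as you do to recover $\tilde\Theta=\Theta\mathcal{P}$. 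You bypass the SVD entirely and read the same row identity directly off $R_{0}=Z\Theta'$, namely $R_{0}(i,:)=\Theta(:,\ell(i))'$, which is simpler and requires no spectral machinery. Both proofs hinge on the same hidden ingredient --- that the $K$ ``profile rows'' are pairwise distinct (rows of $X$ in the paper, columns of $\Theta$ for you); you flag this explicitly, whereas the paper leaves the corresponding step (``$\tilde X(\tilde\ell(i),:)=\tilde X(\tilde\ell(\bar i),:)\Rightarrow\tilde\ell(i)=\tilde\ell(\bar i)$'') unjustified in its proof of the proposition, relying implicitly on the later standing assumption $\mathrm{rank}(\Theta)=K$. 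Your approach buys transparency and minimal prerequisites; the paper's buys a direct link between identifiability and the Ideal SCK algorithm that immediately follows.
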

All proofs of theoretical results developed in this paper are given in the Appendix. The condition that each extreme latent profile must contain at least one subject means that each extreme latent profile cannot be an empty set and we have $\mathrm{rank}(Z)=K$.
\begin{rem}
Note that $Z$ and $\tilde{Z}$ are the same up to a permutation of the $K$ latent classes in Proposition \ref{idWLCM}. A permutation is acceptable since the equivalence of $Z$ and $\tilde{Z}$ should not rely on how we label each of the $K$ extreme latent profiles. A similar argument holds for the identity of $\Theta$ and $\tilde{\Theta}$.
\end{rem}
The observed weighted response matrix $R$ along with the ground-truth classification matrix $Z$ and the item parameter matrix $\Theta$ can be generated using our WLCM model as follows: let $R(i,j)$ be a random variable generated by distribution $\mathcal{F}$ with expected value $R_{0}(i,j)$ for $i\in[N],j\in[J]$, where $R_{0}=Z\Theta'$ satisfies the latent structure required by WLCM. In latent class analysis, given the observed weighted response matrix $R$ generated from $WLCM(Z,\Theta,\mathcal{F})$, our goal is to infer the classification matrix $Z$ and the item parameter matrix $\Theta$. Proposition \ref{idWLCM} ensures that the model parameters $Z$ and $\Theta$ can be reliably inferred from the observed weighted response matrix $R$. In the following two sections, we will develop a spectral algorithm to fit WLCM and show that this algorithm yields consistent estimation.
\section{A spectral method for parameters estimation}\label{sec3}
We have presented our model, WLCM, and demonstrated its superiority over the classical latent class model. In addition to providing a more general model for latent class analysis, we are also interested in estimating the model parameters. In this section, we focus on the parameter estimation problem within the WLCM framework by developing an efficient and easy-to-implement spectral method.

To provide insight into developing an algorithm for the WLCM model, we first consider an oracle case where we observe the expectation response matrix $R_{0}$ given in Equation (\ref{RFR0}). We would like to estimate $Z$ and $\Theta$ from $R_{0}$. Recall that  the item parameter matrix $\Theta$ is a $J$-by-$K$ matrix , here we let $\mathrm{rank}(\Theta)=K_{0}$, where $K_{0}$ is a positive integer and it is no larger than $K$. As $R_{0}=Z\Theta', \mathrm{rank}(Z)=K$, and $\mathrm{rank}(\Theta)=K_{0}\leq K$, we see that $R_{0}$ is a rank-$K_{0}$ matrix. As the number of extreme latent profiles $K$ is usually far smaller than the number of subjects $N$ and the number of items $J$, the $N$-by-$J$ population response matrix $R_{0}$ enjoys a low-dimensional structure. Next, we will demonstrate that we can greatly benefit from the low-dimensional structure of $R_{0}$ when we aim to develop a method to infer model parameters under the WLCM model.

Let $R_{0}=U\Sigma V'$ be the compact singular value decomposition (SVD) of $R_{0}$ such that $\Sigma$ is a $K_{0}\times K_{0}$ diagonal matrix collecting the $K_{0}$ nonzero singular values of $R_{0}$. Write $\Sigma=\mathrm{diag}(\sigma_{1}(R_{0}), \sigma_{2}(R_{0}),\ldots, \sigma_{K_{0}}(R_{0}))$. The $N\times K_{0}$ matrix $U$ collects the corresponding left singular vectors and it satisfies $U'U=I_{K_{0}\times K_{0}}$. Similarly, the $J\times K_{0}$ matrix $V$ collects the corresponding right singular vectors and it satisfies $V'V=I_{K_{0}\times K_{0}}$. For $k\in[K]$, let $N_{k}$ be the number of subjects that belong to the $k$-th extreme latent profile, i.e., $N_{k}=\sum_{i=1}^{N}Z(i,k)$. The ensuing lemma constitutes the foundation of our estimation method.
\begin{lem}\label{UVWLCM}
Under $WLCM(Z,\Theta,\mathcal{F})$, let $R_{0}=U\Sigma V'$ be the compact SVD of $R_{0}$. The following statements are true.
\begin{itemize}
\item (1) The left singular vectors matrix $U$ can be written as
\begin{align}\label{UX}
U=ZX,
\end{align}
where $X$ is a $K\times K_{0}$ matrix.
\item (2) $U$ has $K$ distinct rows such that for any two distinct subjects $i$ and $\bar{i}$ that belong to the same extreme latent profile (i.e., $\ell(i)=\ell(\bar{i})$), we have $U(i,:)=U(\bar{i},:)$.
\item (3) $\Theta$ can be written as
\begin{align}\label{ThetaZ}
\Theta=V\Sigma U'Z(Z'Z)^{-1}.
\end{align}
\item (4) Furthermore, when $K_{0}=K$, for all $k\in[K], l\in[K]$, and $k\neq l$, we have
\begin{align}\label{Xd}
\|X(k,:)-X(l,:)\|_{F}=(N_{k}^{-1}+N_{l}^{-1})^{1/2}.
\end{align}
\end{itemize}
\end{lem}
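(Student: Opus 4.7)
The plan is to exploit the two representations $R_0 = Z\Theta'$ (from WLCM) and $R_0 = U\Sigma V'$ (from the compact SVD) in tandem, chaining together a handful of elementary linear-algebra manipulations. Because each extreme latent profile is nonempty, $Z$ has full column rank $K$ and $Z'Z$ equals the invertible diagonal matrix $D := \mathrm{diag}(N_1,\ldots,N_K)$; similarly $\Sigma$ is invertible on its $K_0$-dimensional block. These two invertibilities are what legitimize every step below.

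For part (1), I would equate the two forms of $R_0$ and right-multiply $Z\Theta' = U\Sigma V'$ by $V\Sigma^{-1}$; using $V'V = I_{K_0 \times K_0}$ this collapses to $U = Z(\Theta' V \Sigma^{-1})$, so the choice $X := \Theta' V \Sigma^{-1}$, a $K \times K_0$ matrix, proves (\ref{UX}). For part (3), I would transpose $Z\Theta' = U\Sigma V'$ to obtain $\Theta Z' = V\Sigma U'$ and right-multiply by $Z(Z'Z)^{-1}$ to read off (\ref{ThetaZ}) directly. Both steps are mechanical once $\Sigma$ and $Z'Z$ are known to be invertible.

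Part (2) follows partly from (1): since $Z(i,:)$ is a standard basis row vector with its single $1$ in position $\ell(i)$, the identity $U = ZX$ gives $U(i,:) = X(\ell(i),:)$, so subjects in the same latent profile share the same row of $U$. To establish pairwise distinctness of the $K$ rows of $X$, I would start from $U'U = I_{K_0 \times K_0}$, rewrite it as $X' D X = I_{K_0 \times K_0}$, i.e., $(D^{1/2}X)'(D^{1/2}X) = I_{K_0 \times K_0}$, and branch on whether $K_0 = K$. When $K_0 = K$, the matrix $D^{1/2} X$ is square with orthonormal columns, hence orthogonal, so $X X' = D^{-1}$; the diagonal entries are $N_k^{-1} > 0$ and the off-diagonal entries vanish, which already forces the rows to be distinct and simultaneously gives part (4) via
\begin{align*}
\|X(k,:) - X(l,:)\|_F^2 = (XX')(k,k) - 2(XX')(k,l) + (XX')(l,l) = N_k^{-1} + N_l^{-1},
\end{align*}
whose square root is (\ref{Xd}). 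When $K_0 < K$, only the distinctness claim in (2) remains, and I would derive it from the identity $X(k,:) - X(l,:) = (\Theta(:,k) - \Theta(:,l))' V \Sigma^{-1}$: the right side vanishes only if $\Theta(:,k) - \Theta(:,l)$ is orthogonal to the column space of $V$, which coincides with the row space of $R_0$ and therefore with the column space of $\Theta$ itself, forcing $\Theta(:,k) = \Theta(:,l)$; distinctness of the latent profiles then yields distinct rows.

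The main obstacle is the $K_0 < K$ sub-case of part (2): the clean orthogonality identity $XX' = D^{-1}$ that drives part (4) no longer holds when $D^{1/2} X$ is rectangular, so distinctness must be argued through the coincidence of the column spaces of $V$ and $\Theta$ rather than through orthonormality. Every other step reduces to routine algebra once the two decompositions of $R_0$ are written side by side.
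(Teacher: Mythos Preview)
Your proofs of parts (1), (2), and (3) match the paper's exactly. For part (4) you take a somewhat more elementary route than the paper. The paper introduces $\Delta=\mathrm{diag}(\sqrt{N_1},\ldots,\sqrt{N_K})$, writes $R_0=(Z\Delta^{-1})(\Delta\Theta')$, takes an auxiliary compact SVD $\Delta\Theta'=\tilde U\tilde\Sigma\tilde V'$, observes that $Z\Delta^{-1}$ has orthonormal columns so that $Z\Delta^{-1}\tilde U\tilde\Sigma\tilde V'$ is itself a compact SVD of $R_0$, and then identifies $X=\Delta^{-1}\tilde U$ to read off $XX'=\Delta^{-2}$. You bypass this auxiliary SVD entirely: starting from $U'U=I_{K\times K}$ and $U=ZX$ you get $(D^{1/2}X)'(D^{1/2}X)=I_{K\times K}$, note that a square matrix with orthonormal columns is orthogonal, and conclude $XX'=D^{-1}$ directly. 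Both land on the same identity, but your argument is shorter and also sidesteps the mild uniqueness issue in the paper's identification $U=Z\Delta^{-1}\tilde U$, $\Sigma=\tilde\Sigma$, $V=\tilde V$ (which tacitly assumes distinct singular values or at least compatible choices of singular vectors).

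One further remark: you go beyond the paper in part (2) by actually arguing that the $K$ rows of $X$ are pairwise \emph{distinct}. The paper's own proof of (2) only verifies the direction ``same profile $\Rightarrow$ same row of $U$'' and then simply asserts ``$U$ has $K$ distinct rows'' without justification. Your $K_0=K$ case (via $XX'=D^{-1}$ having positive diagonal and zero off-diagonal) is airtight; your $K_0<K$ case needs the tacit hypothesis that the columns of $\Theta$ are pairwise distinct, which the paper never states explicitly but is implicit in the identifiability discussion.
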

From now on, for the simplicity of our further analysis, we let $K_{0}\equiv K$. Hence, the last statement of Lemma \ref{UVWLCM} always holds.

The second statement of Lemma \ref{UVWLCM} indicates that the rows of $U$ corresponding to subjects assigned to the same extreme latent profile are identical. This circumstance implies that the application of a clustering algorithm to the rows of $U$ can yield an exact reconstruction of the classification matrix $Z$ after a permutation of the $K$ extreme latent profiles.

In this paper, we adopt the K-means clustering algorithm, an unsupervised learning technique that groups similar data points into $K$ clusters. This clustering technique is detailed as follows,
\begin{align}\label{Kmeans}
(\bar{\bar{Z}},\bar{\bar{X}})=\mathrm{arg~}\mathrm{min}_{\bar{Z}\in \mathbb{M}_{N,K}, \bar{X}\in \mathbb{R}^{K\times K}}\|\bar{Z}\bar{X}-\bar{U}\|^{2}_{F},
\end{align}
where $\bar{U}$ is any $N\times K$ matrix. For convenience, call Equation (\ref{Kmeans}) as ``Run K-means algorithm on all rows of $\bar{U}$ with $K$ clusters to obtain $\bar{\bar{Z}}$'' because we are interested in the classification matrix $\bar{\bar{Z}}$. Let $\bar{U}$ in Equation (\ref{Kmeans}) be $U$, the second statement of Lemma \ref{UVWLCM} guarantees that $\bar{\bar{Z}}=Z\mathcal{P}, \bar{\bar{X}}=\mathcal{P}'X$, where $\mathcal{P}$ is a $K\times K$ permutation matrix, i.e., running K-means algorithm on all rows of $U$ exactly recovers $Z$ up to a permutation of the $K$ extreme latent profiles.

After obtaining $Z$ from $U$, $\Theta$ can be recovered subsequently by Equation (\ref{ThetaZ}). The above analysis suggests the following algorithm, Ideal SCK, where SCK stands for Spectral Clustering with K-means. Ideal SCK returns a permutation of $(Z,\Theta)$, which also supports the identifiability of the proposed model as stated in Proposition \ref{idWLCM}.
\begin{algorithm}
\caption{\textbf{Ideal SCK}}
\label{alg:IdealSCK}
\begin{algorithmic}[1]
\Require The expectation response matrix $R_{0}$ and the number of extreme latent profiles $K$.
\Ensure A permutation of $Z$ and $\Theta$.
\State Obtain $U\Sigma V'$, the top $K$ SVD of $R_{0}$.
\State Run K-means algorithm on all rows of $U$ with $K$ clusters to obtain $Z\mathcal{P}$, a permutation of $Z$.
\State Equation (\ref{ThetaZ}) gives $V\Sigma U'Z\mathcal{P}((Z\mathcal{P})'Z\mathcal{P})^{-1}=\Theta\mathcal{P}$, a permutation of $\Theta$.
\end{algorithmic}
\end{algorithm}

For the real case, the weighted response matrix $R$ is observed rather than the expectation response matrix $R_{0}$. We now move from the ideal scenario to the real scenario, intending to estimate $Z$ and $\Theta$ when the observed weighted response matrix $R$ is a random matrix generated from an unknown distribution $\mathcal{F}$ satisfying Equation (\ref{RFR0}) with $K$ extreme latent profiles under the WLCM model. The expectation of $R$ is  $R_{0}$ according to Equation (\ref{RFR0}) under WLCM, so intuitively, the singular values and singular vectors of $R$ will be close to those of $R_{0}$. Set $\hat{R}=\hat{U}\hat{\Sigma}\hat{V}'$ as the top $K$ SVD of $R$, where $\hat{\Sigma}$ is a $K\times K$ diagonal matrix collecting the top $K$ singular values of $R$. Write $\hat{\Sigma}=\mathrm{diag}(\sigma_{1}(R),\sigma_{2}(R),\ldots,\sigma_{K}(R))$. As $\mathbb{E}(R)=R_{0}$ and the $N\times J$ matrix $R_{0}$ has $K$ non-zero singular values while the other $(\mathrm{min}(N,J)-K)$ singular values are zeros, we see that $\hat{R}$ should be a good approximation of $R_{0}$. Matrices $\hat{U}\in\mathbb{R}^{N\times K}, \hat{V}\in\mathbb{R}^{J\times K}$ collect the corresponding left and right singular vectors and satisfy $\hat{U}'\hat{U}=\hat{V}'\hat{V}=I_{K\times K}$. The above analysis implies that $\hat{U}$ should have roughly $K$ distinct rows because $\hat{U}$ is a slightly perturbed version of $U$. Therefore, to obtain a good estimation of the classification matrix $Z$, we should apply the K-means algorithm on all rows of $\hat{U}$ with $K$ clusters. Let $\hat{Z}$ be the estimated classification matrix returned by applying the K-means method on all rows of $\hat{U}$ with $K$ clusters. Then we are able to obtain a good estimation of $\Theta$ according to Equation (\ref{ThetaZ}) by setting $\hat{\Theta}=\hat{V}\hat{\Sigma}\hat{U}'\hat{Z}(\hat{Z}'\hat{Z})^{-1}$.
Algorithm \ref{alg:SCK}, referred to as SCK, is a natural extension of the Ideal SCK from the oracle case to the real case. Note that in our SCK algorithm, there are only two inputs: the observed weighted response matrix $R$ and the number of latent classes $K$, i.e., SCK does not require any tuning parameters.
\begin{algorithm}
\caption{\textbf{Spectral Clustering with K-means (SCK for short)}}
\label{alg:SCK}
\begin{algorithmic}[1]
\Require The observed weighted response matrix $R\in\mathbb{R}^{N\times J}$ and the number of extreme latent profiles $K$.
\Ensure $\hat{Z}$ and $\hat{\Theta}$.
\State Obtain $\hat{R}=\hat{U}\hat{\Sigma} \hat{V}'$, the top $K$ SVD of $R$.
\State Run K-means algorithm on all rows of $\hat{U}$ with $K$ clusters to obtain $\hat{Z}$.
\State Obtain an estimate of $\Theta$ by setting $\hat{\Theta}=\hat{R}'\hat{Z}(\hat{Z}'\hat{Z})^{-1}$.
\end{algorithmic}
\end{algorithm}

Here, we evaluate the computational cost of our SCK algorithm. The computational cost of the SVD step involved in the SCK approach is $O(\max(N^{2}, J^{2})K)$. For the K-means algorithm, its complexity is $O(NlK^{2})$ with $l$ being the number of K-means iterations. In all experimental studies considered in this paper, $l$ is set as 100 for the K-means algorithm. The complexity of the last step in SCK is $O(JNK)$. Since $K \ll \min(N, J)$ in this paper, as a consequence, the total time complexity of our SCK algorithm is $O(\max(N^{2}, J^{2})K)$.
\section{Theoretical properties}\label{sec4}
In this section, we present comprehensive theoretical properties of the SCK algorithm when the observed weighted response matrix $R$ is generated from the proposed model. Our objective is to demonstrate that the estimated classification matrix $\hat{Z}$ and the estimated item parameter matrix $\hat{\Theta}$ both concentrate around the true classification matrix $Z$ and the true item parameter matrix $\Theta$, respectively.

Let $\mathcal{T}=\{\mathcal{T}_{1}, \mathcal{T}_{2}, \ldots, \mathcal{T}_{K}\}$ be the collection of true partitions for all subjects, where $\mathcal{T}_{k}=\{i: Z(i,k)=1 \mathrm{~for~}i\in[N]\}$ for $k\in[K]$, i.e., $\mathcal{T}_{k}$ is the set of true partition of subjects into the $k$-th extreme latent profile. Similarly, let $\hat{\mathcal{T}}=\{\hat{\mathcal{T}}_{1}, \hat{\mathcal{T}}_{2}, \ldots, \hat{\mathcal{T}}_{K}\}$ represent the collection of estimated partitions for all subjects, where $\hat{\mathcal{T}}_{k}=\{i: \hat{Z}(i,k)=1 \mathrm{~for~}i\in[N]\}$ for $k\in[K]$. We use the measure defined in \citep{joseph2016impact} to quantify the closeness of the estimated partition $\hat{\mathcal{T}}$ and the ground truth
partition $\mathcal{T}$. Denote the \emph{Clustering error} associated with $\mathcal{T}$ and $\hat{\mathcal{T}}$ as
\begin{align}\label{clusteringerror}
\hat{f}=\mathrm{min}_{\pi\in S_{K}}\mathrm{max}_{k\in[K]}\frac{|\mathcal{T}_{k}\cap \mathcal{\hat{\mathcal{T}}}^{c}_{\pi(k)}|+|\mathcal{T}^{c}_{k}\cap \mathcal{\hat{\mathcal{T}}}_{\pi(k)}|}{N_{K}},
\end{align}
where $S_{K}$ represents the set of all permutations of $\{1,2,\ldots, K\}$, $\mathcal{\hat{\mathcal{T}}}^{c}_{\pi(k)}$ and $\mathcal{T}^{c}_{k}$ denote the complementary sets. As stated in the reference \citep{joseph2016impact}, $\hat{f}$ evaluates the maximum proportion of subjects in the symmetric difference of $\mathcal{T}_{k}$ and $\hat{\mathcal{T}}_{\pi(k)}$. Since the observed weighted response matrix $R$ is generated from WLCM with expectation $R_{0}$, and $\hat{f}$ measures the performance of the SCK algorithm, it is expected that SCK estimates $Z$ with small Clustering error $\hat{f}$.

For convenience, let $\rho=\mathrm{max}_{j\in[J],k\in[K]}|\Theta(j,k)|$ and call it the scaling parameter. Let $B=\frac{\Theta}{\rho}$, we have $\mathrm{max}_{j\in[J],k\in[K]}|B(j,k)|=1$ and $R_{0}=\rho ZB'$. Let $\tau=\mathrm{max}_{i\in[N],j\in[J]}|R(i,j)-R_{0}(i,j)|$ and $\gamma=\mathrm{max}_{i\in[N],j\in[J]}\mathrm{Var}(R(i,j))$ where $\mathrm{Var}(R(i,j))$ means the variance of $R(i,j)$. We require the following assumption to establish theoretical guarantees of consistency for our SCK method.
\begin{assum}\label{asump}
Assume	$\gamma\geq \frac{\tau^{2}\mathrm{log}(N+J)}{\mathrm{max}(N,J)}$.
\end{assum}
The following theorem presents our main result, which provides upper bounds for the error rates of our SCK algorithm under our WLCM model.
\begin{thm}\label{mainWLCM}
	Under $WLCM(Z,\Theta,\mathcal{F})$, if Assumption \ref{asump} is satisfied, with probability at least $1-o((N+J)^{-3})$,
\begin{align*}
&\hat{f}=O(\frac{\gamma K^{2}N_{\mathrm{max}}\mathrm{max}(N,J)\mathrm{log}(N+J)}{\rho^{2}N^{2}_{\mathrm{min}}J}) \mathrm{~and~}\frac{\|\hat{\Theta}-\Theta\mathcal{P}\|_{F}}{\|\Theta\|_{F}}=O(\frac{\sqrt{\gamma K\mathrm{max}(N,J)\mathrm{log}(N+J)}}{\rho\sqrt{N_{\mathrm{min}}J}}),
\end{align*}
where $N_{\mathrm{max}}=\mathrm{max}_{k\in[K]}\{N_{k}\}, N_{\mathrm{min}}=\mathrm{min}_{k\in[K]}\{N_{k}\}$, and $\mathcal{P}$ is a permutation matrix.
\end{thm}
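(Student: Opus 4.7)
The plan is to follow the three classical steps for analyzing spectral clustering algorithms: concentrate $R$ around its population version $R_0$, translate that into a subspace perturbation bound via Wedin's theorem, and convert the latter into a clustering-error bound through the row-separation of $U$ established in Lemma \ref{UVWLCM}(4); the item-parameter bound then follows by substituting the perturbation estimates into the explicit formula for $\hat{\Theta}$.

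First, since the entries $R(i,j)-R_0(i,j)$ are independent, mean-zero, uniformly bounded by $\tau$, and have variance at most $\gamma$, I would decompose $R-R_0=\sum_{i,j}(R(i,j)-R_0(i,j))e_ie_j'$ and apply a rectangular matrix Bernstein inequality. The variance proxy works out to $\gamma\max(N,J)$ and the uniform bound to $\tau$. Assumption \ref{asump} is exactly the condition needed for the Gaussian (variance) term to dominate the sub-exponential one, yielding
\[
\|R-R_0\|=O\bigl(\sqrt{\gamma\max(N,J)\log(N+J)}\bigr)
\]
with probability at least $1-o((N+J)^{-3})$.

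Next, I would lower-bound $\sigma_K(R_0)$ using $R_0=\rho ZB'$: the multiplicative inequality $\sigma_K(R_0)\ge\rho\,\sigma_K(Z)\,\sigma_K(B)$, combined with $\sigma_K(Z)=\sqrt{N_{\min}}$ and a bound of the form $\sigma_K(B)\gtrsim\sqrt{J/K}$, gives $\sigma_K(R_0)\gtrsim\rho\sqrt{N_{\min}J/K}$. Wedin's $\sin\Theta$ theorem then supplies an orthogonal matrix $\mathcal{O}\in\mathbb{R}^{K\times K}$ with $\|\hat{U}-U\mathcal{O}\|_F\le C\sqrt{K}\,\|R-R_0\|/\sigma_K(R_0)$. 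Since Lemma \ref{UVWLCM}(4) guarantees that rows of $U$ belonging to different latent classes are separated by at least $\sqrt{2/N_{\max}}$, a standard K-means perturbation argument (any mis-clustered subject contributes at least $1/(2N_{\max})$ to $\|\hat{U}-U\mathcal{O}\|_F^2$) bounds the number of mis-clustered subjects per class by $CN_{\max}\|\hat{U}-U\mathcal{O}\|_F^2$; dividing by $N_{\min}$ and chaining the estimates produces the stated rate for $\hat{f}$.

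For $\hat{\Theta}$, I would write $\hat{\Theta}-\Theta\mathcal{P}=\hat{R}'\hat{Z}(\hat{Z}'\hat{Z})^{-1}-R_0'Z\mathcal{P}(\mathcal{P}'Z'Z\mathcal{P})^{-1}$, add and subtract convenient intermediate terms, and bound each piece using $\|\hat{R}-R_0\|\le 2\|R-R_0\|$ (a consequence of $\hat{R}$ being the best rank-$K$ approximation of $R$ and $R_0$ having rank $K$), the clustering error bound just established, and $\|(Z'Z)^{-1}\|\le 1/N_{\min}$. Normalizing by $\|\Theta\|_F\asymp\rho\sqrt{JK}$ then yields the claimed relative Frobenius-error bound. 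The most delicate step is the singular-value lower bound $\sigma_K(R_0)\gtrsim\rho\sqrt{N_{\min}J/K}$: the hypothesis $\max_{j,k}|B(j,k)|=1$ only normalizes the scale, so some implicit condition on the conditioning of $B$ is needed (or the corresponding factor must be tracked explicitly through the bound); verifying that Assumption \ref{asump} is strong enough to absorb the sub-exponential term in Bernstein is the other place requiring careful bookkeeping, while the remaining steps are a direct assembly of standard perturbation tools.
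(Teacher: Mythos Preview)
Your proposal is correct and follows essentially the same route as the paper: matrix Bernstein for $\|R-R_0\|$, a Wedin/Davis--Kahan bound using $\sigma_K(R_0)\ge\rho\sqrt{N_{\min}}\,\sigma_K(B)$, the K-means mis-clustering argument driven by the row-separation in Lemma~\ref{UVWLCM}(4) (the paper invokes Lemma~2 of Joseph--Yu rather than stating the counting argument directly, but it is the same mechanism), and the add-and-subtract expansion for $\hat{\Theta}-\Theta\mathcal{P}$. The delicate point you flagged---why $\sigma_K(B)\gtrsim\sqrt{J}$---is handled in the paper exactly as you anticipated: an extra hypothesis is brought in, namely the Rudelson--Vershynin smallest-singular-value bound for random rectangular matrices, applied to $B$ to obtain $\sigma_K(B)\ge\sqrt{J}-\sqrt{K-1}$ with high probability.
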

Because our WLCM is distribution-free, Theorem \ref{mainWLCM} provides a general theoretical guarantee of the SCK algorithm when $R$ is generated from WLCM for any distribution $\mathcal{F}$ as long as Equation (\ref{RFR0}) is satisfied. We can simplify Theorem \ref{mainWLCM} by considering additional conditions:
\begin{cor}\label{AddConditions}
Under $WLCM(Z,\Theta,\mathcal{F})$, when Assumption \ref{asump} holds, if we make the additional assumption that $\frac{N_{\mathrm{max}}}{N_{\mathrm{min}}}=O(1)$ and $K=O(1)$, with probability at least $1-o((N+J)^{-3})$,
\begin{align*}
\hat{f}=O(\frac{\gamma \mathrm{max}(N,J)\mathrm{log}(N+J)}{\rho^{2}NJ}) \mathrm{~and~}\frac{\|\hat{\Theta}-\Theta\mathcal{P}\|_{F}}{\|\Theta\|_{F}}=O(\frac{\sqrt{\gamma\mathrm{max}(N,J)\mathrm{log}(N+J)}}{\rho\sqrt{NJ}}).
\end{align*}
\end{cor}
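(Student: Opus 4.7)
The plan is to derive Corollary \ref{AddConditions} directly from Theorem \ref{mainWLCM} by substituting the two additional assumptions into the bounds already established there. Since Theorem \ref{mainWLCM} is treated as given, the task reduces to tracking how the quantities $K$, $N_{\min}$, and $N_{\max}$ simplify under the extra hypotheses.

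First, I would translate the two additional assumptions into orders of magnitude for $N_{\min}$ and $N_{\max}$. Because every subject belongs to exactly one of the $K$ extreme latent profiles, one has $\sum_{k=1}^{K} N_k = N$, which forces $N_{\max} \geq N/K$. Combined with $N_{\max}/N_{\min} = O(1)$ and $K = O(1)$, this yields
\begin{align*}
N_{\min} \;\asymp\; N_{\max} \;\asymp\; N.
\end{align*}
In particular $N_{\max}/N_{\min}^2 = O(1/N)$ and $1/\sqrt{N_{\min}} = O(1/\sqrt{N})$.

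Next I would substitute these estimates into the two rates from Theorem \ref{mainWLCM}. For the clustering error, $K^2 = O(1)$ and $N_{\max}/N_{\min}^2 = O(1/N)$ turn
\begin{align*}
\frac{\gamma K^{2} N_{\max}\,\mathrm{max}(N,J)\,\mathrm{log}(N+J)}{\rho^{2} N_{\min}^{2}\, J}
\end{align*}
into the desired $O\!\left(\frac{\gamma\,\mathrm{max}(N,J)\,\mathrm{log}(N+J)}{\rho^{2} N J}\right)$. For the relative error on $\Theta$, $K = O(1)$ and $N_{\min} \asymp N$ reduce
\begin{align*}
\frac{\sqrt{\gamma K\,\mathrm{max}(N,J)\,\mathrm{log}(N+J)}}{\rho\sqrt{N_{\min} J}}
\end{align*}
to $O\!\left(\frac{\sqrt{\gamma\,\mathrm{max}(N,J)\,\mathrm{log}(N+J)}}{\rho\sqrt{N J}}\right)$, as claimed. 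The high-probability statement $1 - o((N+J)^{-3})$ is inherited verbatim from Theorem \ref{mainWLCM} since the additional hypotheses do not affect the probabilistic part of the argument.

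There is no genuine obstacle here: the corollary is purely an algebraic simplification of the theorem, and the only step requiring a brief justification is the order estimate $N_{\min} \asymp N_{\max} \asymp N$, which follows from the conservation identity $\sum_k N_k = N$ together with the boundedness of $K$ and the balance ratio $N_{\max}/N_{\min}$. The remainder is substitution.
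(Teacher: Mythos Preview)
Your proposal is correct and matches the paper's approach: the paper states the corollary immediately after Theorem \ref{mainWLCM} as a direct simplification without giving a separate proof, and your argument fills in exactly the routine substitution (using $N_{\min}\asymp N_{\max}\asymp N$ from $\sum_k N_k=N$, $K=O(1)$, and $N_{\max}/N_{\min}=O(1)$) that the paper leaves implicit.
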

For the case $J=\beta N$ for any positive constant $\beta$, Corollary \ref{AddConditions} implies that the SCK algorithm yields consistent estimation under WLCM since the error bounds in Corollary \ref{AddConditions} decrease to zero as $N\rightarrow+\infty$ when $\rho$ and distribution $\mathcal{F}$ are fixed.

Recall that $R$ is an observed weighted response matrix generated from a distribution $\mathcal{F}$ with expectation $R_{0}=Z\Theta'=\rho ZB'$ under the WLCM model and $\gamma$ is the maximum variance of $R(i,j)$ and it is closely related to the distribution $\mathcal{F}$, the ranges of $R,\rho,B$, and $\gamma$ can vary depending on the specific distribution $\mathcal{F}$. The following examples provide the ranges of $R,\rho, B$, the upper bound of $\gamma$, and the explicit forms of error bounds in Theorem \ref{mainWLCM} for different distribution $\mathcal{F}$ under our WLCM model. Meanwhile, based on the explicitly derived error bounds for different distribution $\mathcal{F}$, we also investigate how the scaling parameter $\rho$ influences the performance of the SCK algorithm in these examples. For all pairs $(i,j)$ with $i\in[N], j\in[J]$, we consider the following distributions when $\mathbb{E}(R)=R_{0}$ in Equation (\ref{RFR0}) holds.
\begin{Ex}\label{Bernoulli}
Let $\mathcal{F}$ be \textbf{Bernoulli distribution} such that $R(i,j)\sim \mathrm{Bernoulli}(R_{0}(i,j))$, where $R_{0}(i,j)$ is the Bernoulli probability, i.e., $\mathbb{E}(R(i,j))=R_{0}(i,j)$. For this case, our WLCM degenerates to the LCM model. According to the properties of the Bernoulli distribution, we have the following conclusions.
\begin{itemize}
\item $R(i,j)\in\{0,1\}$, i.e., $R(i,j)$ only takes two values 0 and 1.
\item $B(i,j)\in[0,1]$ and $\rho\in(0,1]$ because $R_{0}(i,j)$ is a probability located in $[0,1]$ and $\mathrm{max}_{i\in[N],j\in[J]}|B(i,j)|$ is assumed to be 1.
\item $\tau\leq1$ because $\tau=\mathrm{max}_{i\in[N],j\in[J]}|R(i,j)-R_{0}(i,j)|\leq1$.
\item $\gamma\leq\rho$ because $\gamma=\mathrm{max}_{i\in[N],j\in[J]}\mathrm{Var}(R(i,j))=\mathrm{max}_{i\in[N],j\in[J]}R_{0}(i,j)(1-R_{0}(i,j))\leq\mathrm{max}_{i\in[N],j\in[J]} R_{0}(i,j)=\mathrm{max}_{i\in[N],j\in[J]} \rho (ZB)(i,j)\leq\rho$.
\item Let $\tau$ be its upper bound 1 and $\gamma$ be its upper bound $\rho$, Assumption \ref{asump} becomes $\rho\geq\frac{\mathrm{log}(N+J)}{\mathrm{max}(N,J)}$, which means a sparsity requirement on $R$ because $\rho$ controls the probability of the numbers of ones in $R$ for this case.
\item Let $\gamma$ be its upper bound $\rho$ in Theorem \ref{mainWLCM}, we have
\begin{align*}
\hat{f}=O(\frac{K^{2}N_{\mathrm{max}}\mathrm{max}(N,J)\mathrm{log}(N+J)}{\rho N^{2}_{\mathrm{min}}J}) \mathrm{~and~}\frac{\|\hat{\Theta}-\Theta\mathcal{P}\|_{F}}{\|\Theta\|_{F}}=O(\frac{\sqrt{K\mathrm{max}(N,J)\mathrm{log}(N+J)}}{\sqrt{\rho N_{\mathrm{min}}J}}).
\end{align*}
We observe that increasing $\rho$ leads to a decrease in SCK's error rates when $\mathcal{F}$ is a Bernoulli distribution.
\end{itemize}
\end{Ex}
\begin{Ex}\label{Binomial}
Let $\mathcal{F}$ be \textbf{Binomial distribution} such that $R(i,j)\sim\mathrm{Binomial}(m,\frac{R_{0}(i,j)}{m})$ for any positive integer $m$, where $R(i,j)$ is a random variable that reflects the number of successes in a fixed number of independent trials $m$ with the same probability of success $\frac{R_{0}(i,j)}{m}$, i.e., $\mathbb{E}(R(i,j))=R_{0}(i,j)$. For Binomial distribution, we have $\mathbb{P}(R(i,j)=r)=\binom{m}{r}(\frac{R_{0}(i,j)}{m})^{r}(1-\frac{R_{0}(i,j)}{m})^{m-r}$ for $r=0,1,2,\ldots,m$, where $\binom{\bullet}{\bullet}$ is a binomial coefficient. By the property of the Binomial distribution, we have the following conclusions.
\begin{itemize}
  \item $R(i,j)\in\{0,1,2,\ldots,m\}$.
  \item $B(i,j)\in[0,1]$ and $\rho\in(0,m]$ because $\frac{R_{0}(i,j)}{m}$ is a probability that ranges in $[0,1]$.
  \item $\tau\leq m$ because $\tau=\mathrm{max}_{i\in[N],j\in[J]}|R(i,j)-R_{0}(i,j)|\leq m$.
  \item $\gamma\leq\rho$ because $\gamma=\mathrm{max}_{i\in[N],j\in[J]}\mathrm{Var}(R(i,j))=m\frac{R_{0}(i,j)}{m}(1-\frac{R_{0}(i,j)}{m})=R_{0}(i,j)(1-\frac{R_{0}(i,j)}{m})\leq\rho$.
  \item Let $\tau$ be its upper bound $m$ and $\gamma$ be its upper bound $\rho$, Assumption \ref{asump} becomes $\rho\geq\frac{m^{2}\mathrm{log}(N+J)}{\mathrm{max}(N,J)}$ which provides a lower bound requirement of the scaling parameter $\rho$.
  \item Let $\gamma$ be its upper bound $\rho$ in Theorem \ref{mainWLCM}, we obtain the exact forms of error bounds for SCK when $\mathcal{F}$ is a Binomial distribution, and we observe that increasing $\rho$ reduces SCK's error rates.
\end{itemize}
\end{Ex}
\begin{Ex}\label{Poisson}
Let $\mathcal{F}$ be \textbf{Poisson distribution} such that $R(i,j)\sim \mathrm{Poisson}(R_{0}(i,j))$, where $R_{0}(i,j)$ is the Poisson parameter, i.e., $\mathbb{E}(R(i,j))=R_{0}(i,j)$. By the properties of the Poisson distribution, the following conclusions can be obtained.
\begin{itemize}
  \item $R(i,j)\in\mathbb{N}$, i.e., $R(i,j)$ is an nonnegative integer.
  \item $B(i,j)]\in[0,1]$ and $\rho\in(0,+\infty)$ because Poisson distribution can take any positive value for its mean.
  \item $\tau$ is an unknown positive value because we cannot know the exact upper bound of $R(i,j)$ when $R$ is obtained from the Poisson distribution under the WLCM model.
  \item $\gamma\leq\rho$ because $\gamma=\mathrm{max}_{i\in[N],j\in[J]}\mathrm{Var}(R(i,j))=\mathrm{max}_{i\in[N],j\in[J]}R_{0}(i,j)\leq\rho$.
  \item Let $\gamma$ be its upper bound $\rho$, Assumption \ref{asump} becomes $\rho\geq\frac{\tau^{2}\mathrm{log}(N+j)}{\mathrm{max}(N,J)}$ which is a lower bound requirement of $\rho$.
\item Let $\gamma$ be its upper bound $\rho$ in Theorem \ref{mainWLCM} obtains the exact forms of error bounds for the SCK algorithm when $\mathcal{F}$ is a Poisson distribution. It is easy to observe that increasing $\rho$ leads to a decrease in SCK's error rates.
\end{itemize}
\end{Ex}
\begin{Ex}\label{Normal}
Let $\mathcal{F}$ be \textbf{Normal distribution} such that $R(i,j)\sim \mathrm{Normal}(R_{0}(i,j),\sigma^{2})$, where $R_{0}(i,j)$ is the mean ( i.e., $\mathbb{E}(R(i,j))=R_{0}(i,j)$) and $\sigma^{2}$ is the variance parameter for Normal distribution. For this case, we have
\begin{itemize}
\item $R(i,j)\in\mathbb{R}$, i.e., $R(i,j)$ is a real value.
\item $B(i,j)\in[-1,1]$ and $\rho\in(0,+\infty)$ because the mean of Normal distribution can take any value. Note that, unlike the cases when $\mathcal{F}$ is Bernoulli or Poisson, $B$ can have negative elements for the Normal distribution case.
\item Similar to Example \ref{Poisson}, $\tau$ is an unknown positive value.
\item $\gamma=\sigma^{2}$ because $\gamma=\mathrm{max}_{i\in[N],j\in[J]}\mathrm{Var}(R(i,j))=\mathrm{max}_{i\in[N],j\in[J]}\sigma^{2}=\sigma^{2}$ for Normal distribution.
\item Let $\gamma$ be its exact value $\sigma^{2}$, Assumption \ref{asump} becomes $\sigma^{2}\mathrm{max}(N,J)\geq\tau^{2}\mathrm{log}(N+J)$ which means that $\mathrm{max}(N,J)$ should be set larger than $\frac{\tau^{2}\mathrm{log}(N+J)}{\sigma^{2}}$ for our theoretical analysis.
\item Let $\gamma$ be its exact value $\sigma^{2}$ in Theorem \ref{mainWLCM} provides the exact forms of error bounds for SCK. We observe that increasing the scaling parameter $\rho$ (or decreasing the variance $\sigma^{2}$) reduces SCK's error rates.
\end{itemize}
\end{Ex}
\begin{Ex}\label{Exponential}
Let $\mathcal{F}$ be \textbf{Exponential distribution} such that $R(i,j)\sim \mathrm{Exponential}(\frac{1}{R_{0}(i,j)})$, where $\frac{1}{R_{0}(i,j)}$ is the Exponential parameter, i.e., $\mathbb{E}(R(i,j))=R_{0}(i,j)$. For this case, we have
\begin{itemize}
\item $R(i,j)\in\mathbb{R}_{+}$, i.e., $R(i,j)$ is a positive value.
\item $B(i,j)\in(0,1]$ and $\rho\in(0,+\infty)$ because the mean of Exponential distribution can be any positive value.
\item Similar to Example \ref{Poisson}, $\tau$ is an unknown positive value.
\item $\gamma\leq\rho^{2}$ because $\gamma=\mathrm{max}_{i\in[N],j\in[J]}\mathrm{Var}(R(i,j))=\mathrm{max}_{i\in[N],j\in[J]}R^{2}_{0}(i,j)\leq\rho^{2}$ for Exponential distribution.
\item Let $\gamma$ be its upper bound $\rho^{2}$, Assumption \ref{asump} becomes $\rho^{2}\geq\tau^{2}\mathrm{log}(N+J)/\mathrm{max}(N,J)$, a lower bound requirement of $\rho$.
\item Let $\gamma$ be its upper bound $\rho^{2}$ in Theorem \ref{mainWLCM}, the theoretical bounds demonstrate that $\rho$ vanishes, which indicates that increasing $\rho$ has no significant impact on the error rates of SCK.
\end{itemize}
\end{Ex}
\begin{Ex}\label{Uniform}
Let $\mathcal{F}$ be \textbf{Uniform distribution} such that $R(i,j)\sim\mathrm{Uniform}(0,2R_{0}(i,j))$, where $\mathbb{E}(R(i,j))=\frac{0+2R_{0}(i,j)}{2}=R_{0}(i,j)$ holds immediately. For this case, we have
\begin{itemize}
\item $R(i,j)\in(0,2\rho)$ because $2R_{0}(i,j)\leq2\rho$.
\item $B(i,j)\in(0,1]$ and $\rho\in(0,+\infty)$ because $\mathrm{Uniform}(0,2R_{0}(i,j))$ allows $2R_{0}(i,j)$ to be any positive value.
\item $\tau$ is an unknown positive value with an upper bound $2\rho$.
\item $\gamma\leq\frac{\rho^{2}}{3}$ because $\gamma=\mathrm{max}_{i\in[N],j\in[J]}\mathrm{Var}(R(i,j))=\mathrm{max}_{i\in[N],j\in[J]}\frac{(2R_{0}(i,j)-0)^{2}}{12}=\mathrm{max}_{i\in[N],j\in[J]}\frac{R^{2}_{0}(i,j)}{3}\leq\frac{\rho^{2}}{3}$ for Uniform distribution.
\item Let $\gamma$ be its upper bound $\frac{\rho^{2}}{3}$, Assumption \ref{asump} becomes $\rho^{2}\geq3\tau^{2}\mathrm{log}(N+J)/\mathrm{max}(N,J)$, a lower bound requirement of $\rho$.
\item Since $\rho$ disappears in the error bounds when we let $\gamma=\frac{\rho^{2}}{3}$ in Theorem \ref{mainWLCM}, increasing $\rho$ does not significantly influence SCK's error rates, a conclusion similar to Example \ref{Exponential}.
\end{itemize}
\end{Ex}
\begin{Ex}\label{Signed}
Our WLCM can also model \textbf{signed response matrix} by setting $\mathbb{P}(R(i,j)=1)=\frac{1+R_{0}(i,j)}{2}$ and $\mathbb{P}(R(i,j)=-1)=\frac{1-R_{0}(i,j)}{2}$, where $\mathbb{E}(R(i,j))=\frac{1+R_{0}(i,j)}{2}-\frac{1-R_{0}(i,j)}{2}=R_{0}(i,j)$ and Equation (\ref{RFR0}) holds surely. For the signed response matrix, we have
\begin{itemize}
\item $R(i,j)\in\{-1,1\}$, i.e., $R(i,j)$ only takes two values -1 and 1.
\item $B(i,j)\in[-1,1]$ and $\rho\in(0,1]$ because $\frac{1+R_{0}(i,j)}{2}$ and $\frac{1-R_{0}(i,j)}{2}$ are two probabilities which should range in $[0,1]$. Note that similar to Example \ref{Normal}, $B(i,j)$ can be negative for the signed response matrix.
\item $\tau\leq2$ because $R(i,j)\in\{-1,1\}$ and $R_{0}(i,j)\in[-1,1]$.
\item $\gamma\leq1$ because $\gamma=\mathrm{max}_{i\in[N],j\in[J]}\mathrm{Var}(R(i,j))=\mathrm{max}_{i\in[N],j\in[J]}(1-R^{2}_{0}(i,j))\leq1$.
\item When setting $\tau=2$ and $\gamma=1$, Assumption \ref{asump} turns to be $\mathrm{max}(N,J)\geq4\mathrm{log}(N+J)$.
\item Setting $\gamma$ as its upper bound $1$ in Theorem \ref{mainWLCM} gives that increasing $\rho$ reduces SCK's error rates.
\end{itemize}
\end{Ex}
\section{Simulation studies}\label{sec5}
In this section, we conduct extensive simulation experiments to evaluate the effectiveness of the proposed method and validate our theoretical results in Examples \ref{Bernoulli}-\ref{Signed}.
\subsection{Baseline method}
More than the SCK algorithm, here we briefly provide an alternative spectral method that can also be applied to fit our WLCM model.  Recall that $R_{0}=Z\Theta'$ under WLCM, it is easy to see that $R_{0}(i,:)=R_{0}(\bar{i},:)$ when two distinct subjects $i$ and $\bar{i}$ belong to the same extreme latent profile for $i,\bar{i}\in[N]$. Therefore, the population response matrix $R_{0}$ features K disparate rows, and running the K-means approach on all rows of $R_{0}$ with $K$ clusters can faithfully recover the classification matrix $Z$ in terms of a permutation of the $K$ extreme latent profiles. $R_{0}=Z\Theta'$ also gives that $\Theta=R'_{0}Z(Z'Z)^{-1}$, which suggests the following ideal algorithm called Ideal RMK.
\begin{algorithm}
\caption{\textbf{Ideal RMK}}
\label{alg:IdealRMK}
\begin{algorithmic}[1]
\Require $R_{0}, K$.
\Ensure A permutation of $Z$ and $\Theta$.
\State Run K-means algorithm on all rows of $R_{0}$ with $K$ clusters to obtain $Z\mathcal{P}$, a permutation of $Z$.
\State Compute $R'_{0}Z\mathcal{P}((Z\mathcal{P})'Z\mathcal{P})^{-1}=\Theta\mathcal{P}$, a permutation of $\Theta$.
\end{algorithmic}
\end{algorithm}

Algorithm \ref{alg:RMK} called RMK is a natural generalization of the Ideal RMK from the oracle case to the real case because $\mathbb{E}(R)=R_{0}$ under the WLCM model. Unlike the SCK method, the RMK method does not need to obtain the SVD of the observed weighted response matrix $R$.
\begin{algorithm}
\caption{\textbf{Response Matrix with K-means (RMK for short)}}
\label{alg:RMK}
\begin{algorithmic}[1]
\Require $R,K$.
\Ensure $\hat{Z},\hat{\Theta}$.
\State Run K-means algorithm on all rows of $R$ with $K$ clusters to obtain $\hat{Z}$.
\State Obtain an estimate of $\Theta$ by setting $\hat{\Theta}=R'\hat{Z}(\hat{Z}'\hat{Z})^{-1}$.
\end{algorithmic}
\end{algorithm}

The computational cost of the first step in RMK is $O(lNJK)$, where $l$ denotes the number of iterations for the K-means algorithm. The complexity of the second step in RMK is $O(JNK)$. Therefore, the overall computational cost of RMK is $O(lNJK)$. When $J=\beta N$ for a constant value $\beta\in(0,1]$, the complexity of RMK is $O(\beta lKN^{2})$, and it is larger than the SCK's complexity  $O(KN^{2})$ when $\beta l>1$. Therefore, SCK runs faster than RMK when $\beta l>1$, as confirmed by our numerical results in this section.
\subsection{Evaluation metric}
For the classification of subjects, when the true classification matrix $Z$ is known, to evaluate how good the quality of the partition of the subjects into extreme latent profiles, four metrics are considered including the Clustering error $\hat{f}$ computed by Equation (\ref{clusteringerror}). The other three popular evaluation criteria are Hamming error \citep{SCORE}, normalized mutual
information (NMI) \citep{strehl2002cluster,danon2005comparing,bagrow2008evaluating,luo2017community}, and adjusted rand index (ARI) \citep{luo2017community,hubert1985comparing,vinh2009information}.
\begin{itemize}
  \item Hamming error is defined as
\begin{align*}
\mathrm{Hamming~error}=N^{-1}\mathrm{min}_{\mathcal{P}\in\mathcal{P}_{K}}\|\hat{Z}-Z\mathcal{P}\|_{0},
\end{align*}
where $\mathcal{P}_{K}$ denotes the collection of all $K$-by-$K$ permutation matrices. Hamming error falls within the range $[0,1]$, and a smaller Hamming error indicates better classification performance.
  \item Let $C$ be a $K\times K$ confusion matrix such that $C(k,l)$ is the number of common subjects between $\mathcal{T}_{k}$ and $\hat{\mathcal{T}}_{l}$ for $k,l\in[K]$. NMI is defined as
\begin{align*}
\mathrm{NMI}=\frac{-2\sum_{k,l}C(k,l)\mathrm{log}(\frac{C(k,l)N}{C_{k.}C_{.l}})}{\sum_{k}C_{k.}\mathrm{log}(\frac{C_{k.}}{N})+\sum_{l}C_{.l}\mathrm{log}(\frac{C_{.l}}{N})},
\end{align*}
where $C_{k.}=\sum_{m=1}^{K}C(k,m)$ and $C_{.l}=\sum_{m=1}^{K}C(m,l)$. NMI ranges in $[0,1]$ and it is the larger the better.
  \item ARI is defined as
  \begin{align*}
\mathrm{ARI}=\frac{\sum_{k,l}\binom{C(k,l)}{2}-\frac{\sum_{k}\binom{C_{k.}}{2}\sum_{l}\binom{C_{.l}}{2}}{\binom{N}{2}}}{\frac{1}{2}[\sum_{k}\binom{C_{k.}}{2}+\sum_{l}\binom{C_{.l}}{2}]-\frac{\sum_{k}\binom{C_{k.}}{2}\sum_{l}\binom{C_{.l}}{2}}{\binom{N}{2}}},
  \end{align*}
where $\binom{.}{.}$ is a binomial coefficient. ARI falls within the range [-1,1] and it is the larger the better.
\end{itemize}
For the estimation of $\Theta$, we use the Relative $l_{1}$ error and the Relative $l_{2}$ error to evaluate the performance. The two criteria are defined as
\begin{align*}
\mathrm{Relative~}l_{1}\mathrm{~error}=\mathrm{min}_{\mathcal{P}\in\mathcal{P}_{K}}\frac{\|\hat{\Theta}-\Theta\mathcal{P}\|_{1}}{\|\Theta\|_{1}}\mathrm{~and~}\mathrm{Relative~}l_{2}\mathrm{~error}=\mathrm{min}_{\mathcal{P}\in\mathcal{P}_{K}}\frac{\|\hat{\Theta}-\Theta\mathcal{P}\|_{F}}{\|\Theta\|_{F}}.
\end{align*}
Both measures are the smaller the better.
\subsection{Synthetic weighted response matrices}\label{sec5Synthetic}
We conduct numerical studies to examine the accuracy and the efficiency of our SCK and RMK approaches by changing the scaling parameter $\rho$ and the number of subjects $N$. Unless specified, in all computer-generated weighted response matrices, we set $K=3, J=\frac{N}{5}$, and the $N\times K$ classification matrix $Z$ is generated such that each subject belongs to one of the $K$ extreme latent profiles with equal probability. For distributions that require $B$'s entries to be nonnegative, we let $B(j,k)=\mathrm{rand}(1)$ for $j\in[J],k\in[K]$, where $\mathrm{rand}(1)$ is a random value simulated from the uniform distribution on $[0,1]$. For Normal distribution and signed response matrix that allow $B$ to have negative entries, we let $B(j,k)=2\mathrm{rand}(1)-1$ for $j\in[J],k\in[K]$, i.e., $B(j,k)$ ranges in $[-1,1]$. Set $B_{\mathrm{max}}=\mathrm{max}_{j\in[J],k\in[K]}|B(j,k)|$. Because the generation process of $B$ makes $|B(j,k)|\in[0,1]$ but cannot guarantee that $B_{\mathrm{max}}=1$ which is required in the definition of $B$. Therefore, we update $B$ by $\frac{B}{B_{\mathrm{max}}}$. For the scaling parameter $\rho$ and the number of subjects $N$, they are set independently for each distribution. After setting all model parameters $(K,N,J,Z,B,\rho)$, we can generate the observed weighted response matrix $R$ from distribution $\mathcal{F}$ with expectation $R_{0}=Z\Theta'=\rho ZB'$ under our WLCM model. By applying the SCK method (and the RMK method) to $R$ with $K$ extreme latent profiles, we can compute the evaluation metrics of SCK (and RMK). In every simulation scenario, we generate 50 independent replicates and report the mean of Clustering error (as well as Hamming error, NMI, ARI, Relative $l_{1}$ error, Relative $l_{2}$ error, and running time) computed from the 50 repetitions for each method. %All numerical results in this paper are reported using MATLAB R2021b on a standard personal computer (Thinkpad X1 Carbon Gen 8).
\subsubsection{Bernoulli distribution}
When $R(i,j)\sim \mathrm{Bernoulli}(R_{0}(i,j))$ for $i\in[N], j\in[J]$, we consider the following two simulations.

\textbf{Simulation 1(a): changing $\rho$.} Set $N=500$. For the Bernoulli distribution, the scaling parameter $\rho$ should be set within the range $(0,1]$ according to Example \ref{Bernoulli}. Here, for simulation studies, we let $\rho$ range in $\{0.1,0.2,0.3,\ldots,1\}$.

\textbf{Simulation 1(b): changing $N$.} Let $\rho=0.1$ and $N$ range in $\{1000,2000,\ldots,5000\}$.

The results are presented in Figure \ref{S1}. We observe that SCK outperforms RMK because SCK returns more accurate estimations of $(Z,\Theta)$ and SCK runs faster than RMK across all settings. Both methods achieve better performances as $\rho$ increases, which conforms to our analysis in Example \ref{Bernoulli}. Additionally, both algorithms enjoy better performances when the number of subjects $N$ increases, as predicted by our analysis following Corollary \ref{AddConditions}.

\begin{figure}
\centering
\subfigure[Simulation 1(a)]{\includegraphics[width=0.24\textwidth]{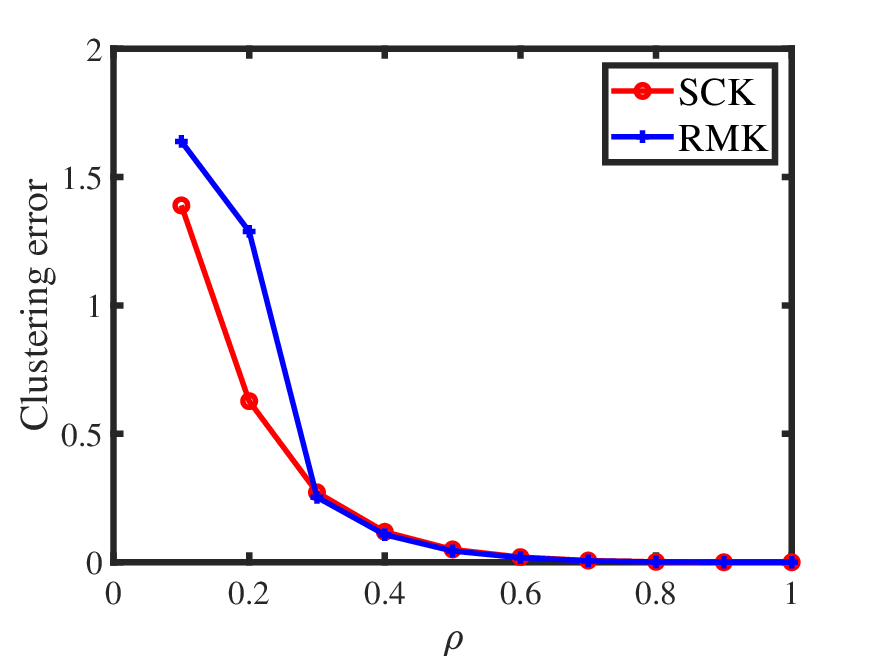}}
\subfigure[Simulation 1(a)]{\includegraphics[width=0.24\textwidth]{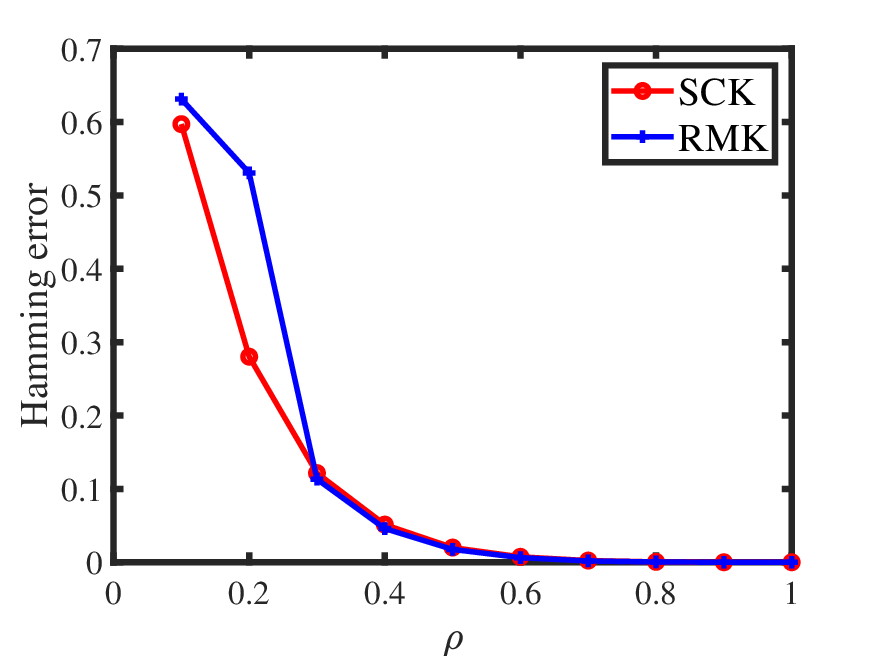}}
\subfigure[Simulation 1(a)]{\includegraphics[width=0.24\textwidth]{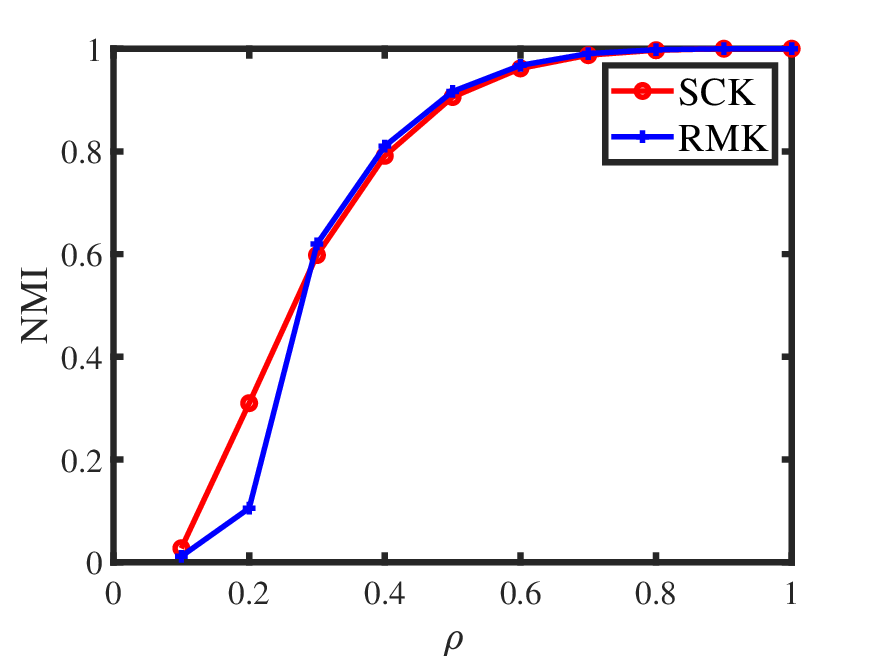}}
\subfigure[Simulation 1(a)]{\includegraphics[width=0.24\textwidth]{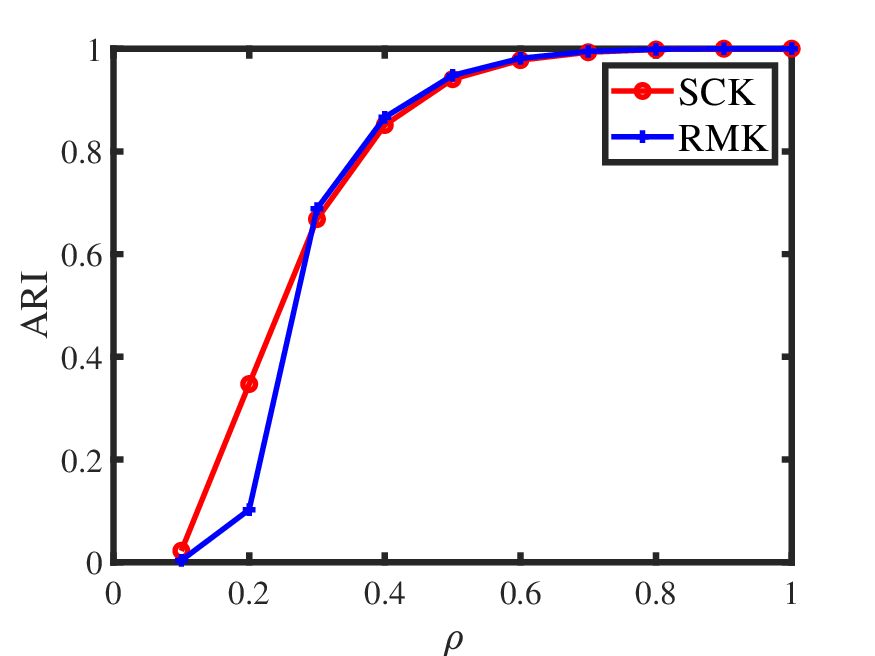}}
\subfigure[Simulation 1(a)]{\includegraphics[width=0.24\textwidth]{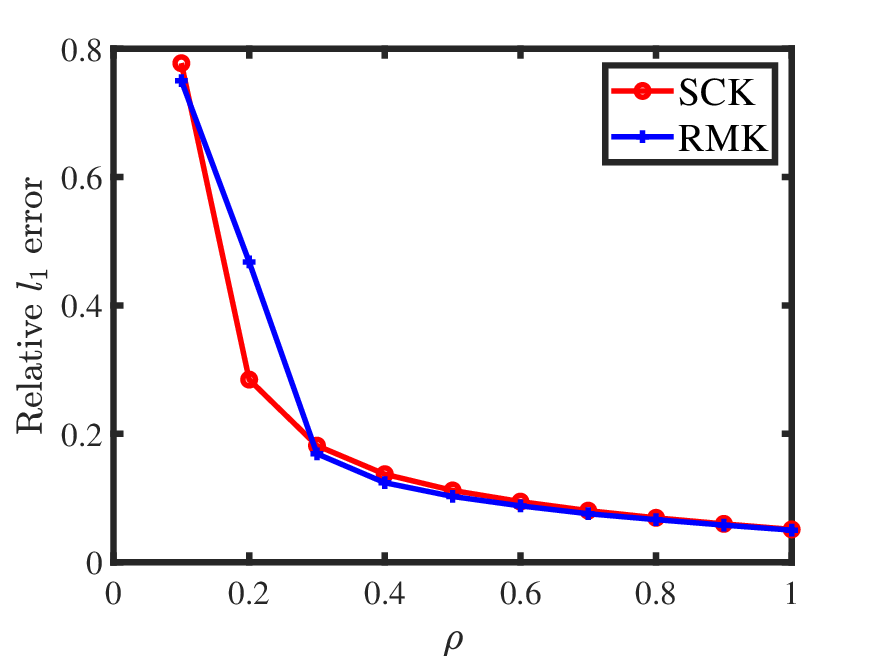}}
\subfigure[Simulation 1(a)]{\includegraphics[width=0.24\textwidth]{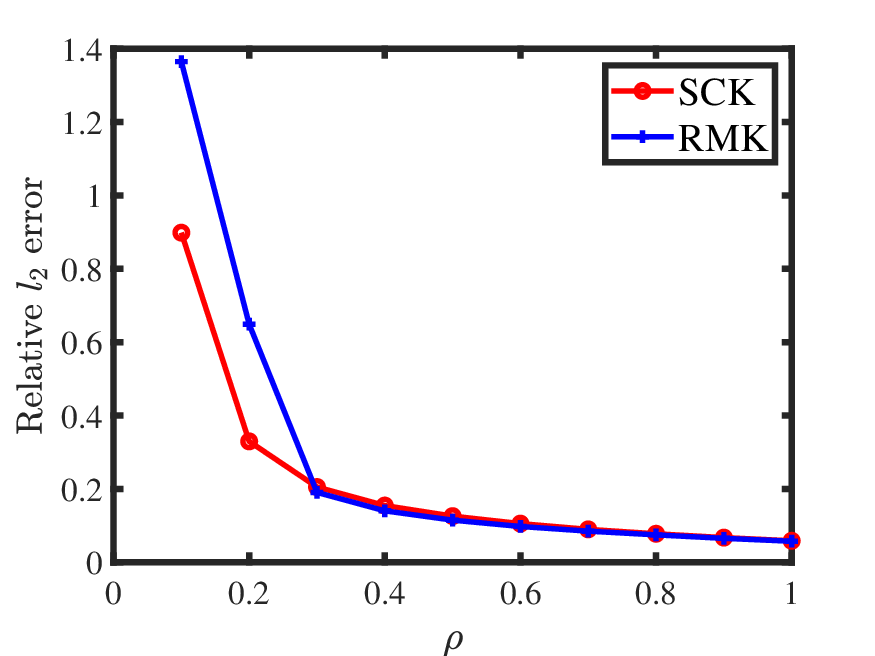}}
\subfigure[Simulation 1(a)]{\includegraphics[width=0.24\textwidth]{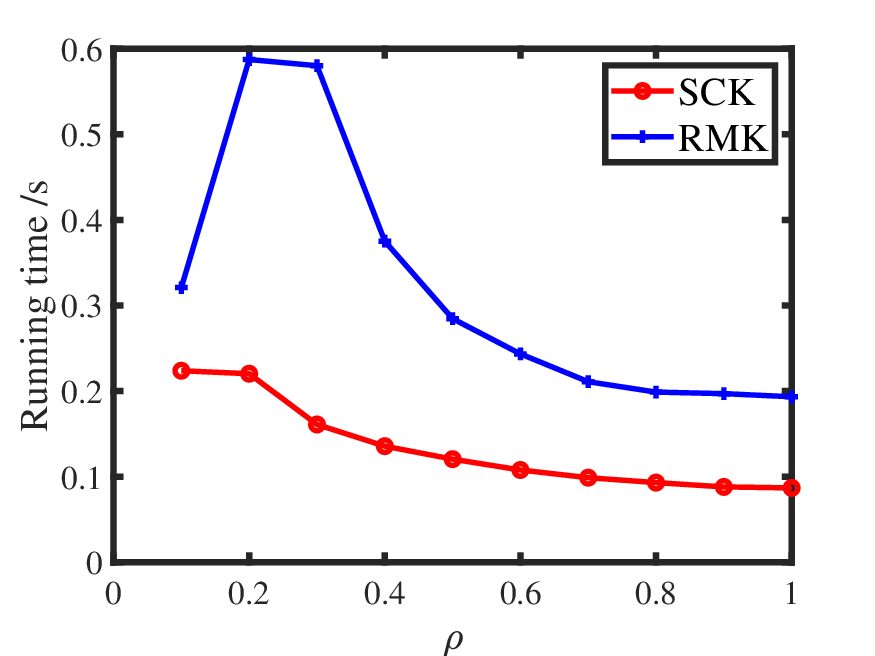}}
\subfigure[Simulation 1(b)]{\includegraphics[width=0.24\textwidth]{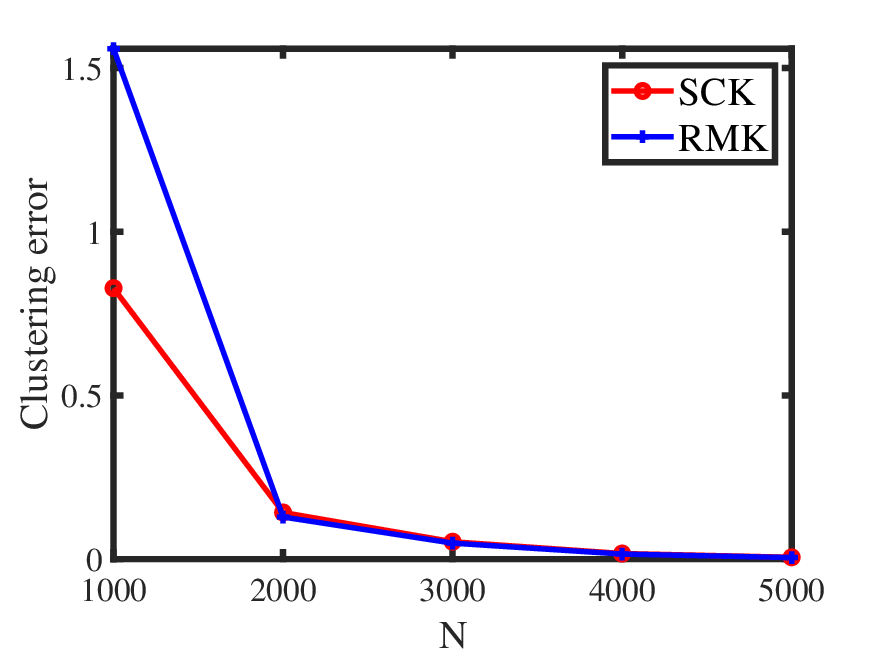}}
\subfigure[Simulation 1(b)]{\includegraphics[width=0.24\textwidth]{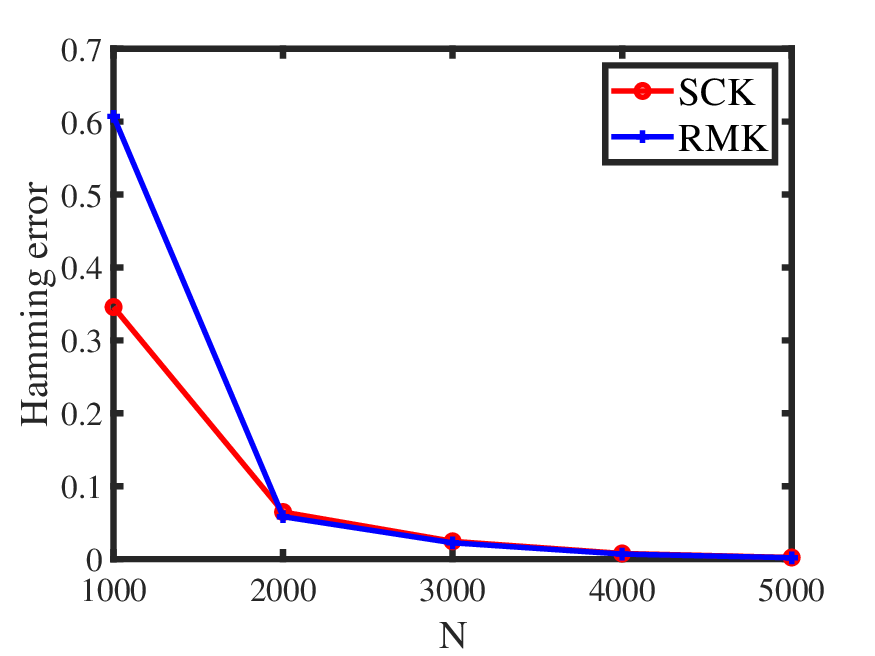}}
\subfigure[Simulation 1(b)]{\includegraphics[width=0.24\textwidth]{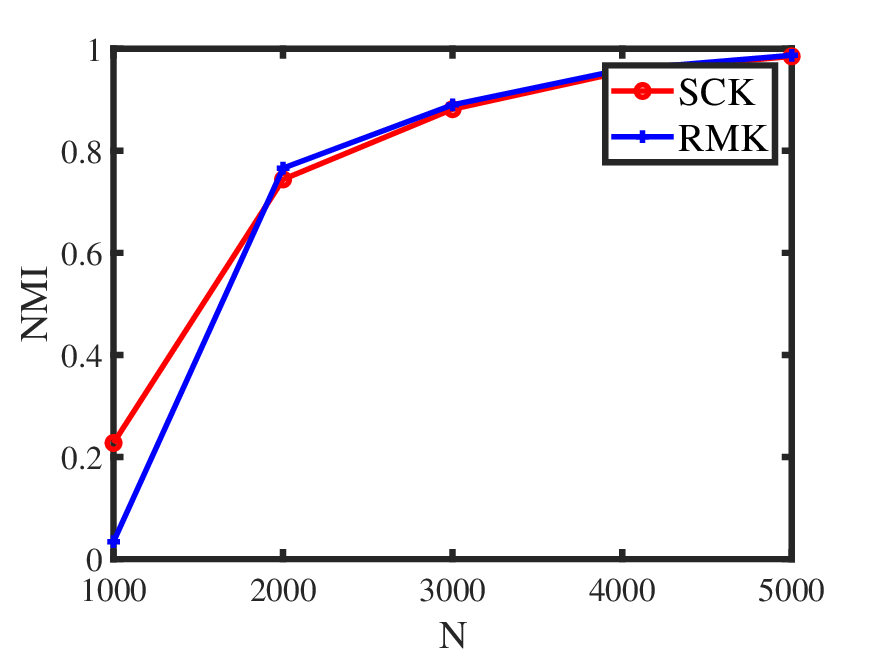}}
\subfigure[Simulation 1(b)]{\includegraphics[width=0.24\textwidth]{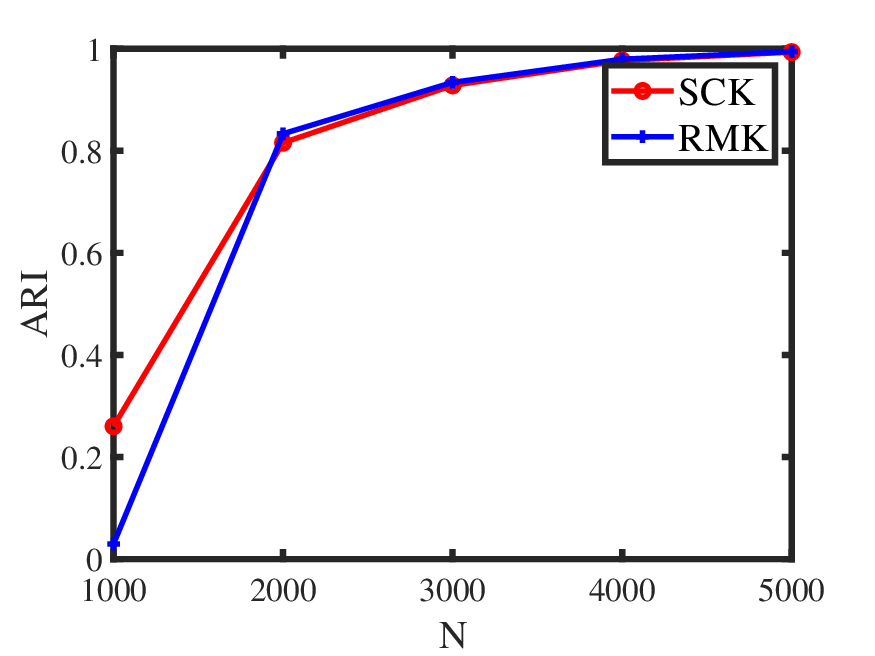}}
\subfigure[Simulation 1(b)]{\includegraphics[width=0.24\textwidth]{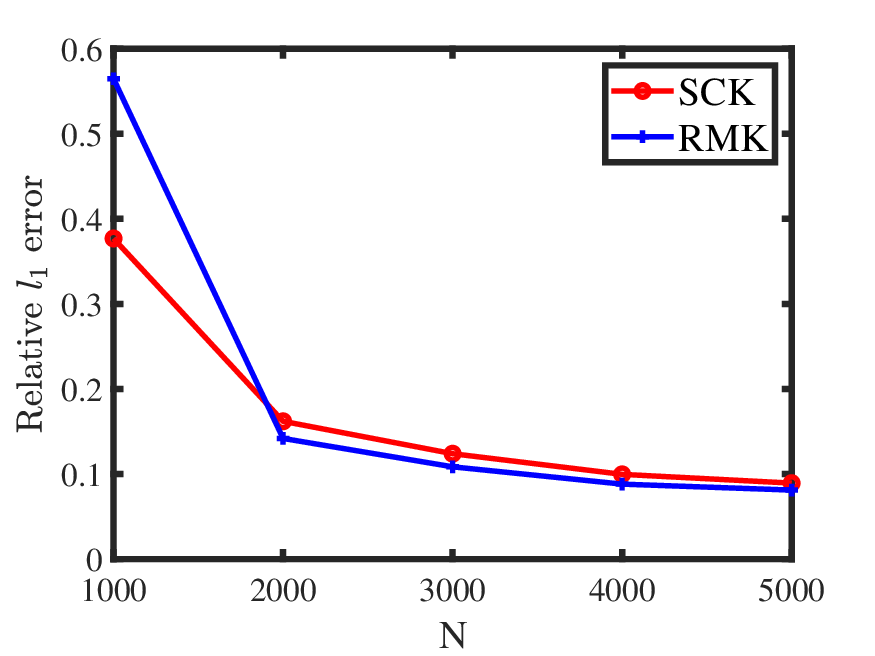}}
\subfigure[Simulation 1(b)]{\includegraphics[width=0.24\textwidth]{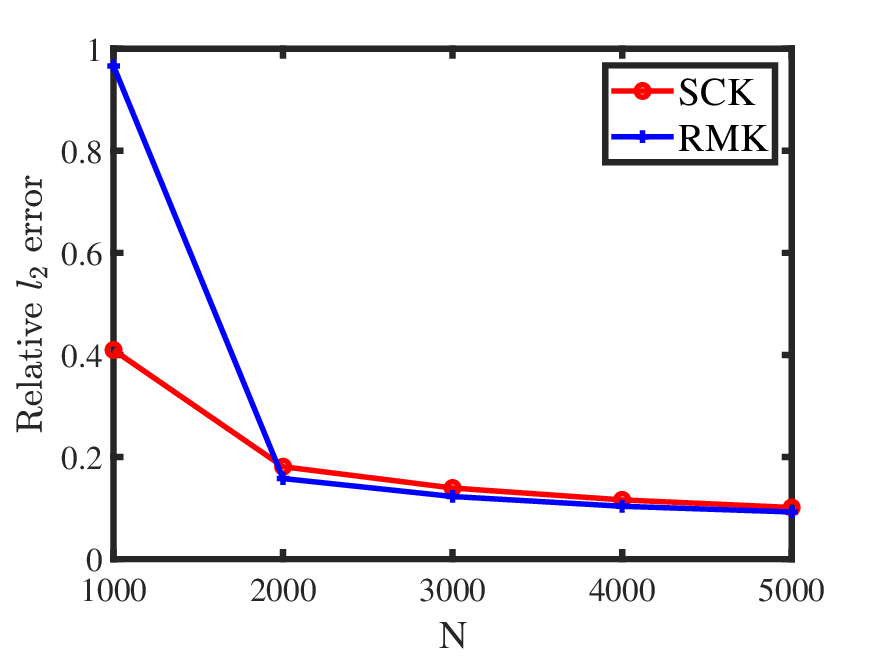}}
\subfigure[Simulation 1(b)]{\includegraphics[width=0.24\textwidth]{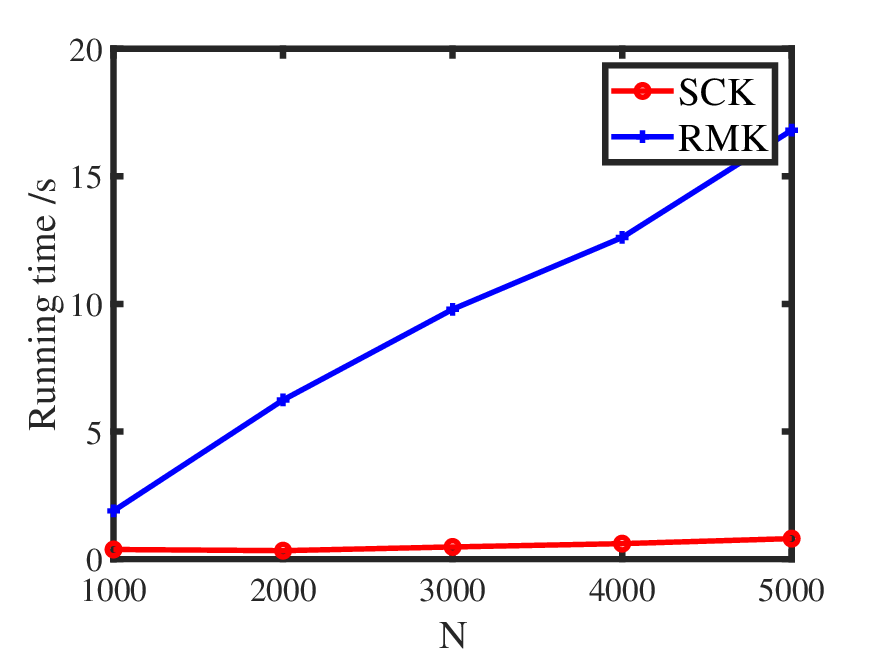}}
\caption{Numerical results of Simulation 1.}
\label{S1} %% label for entire figure
\end{figure}
\subsubsection{Binomial distribution}
When $R(i,j)\sim \mathrm{Binomial}(m,\frac{R_{0}(i,j)}{m})$ for $i\in[N], j\in[J]$, we consider the following two simulations.

\textbf{Simulation 2(a): changing $\rho$.} Set $N=500$ and $m=5$. Recall that $\rho$'s range is $(0,m]$ when $\mathcal{F}$ is Binomial distribution according to Example \ref{Binomial}, here, we let $\rho$ range in $\{0.2,0.4,0.6,\ldots,2\}$.

\textbf{Simulation 2(b): changing $N$.} Let $\rho=0.1, m=5$, and $N$ range in $\{1000,2000,\ldots,5000\}$.

Figure \ref{S2} presents the corresponding results. We note that SCK and RMK have similar error rates, while SCK runs faster than RMK for this simulation. Meanwhile, increasing $\rho$ (and $N$) decreases error rates for both methods, which confirms our findings in Example \ref{Binomial} and Corollary \ref{AddConditions}.

\begin{figure}
\centering
\subfigure[Simulation 2(a)]{\includegraphics[width=0.24\textwidth]{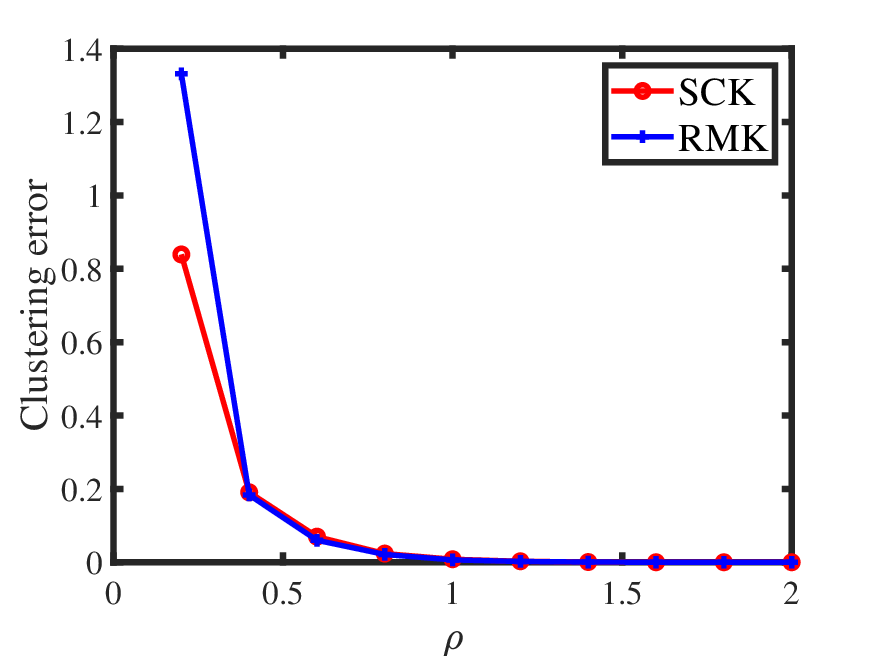}}
\subfigure[Simulation 2(a)]{\includegraphics[width=0.24\textwidth]{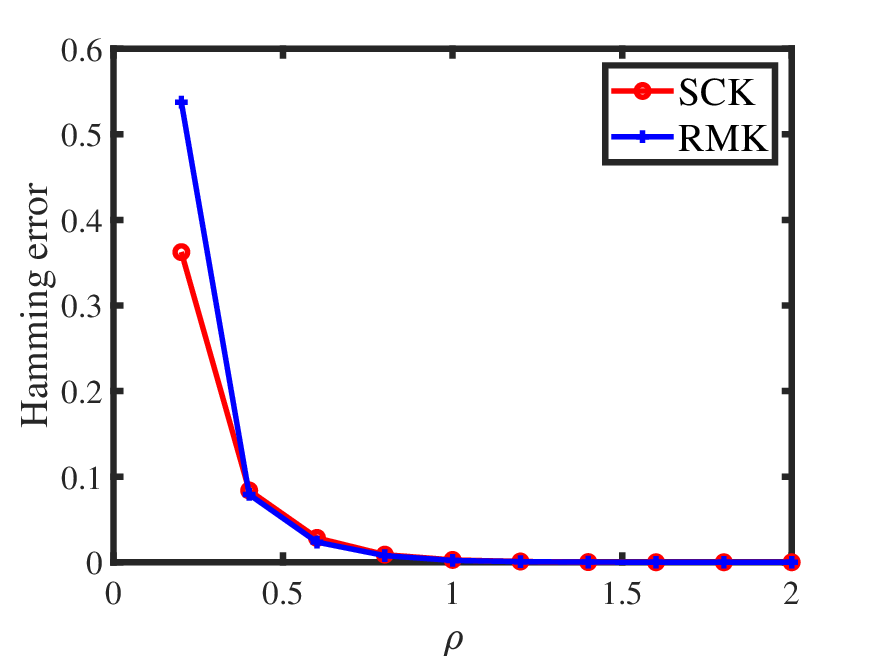}}
\subfigure[Simulation 2(a)]{\includegraphics[width=0.24\textwidth]{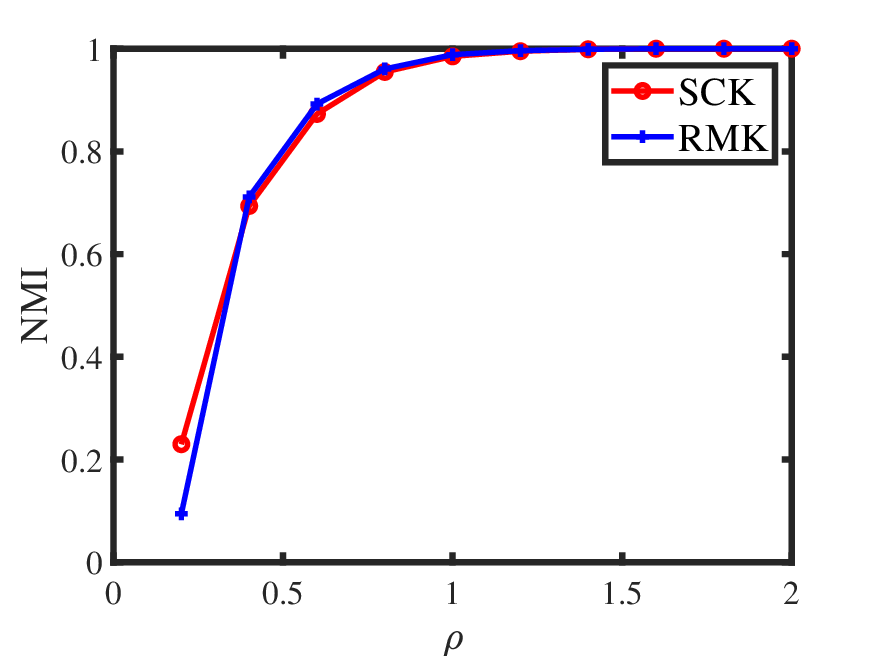}}
\subfigure[Simulation 2(a)]{\includegraphics[width=0.24\textwidth]{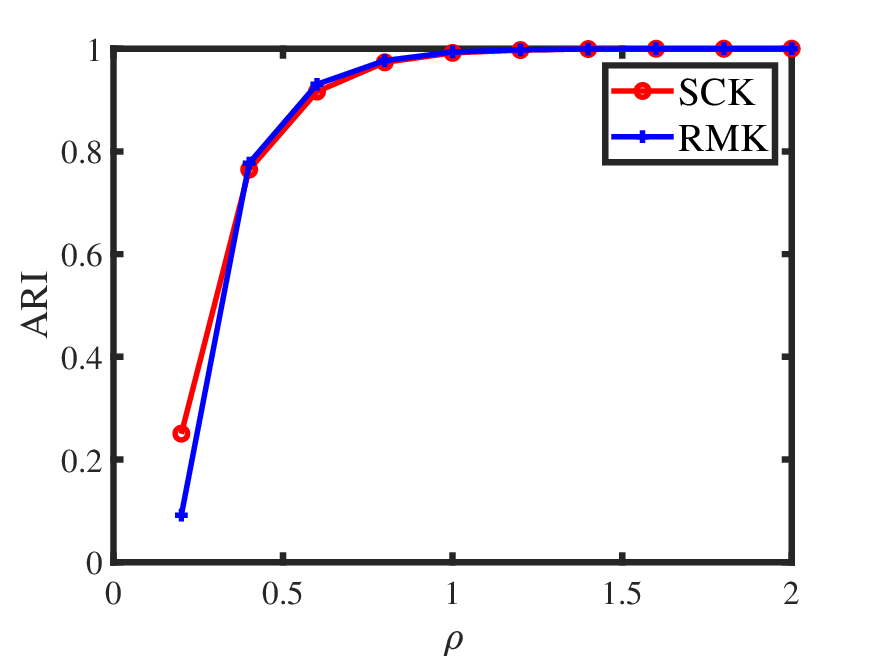}}
\subfigure[Simulation 2(a)]{\includegraphics[width=0.24\textwidth]{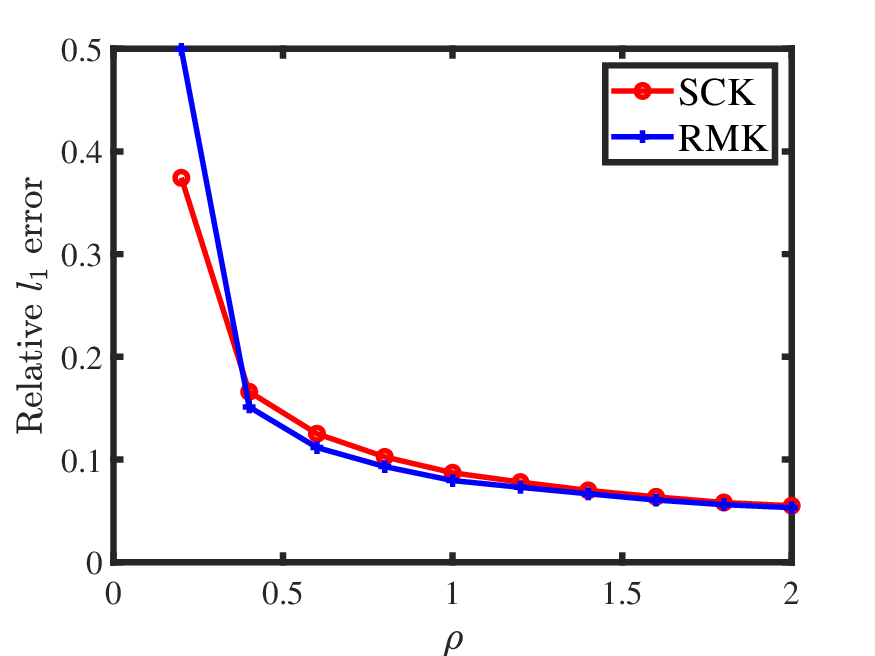}}
\subfigure[Simulation 2(a)]{\includegraphics[width=0.24\textwidth]{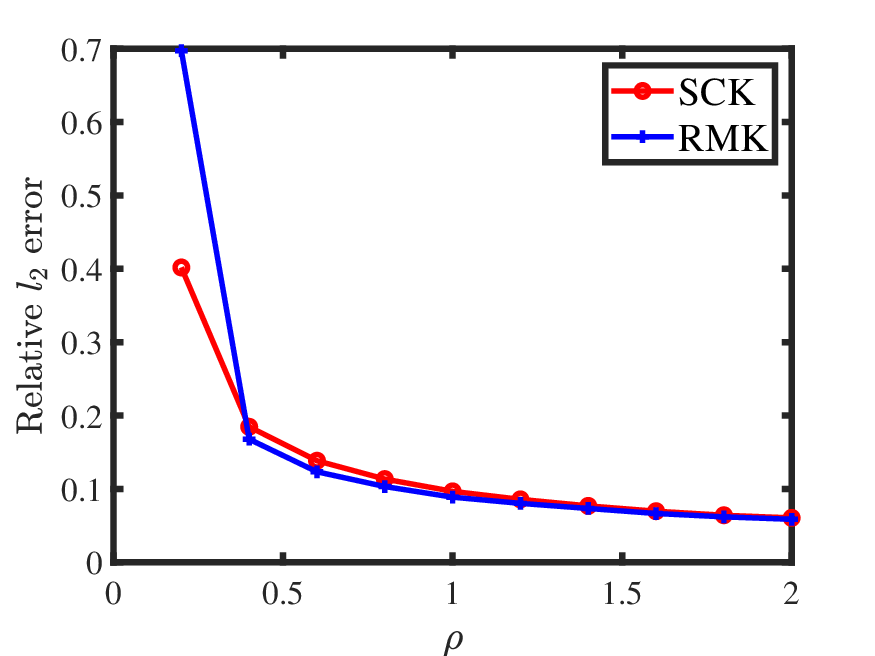}}
\subfigure[Simulation 2(a)]{\includegraphics[width=0.24\textwidth]{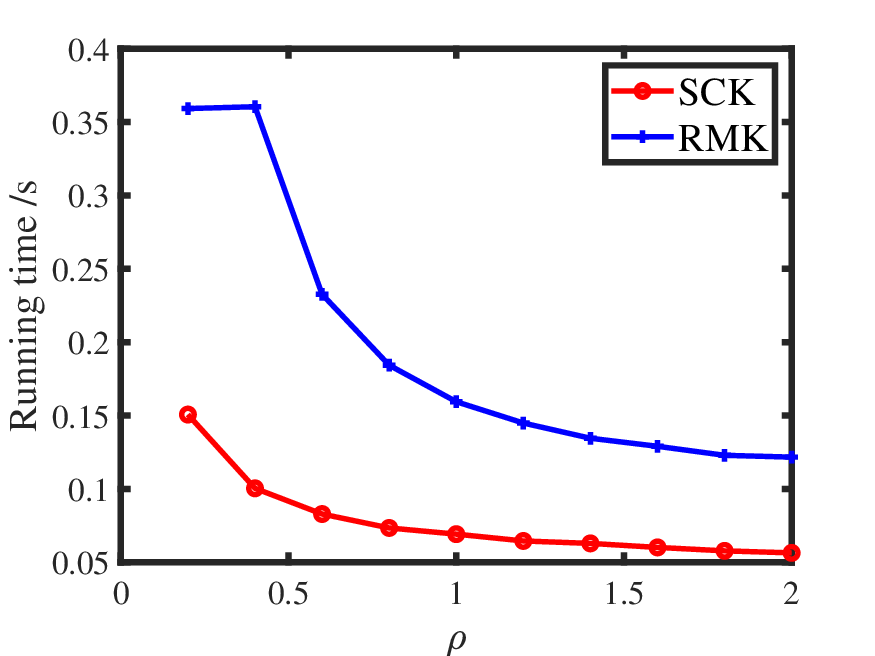}}
\subfigure[Simulation 2(b)]{\includegraphics[width=0.24\textwidth]{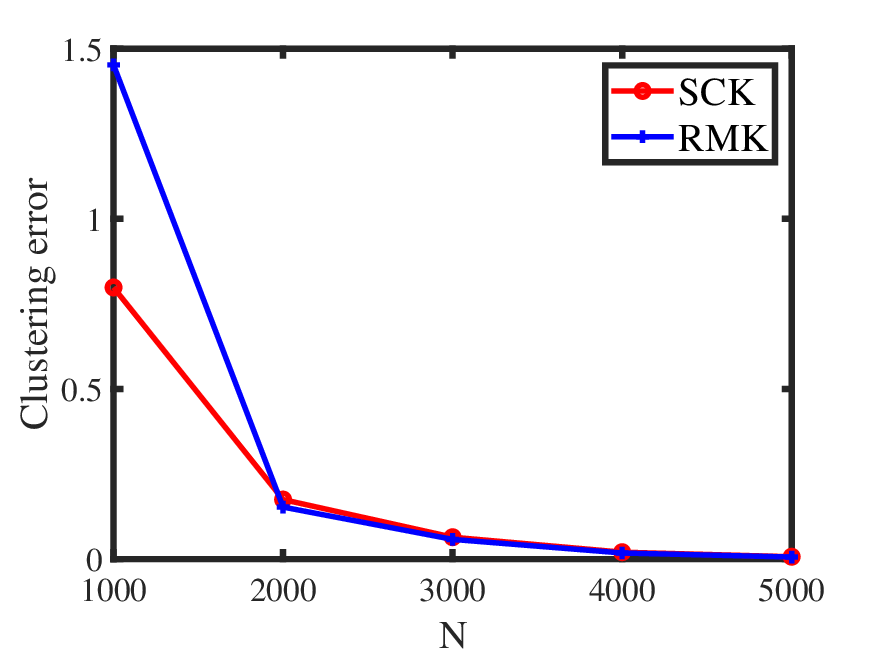}}
\subfigure[Simulation 2(b)]{\includegraphics[width=0.24\textwidth]{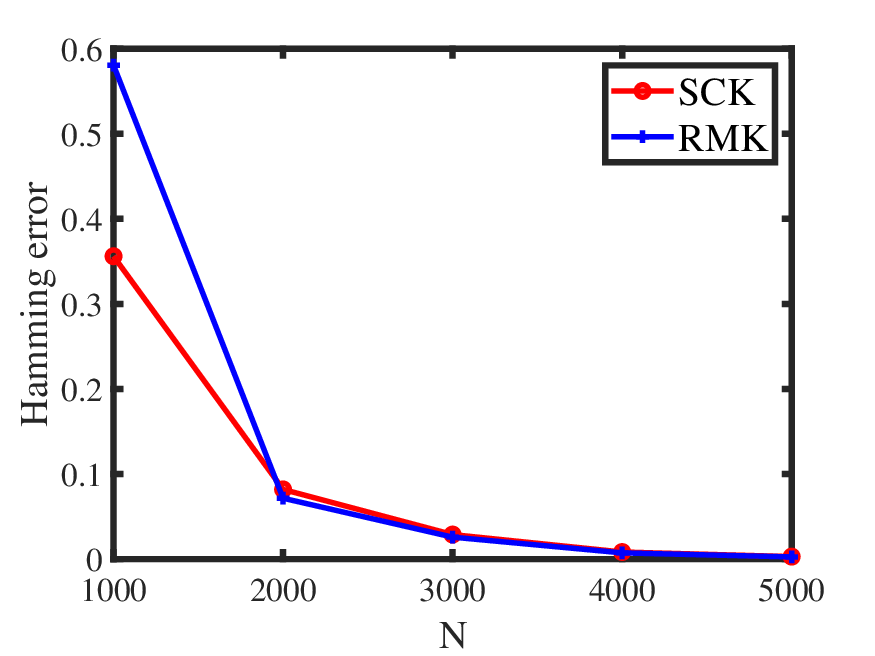}}
\subfigure[Simulation 2(b)]{\includegraphics[width=0.24\textwidth]{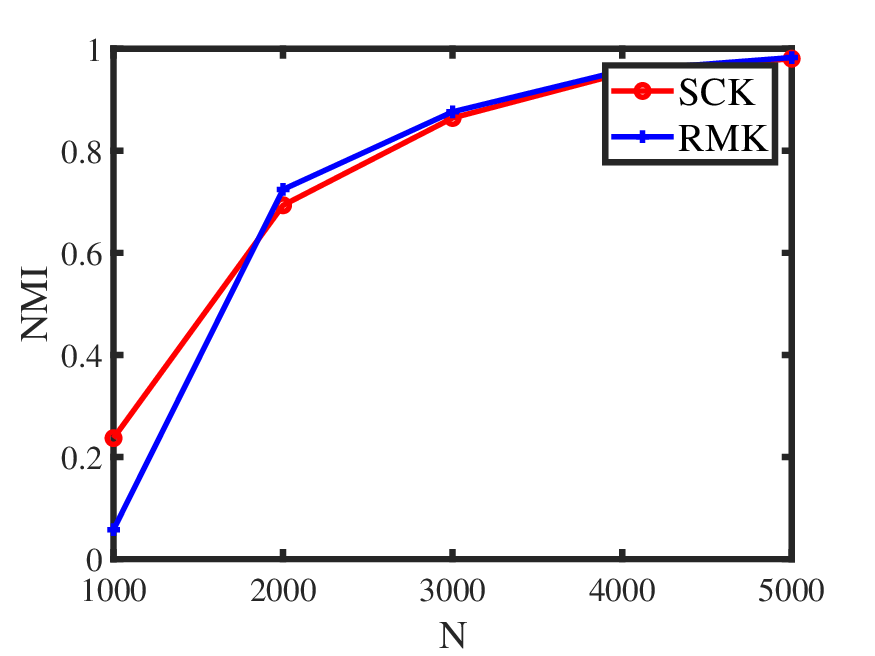}}
\subfigure[Simulation 2(b)]{\includegraphics[width=0.24\textwidth]{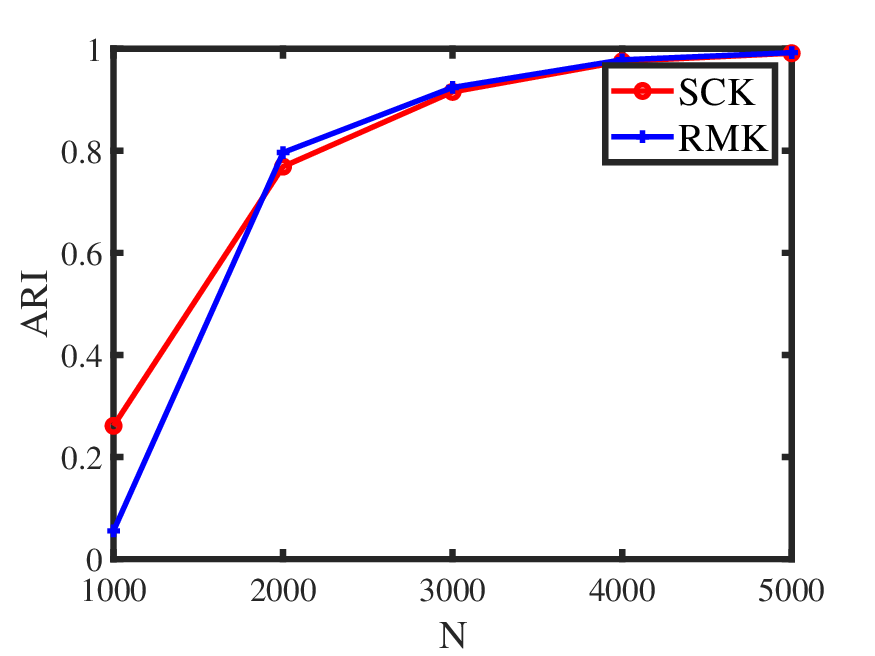}}
\subfigure[Simulation 2(b)]{\includegraphics[width=0.24\textwidth]{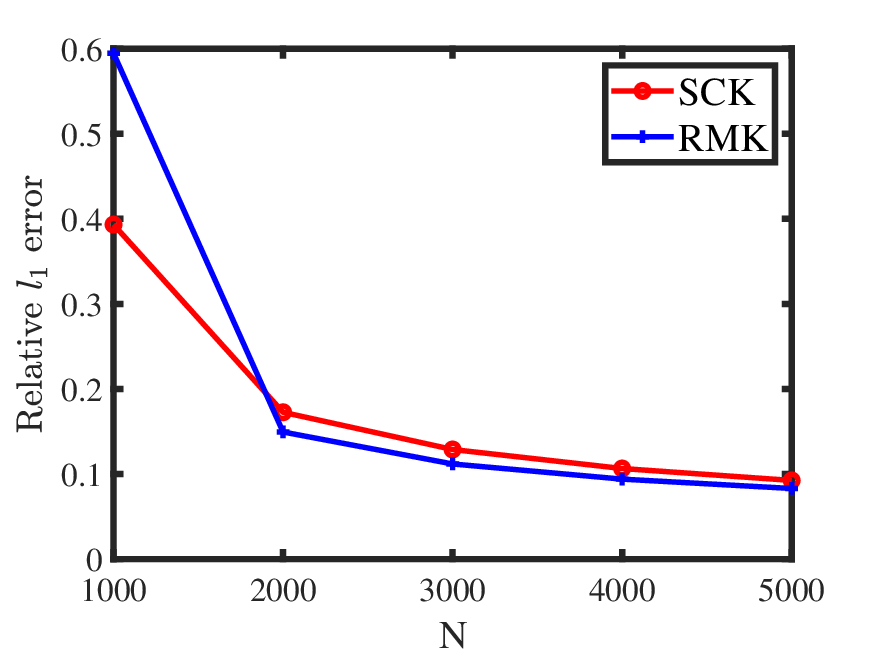}}
\subfigure[Simulation 2(b)]{\includegraphics[width=0.24\textwidth]{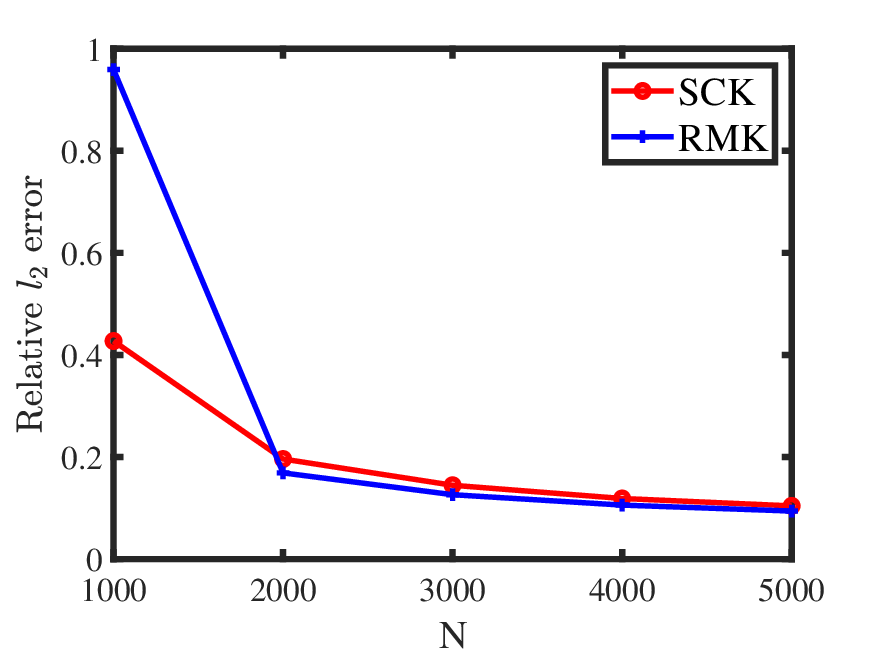}}
\subfigure[Simulation 2(b)]{\includegraphics[width=0.24\textwidth]{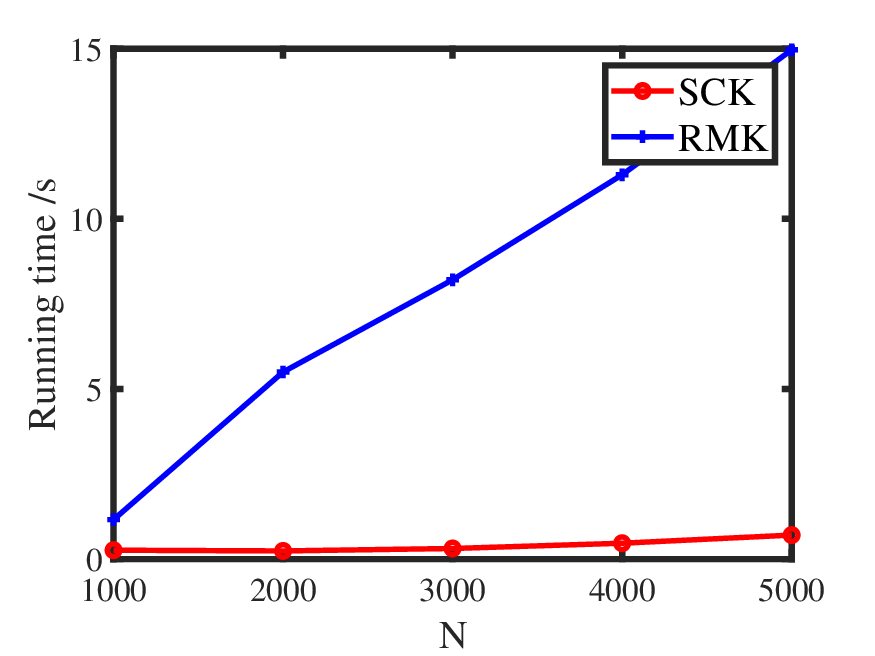}}
\caption{Numerical results of Simulation 2.}
\label{S2} %% label for entire figure
\end{figure}
\subsubsection{Poisson distribution}
When $R(i,j)\sim \mathrm{Poisson}(R_{0}(i,j))$ for $i\in[N], j\in[J]$, we consider the following two simulations.

\textbf{Simulation 3(a): changing $\rho$.} Set $N=500$. Example \ref{Poisson} says that the theoretical range of $\rho$ is $(0,+\infty)$ when $\mathcal{F}$ is Poisson distribution. Here, we let $\rho$ range in $\{0.2,0.4,0.6,\ldots,2\}$.

\textbf{Simulation 3(b): changing $N$.} Let $\rho=0.1$ and $N$ range in $\{1000,2000,\ldots,5000\}$.

Figure \ref{S3} displays the numerical results of Simulation 3(a) and Simulation 3(b). The results are similar to those of the Bernoulli distribution case: SCK outperforms RMK in both estimating $(Z,\Theta)$ and running time; Both methods perform better as $\rho$ and $N$ increase, which supports our analysis in Example \ref{Poisson} and Corollary \ref{AddConditions}.
\begin{figure}
\centering
\subfigure[Simulation 3(a)]{\includegraphics[width=0.24\textwidth]{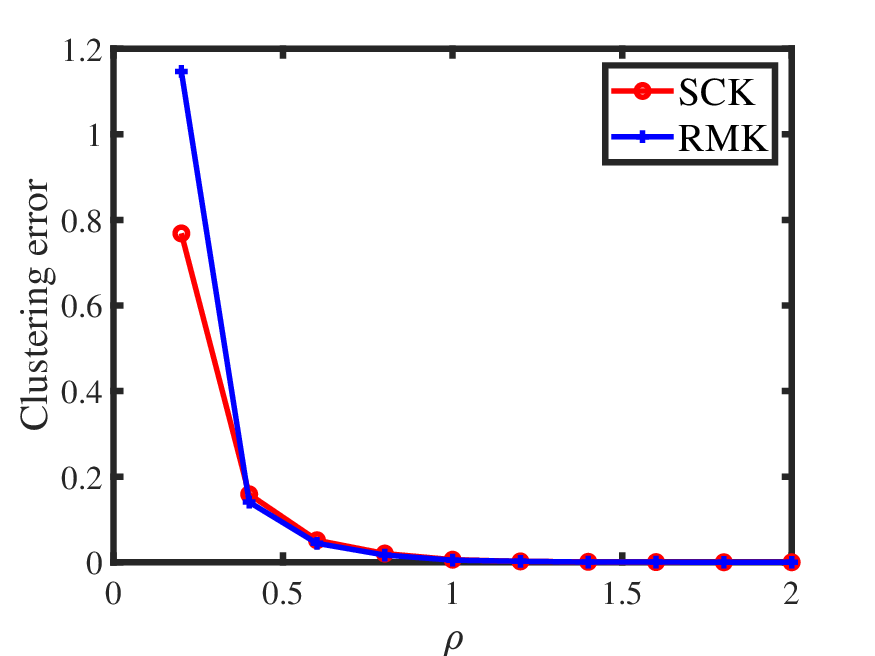}}
\subfigure[Simulation 3(a)]{\includegraphics[width=0.24\textwidth]{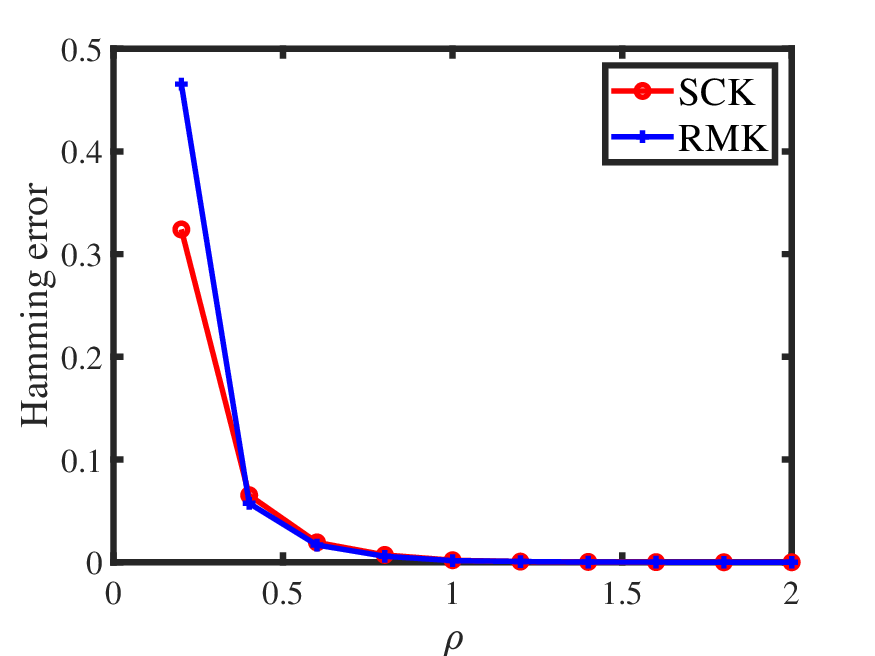}}
\subfigure[Simulation 3(a)]{\includegraphics[width=0.24\textwidth]{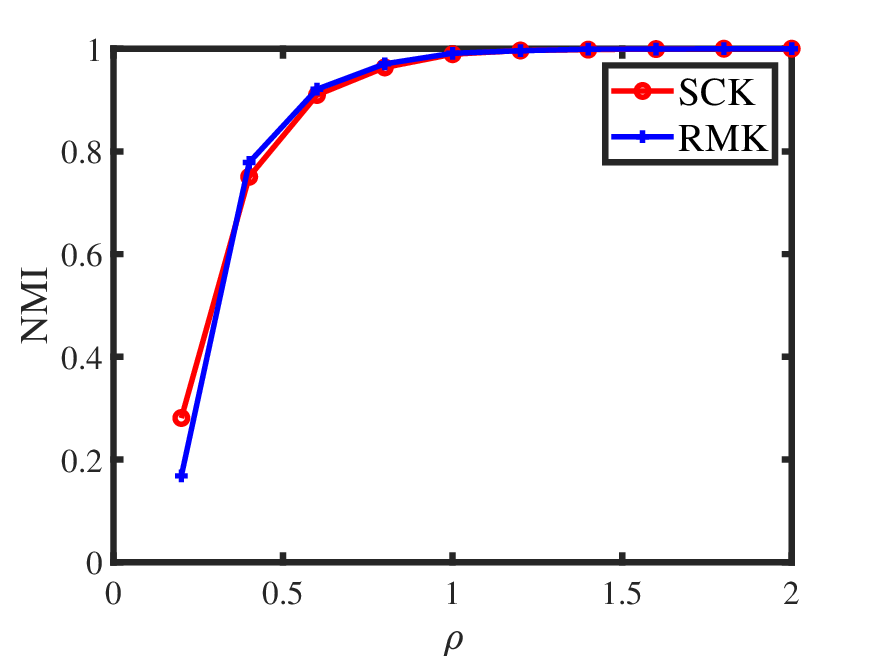}}
\subfigure[Simulation 3(a)]{\includegraphics[width=0.24\textwidth]{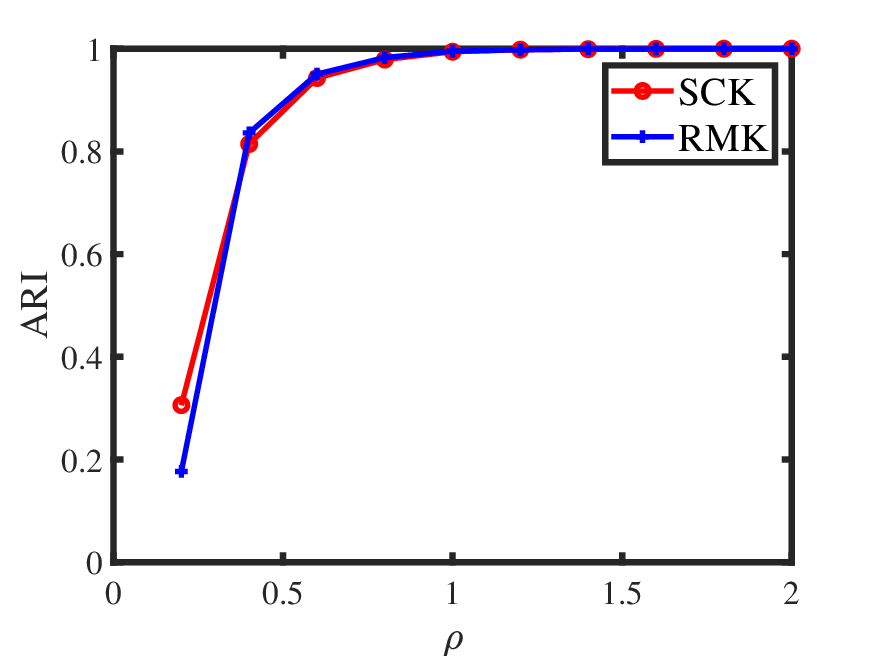}}
\subfigure[Simulation 3(a)]{\includegraphics[width=0.24\textwidth]{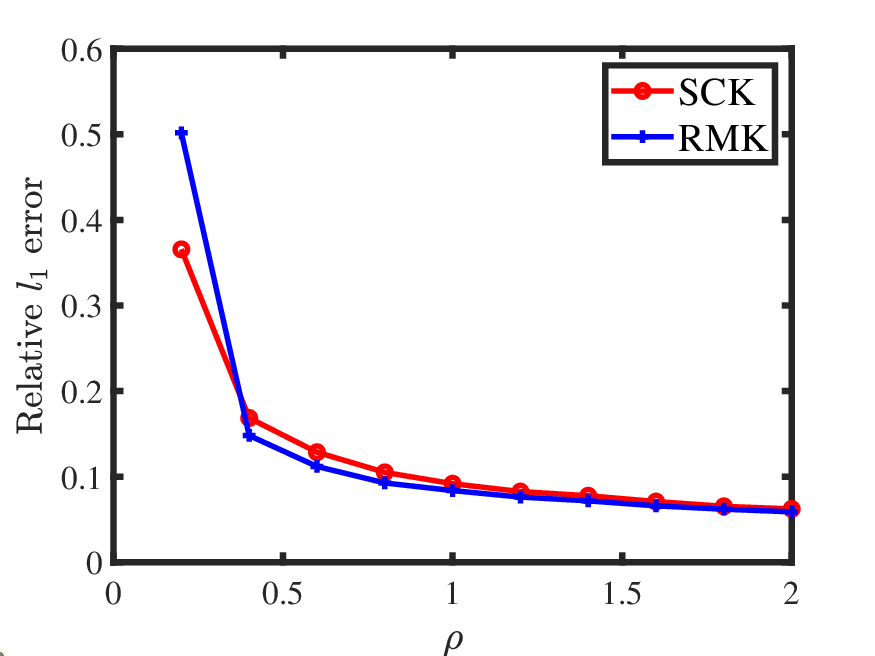}}
\subfigure[Simulation 3(a)]{\includegraphics[width=0.24\textwidth]{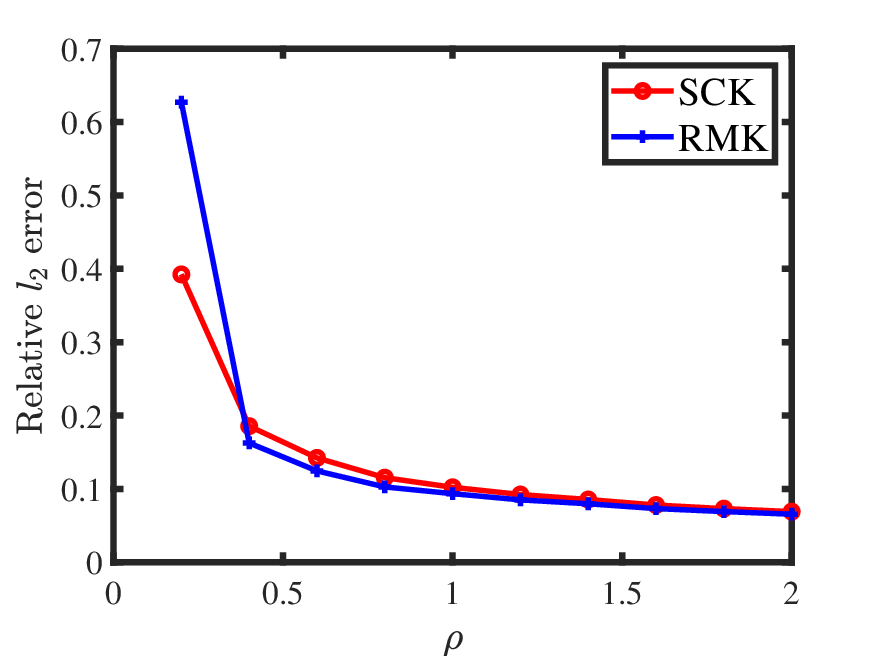}}
\subfigure[Simulation 3(a)]{\includegraphics[width=0.24\textwidth]{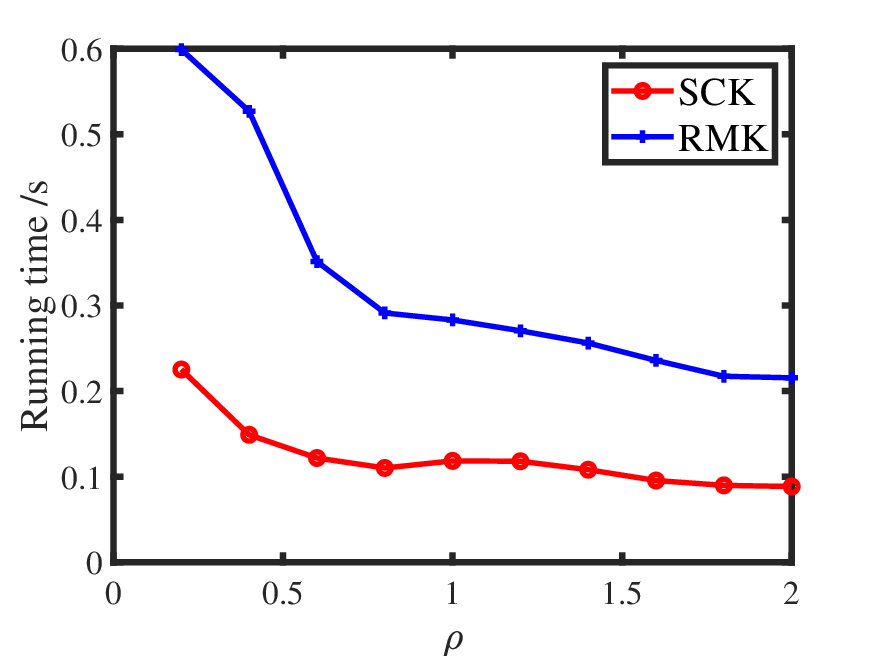}}
\subfigure[Simulation 3(b)]{\includegraphics[width=0.24\textwidth]{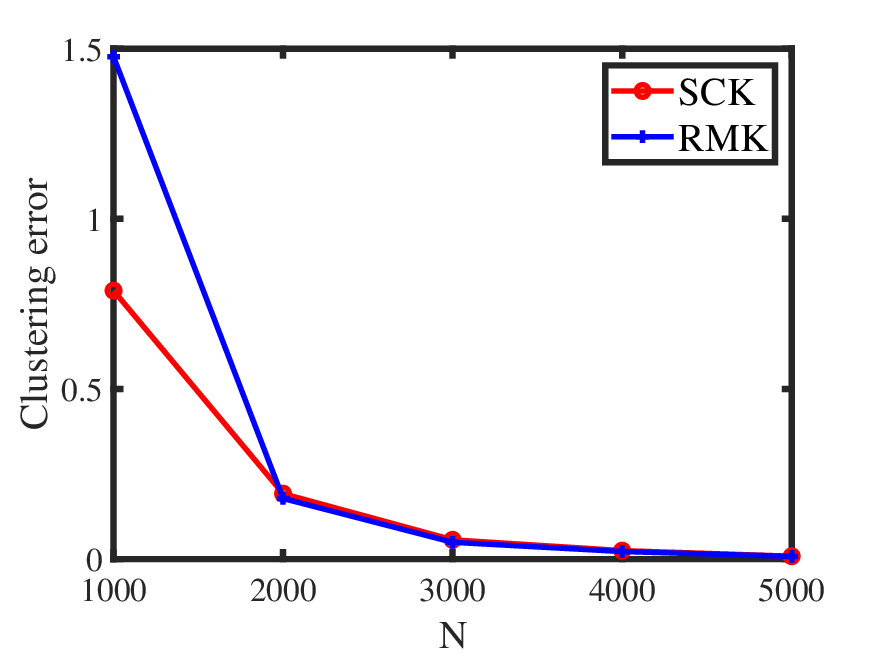}}
\subfigure[Simulation 3(b)]{\includegraphics[width=0.24\textwidth]{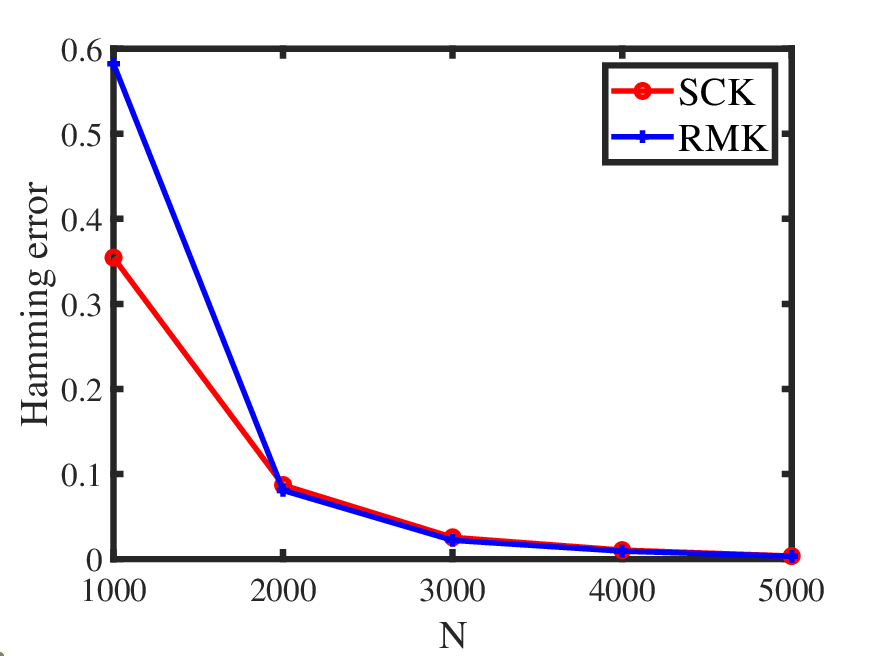}}
\subfigure[Simulation 3(b)]{\includegraphics[width=0.24\textwidth]{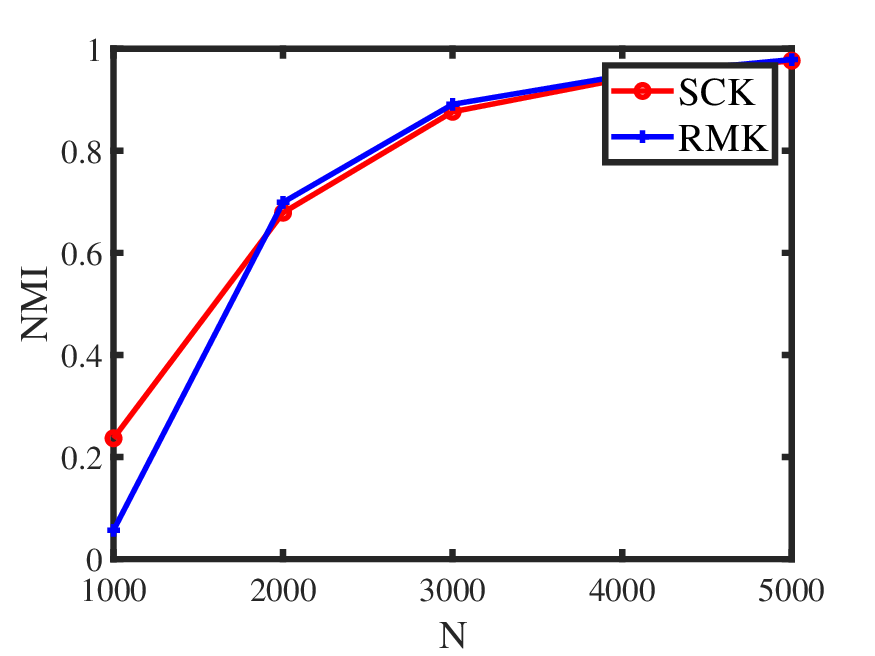}}
\subfigure[Simulation 3(b)]{\includegraphics[width=0.24\textwidth]{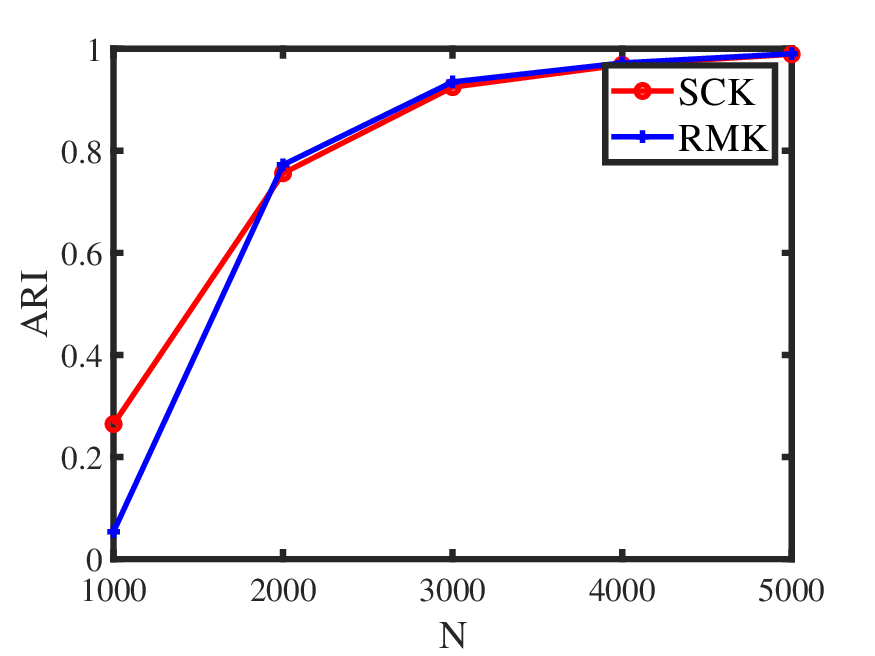}}
\subfigure[Simulation 3(b)]{\includegraphics[width=0.24\textwidth]{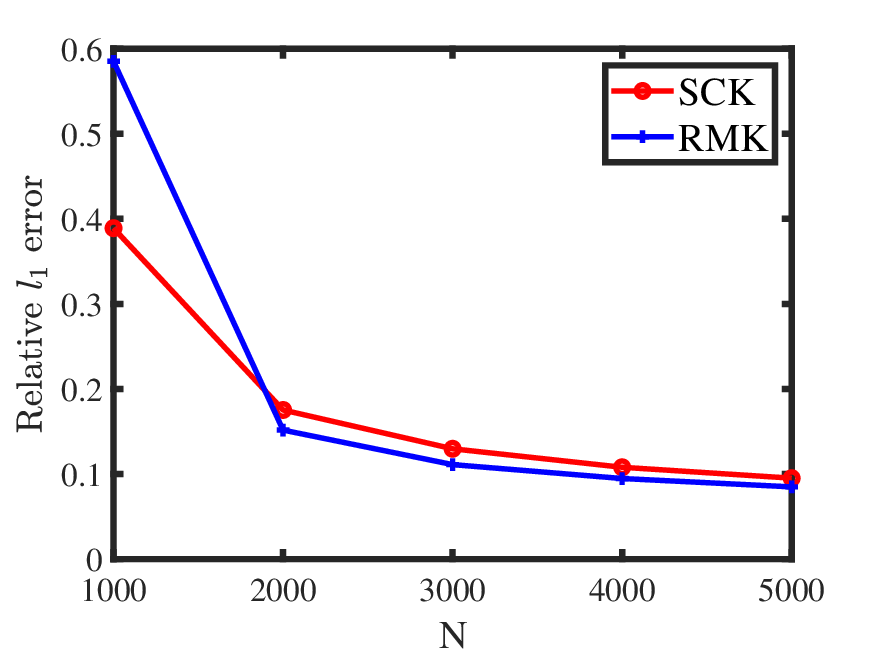}}
\subfigure[Simulation 3(b)]{\includegraphics[width=0.24\textwidth]{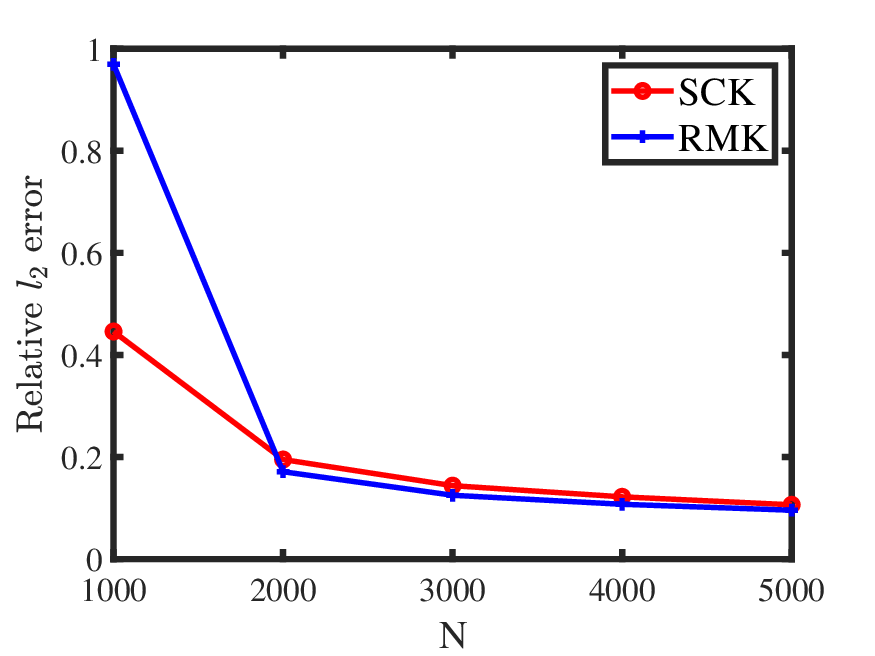}}
\subfigure[Simulation 3(b)]{\includegraphics[width=0.24\textwidth]{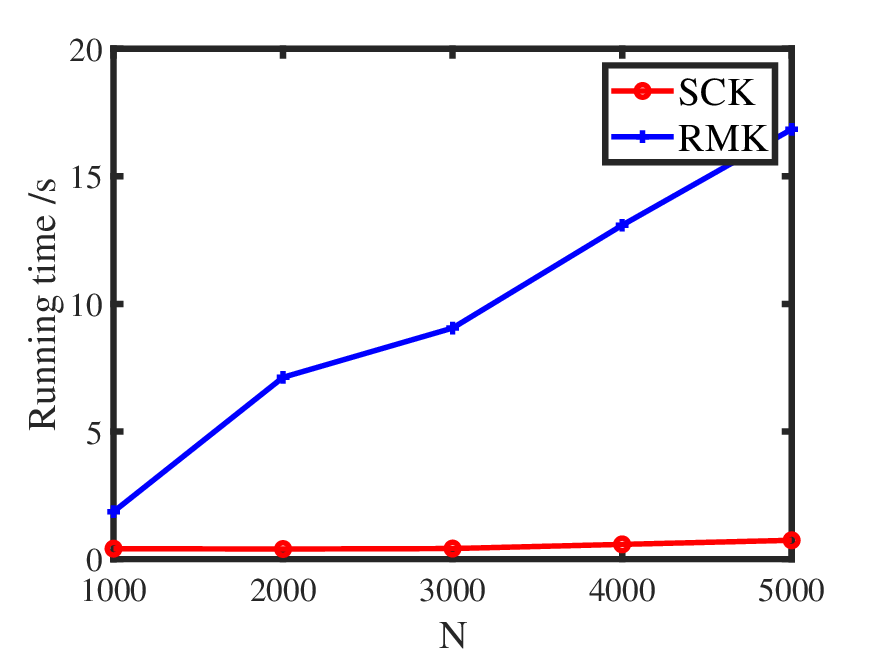}}
\caption{Numerical results of Simulation 3.}
\label{S3} %% label for entire figure
\end{figure}
\subsubsection{Normal distribution}
When $R(i,j)\sim \mathrm{Normal}(R_{0}(i,j),\sigma^{2})$ for $i\in[N], j\in[J]$, we consider the following two simulations.

\textbf{Simulation 4(a): changing $\rho$.} Set $N=500$ and $\sigma^{2}=2$. According to Example \ref{Normal}, the scaling parameter $\rho$ can be set as any positive value when $\mathcal{F}$ is Normal distribution. Here, we let $\rho$ range in $\{0.2,0.4,0.6,\ldots,2\}$.

\textbf{Simulation 4(b): changing $N$.} Let $\rho=0.5, \sigma^{2}=2$, and $N$ range in $\{1000,2000,\ldots,5000\}$.

Figure \ref{S4} shows the results. We see that SCK and RMK have similar performances in estimating model parameters $(Z,\Theta)$ while SCK runs faster than RMK. Additionally, the error rates of both approaches decrease when the scaling parameter $\rho$ and the number of subjects $N$ increase, supporting our findings in Example \ref{Normal} and Corollary \ref{AddConditions}.
\begin{figure}
\centering
\subfigure[Simulation 4(a)]{\includegraphics[width=0.24\textwidth]{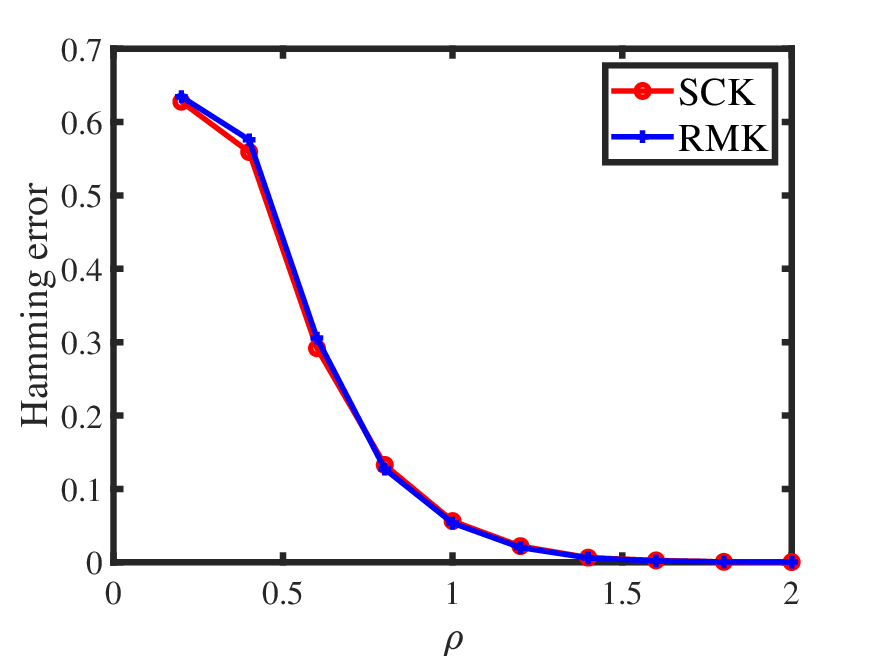}}
\subfigure[Simulation 4(a)]{\includegraphics[width=0.24\textwidth]{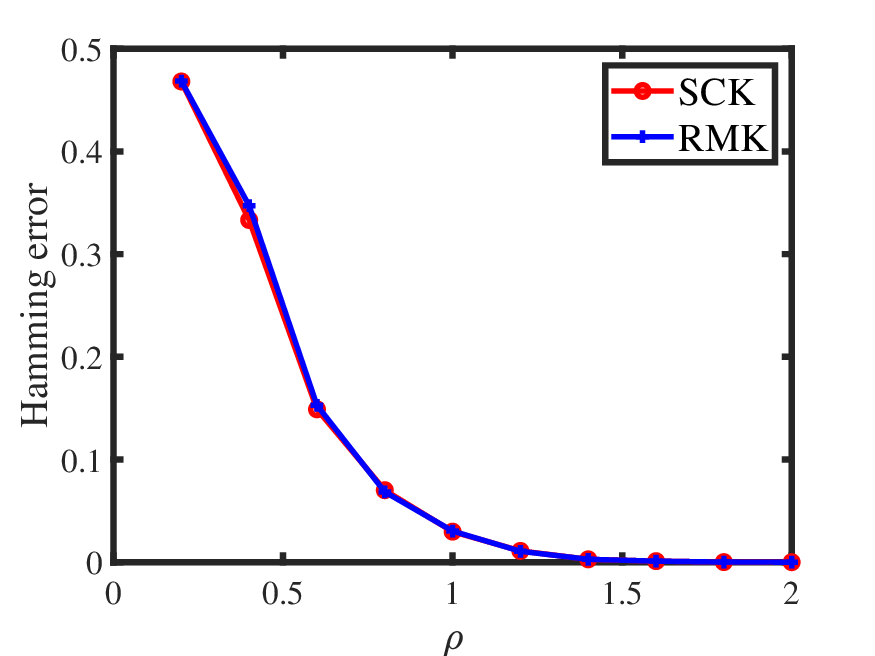}}
\subfigure[Simulation 4(a)]{\includegraphics[width=0.24\textwidth]{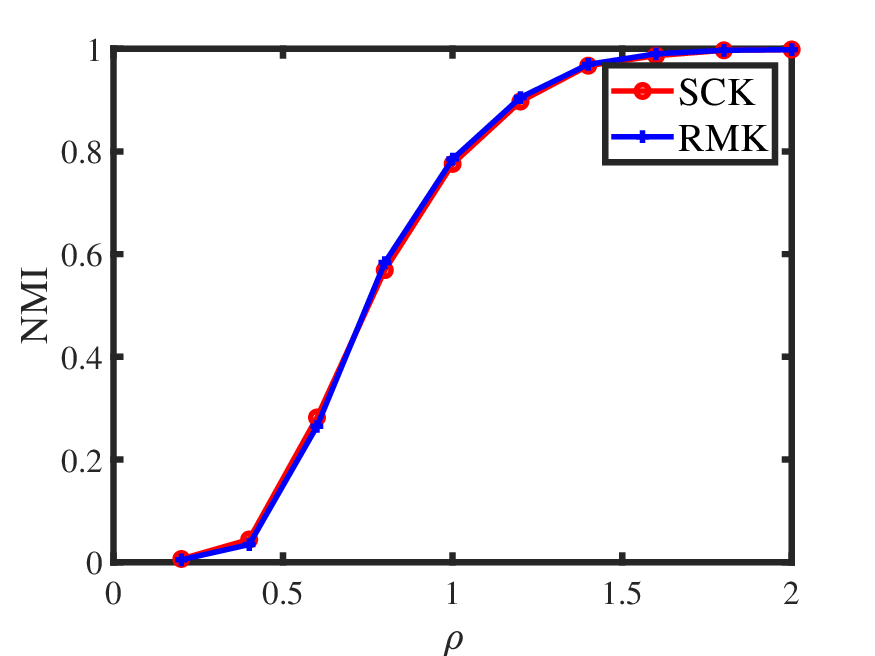}}
\subfigure[Simulation 4(a)]{\includegraphics[width=0.24\textwidth]{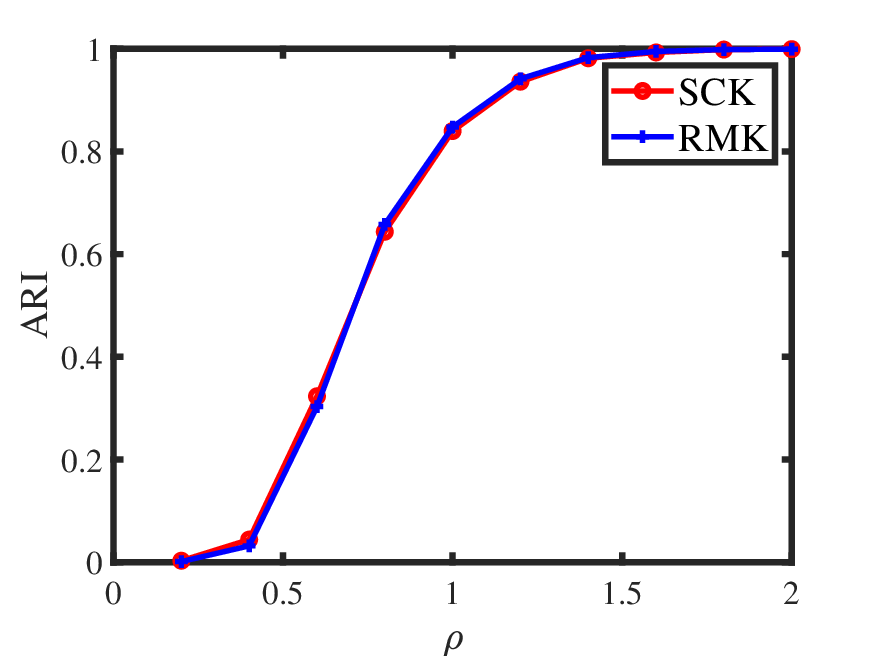}}
\subfigure[Simulation 4(a)]{\includegraphics[width=0.24\textwidth]{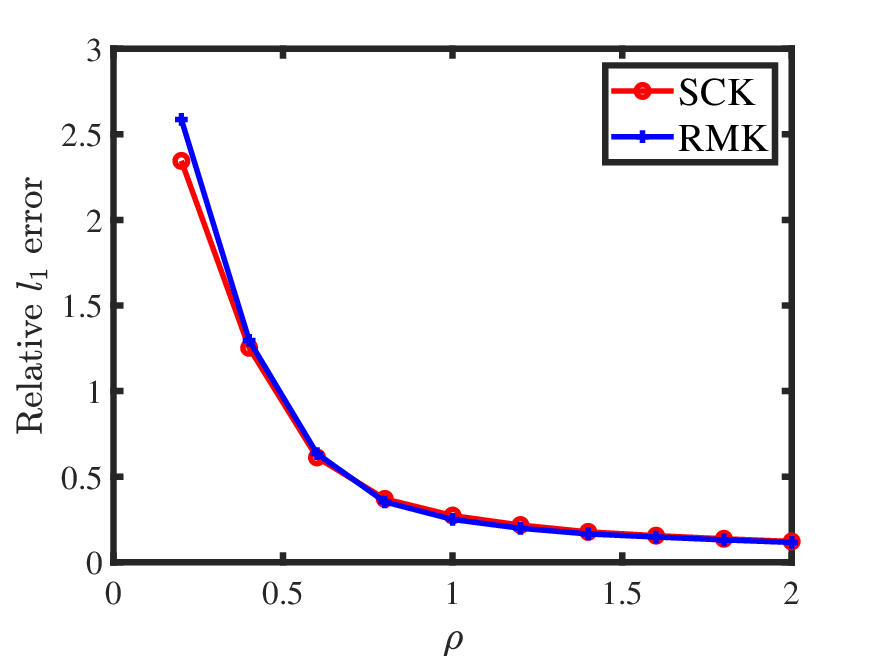}}
\subfigure[Simulation 4(a)]{\includegraphics[width=0.24\textwidth]{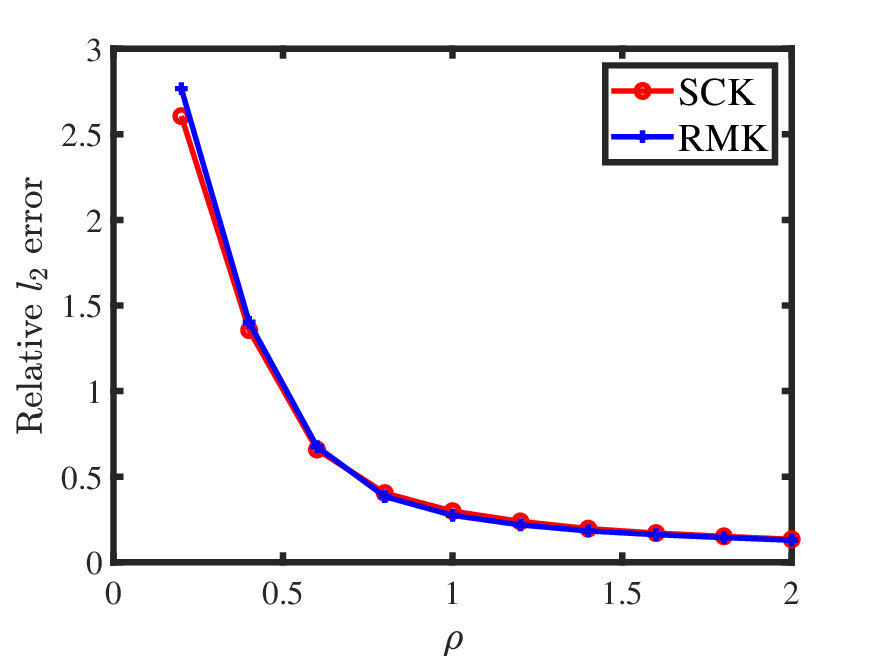}}
\subfigure[Simulation 4(a)]{\includegraphics[width=0.24\textwidth]{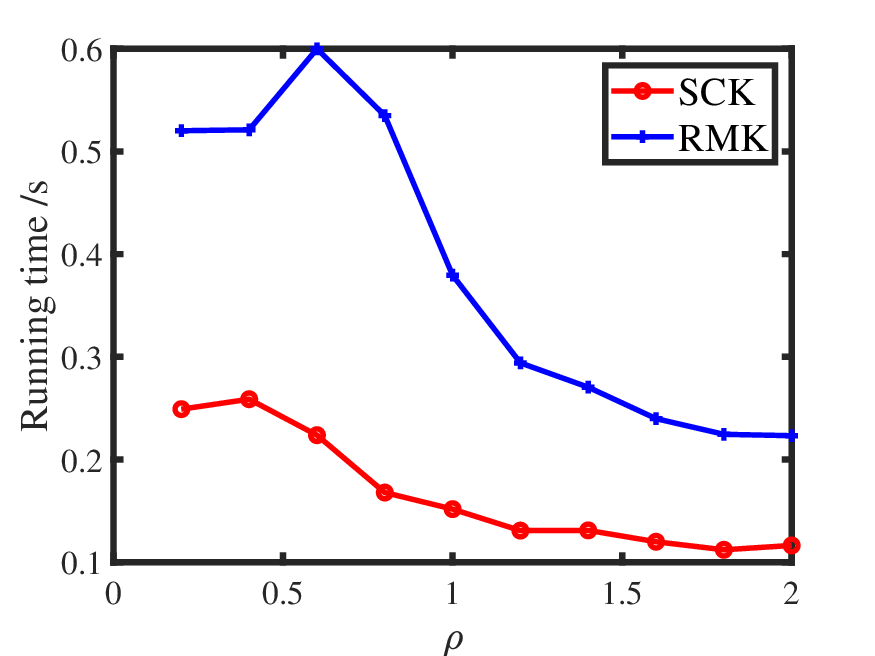}}
\subfigure[Simulation 4(b)]{\includegraphics[width=0.24\textwidth]{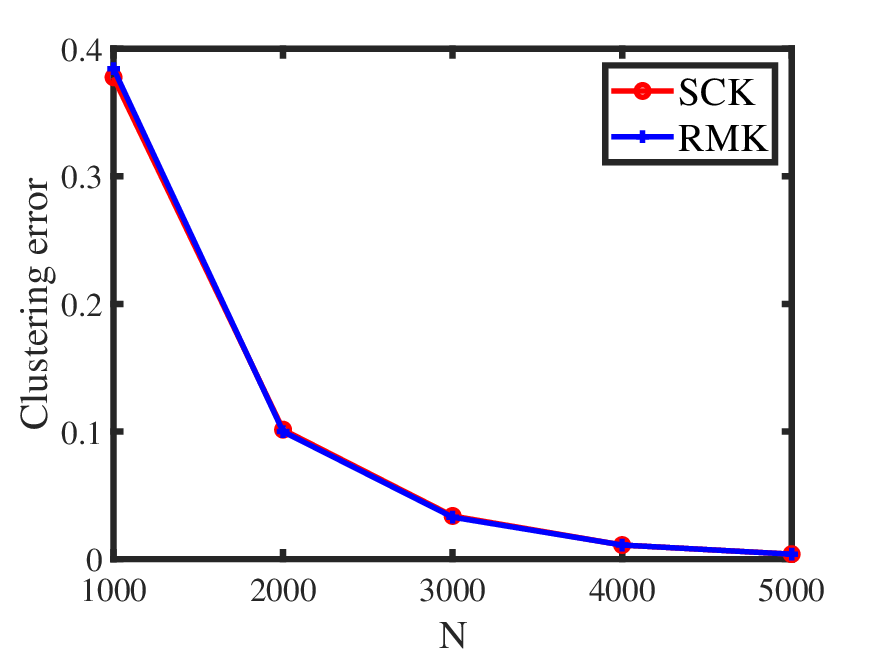}}
\subfigure[Simulation 4(b)]{\includegraphics[width=0.24\textwidth]{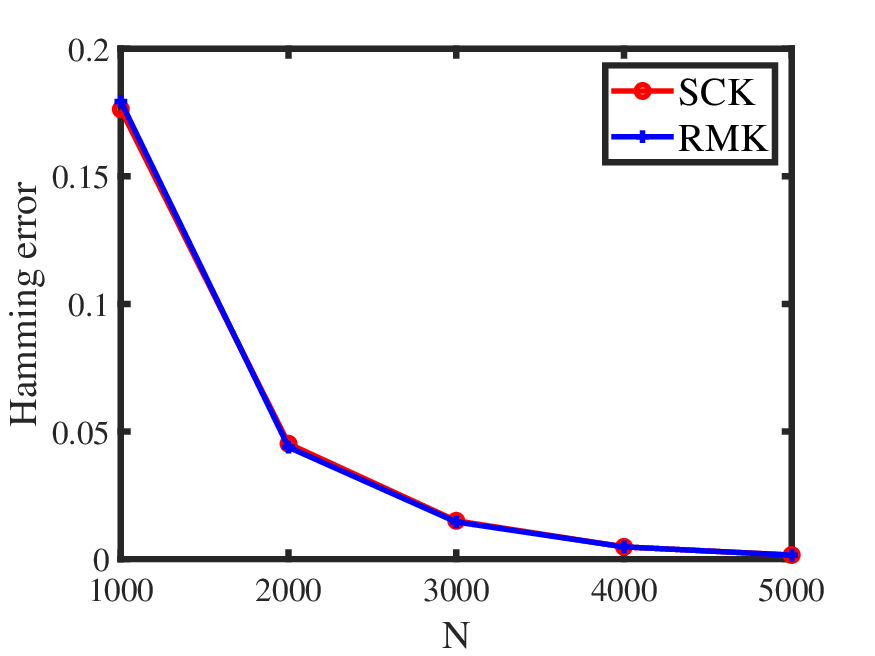}}
\subfigure[Simulation 4(b)]{\includegraphics[width=0.24\textwidth]{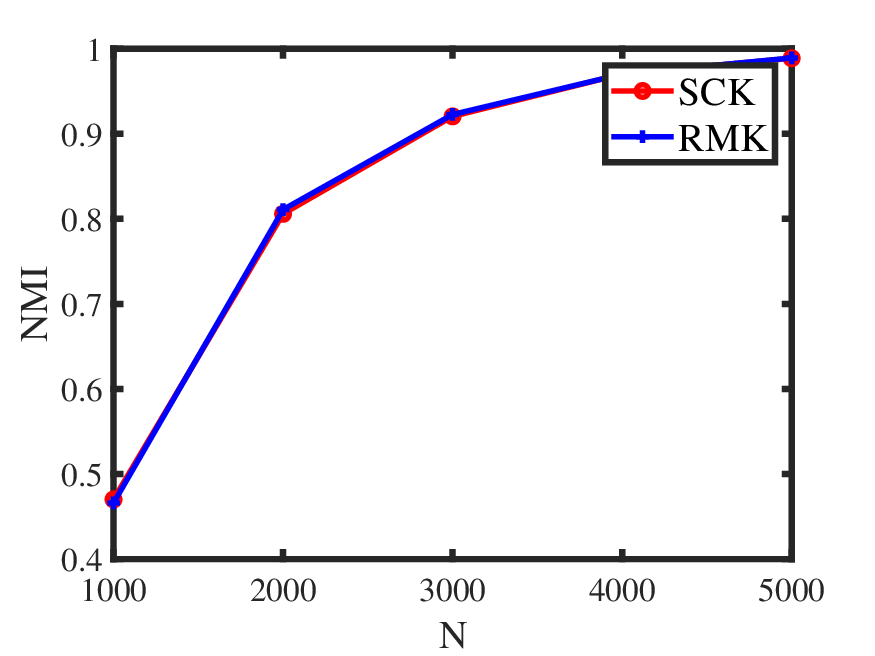}}
\subfigure[Simulation 4(b)]{\includegraphics[width=0.24\textwidth]{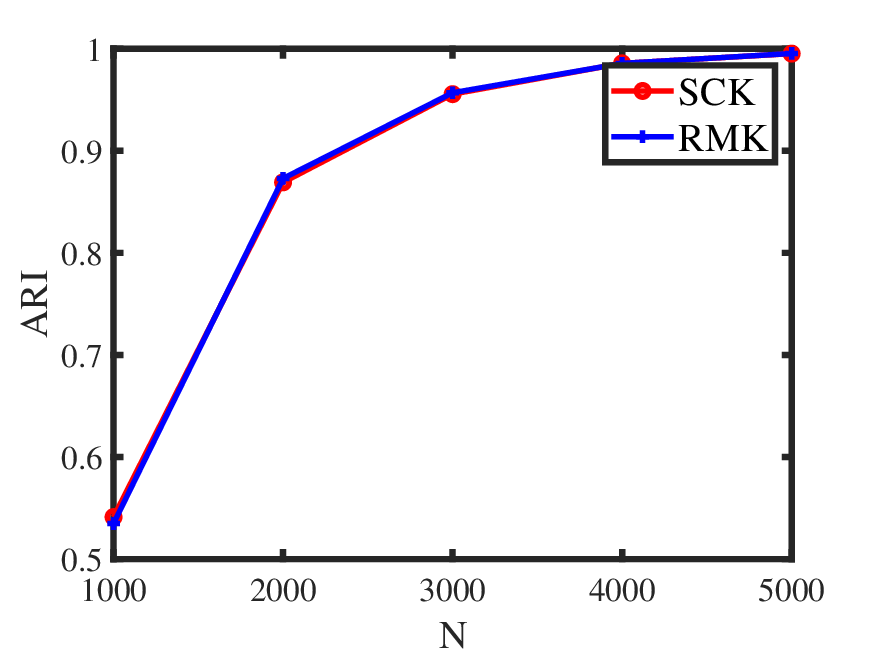}}
\subfigure[Simulation 4(b)]{\includegraphics[width=0.24\textwidth]{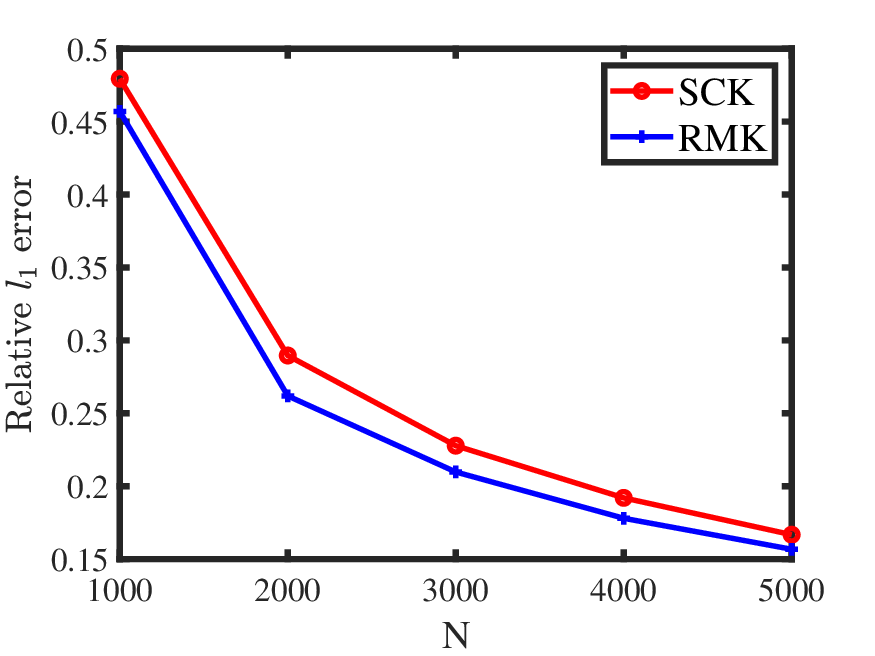}}
\subfigure[Simulation 4(b)]{\includegraphics[width=0.24\textwidth]{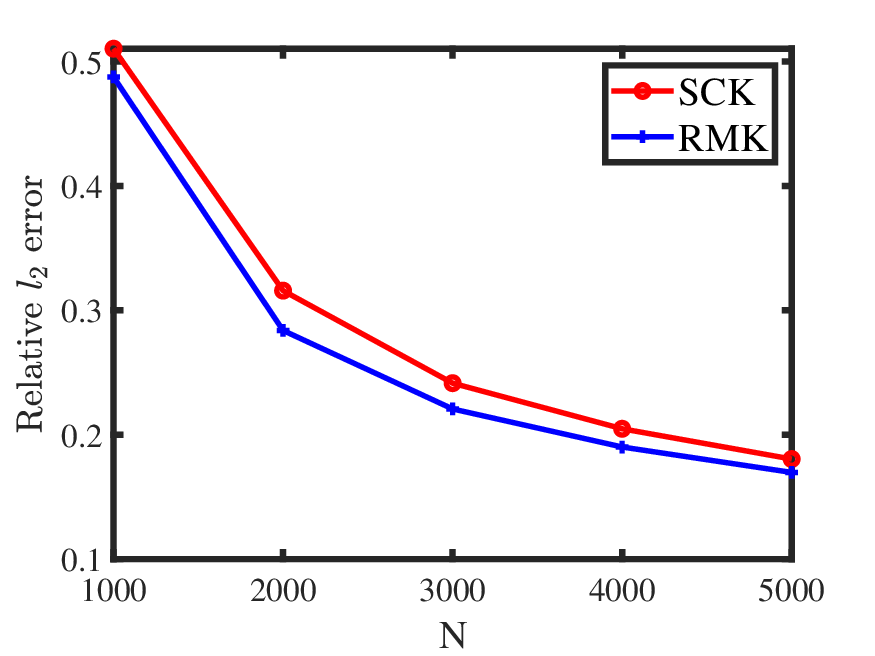}}
\subfigure[Simulation 4(b)]{\includegraphics[width=0.24\textwidth]{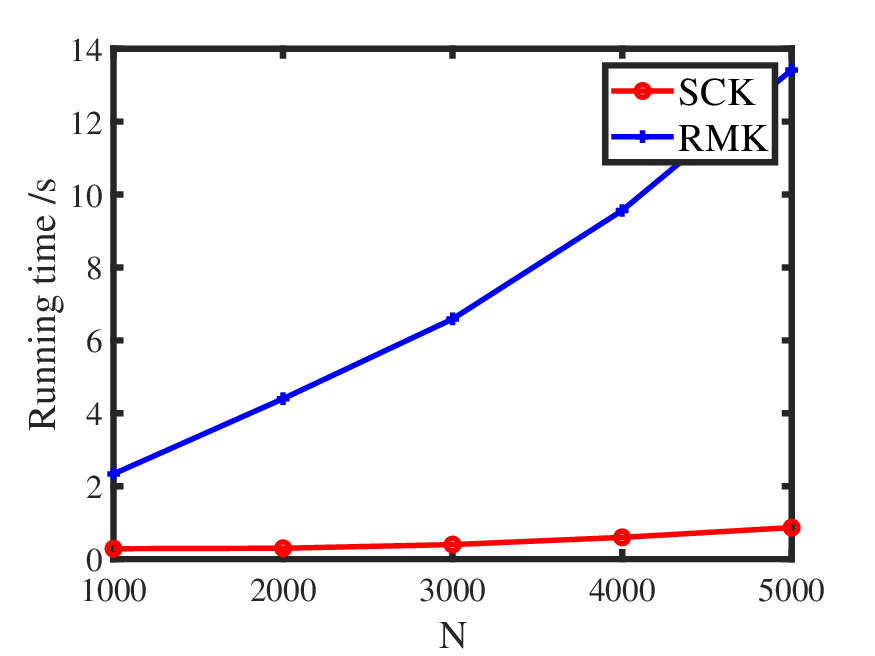}}
\caption{Numerical results of Simulation 4.}
\label{S4} %% label for entire figure
\end{figure}
\subsubsection{Exponential distribution}
When $R(i,j)\sim \mathrm{Exponential}(\frac{1}{R_{0}(i,j)})$ for $i\in[N], j\in[J]$, we consider the following two simulations.

\textbf{Simulation 5(a): changing $\rho$.} Set $N=300$. According to Example \ref{Exponential}, the range of the scaling parameter $\rho$ is $(0,+\infty)$ when $\mathcal{F}$ is Exponential distribution. Here, we let $\rho$ range in $\{1,2,\ldots,20\}$ for our numerical studies.

\textbf{Simulation 5(b): changing $N$.} Let $\rho=1$ and $N$ range in $\{300,600,\ldots,3000\}$.

Figure \ref{S5} displays the results. We see that both methods provide satisfactory estimations for $Z$ and $\Theta$ for their small error rates, large NMI, and large ARI. SCK provides more accurate estimations than RMK and SCK takes less time for estimations than RMK. Meanwhile, we find that increasing $\rho$ does not significantly influence the performances of SCK and RMK and this verifies our theoretical analysis in Example \ref{Exponential} that $\rho$ disappears in the theoretical upper bounds of error rates by setting $\gamma=\rho^{2}$ in Theorem \ref{mainWLCM} for Exponential distribution. Furthermore, when we increase $N$, both methods perform better and this supports our analysis after Corollary \ref{AddConditions}.
\begin{figure}
\centering
\subfigure[Simulation 5(a)]{\includegraphics[width=0.24\textwidth]{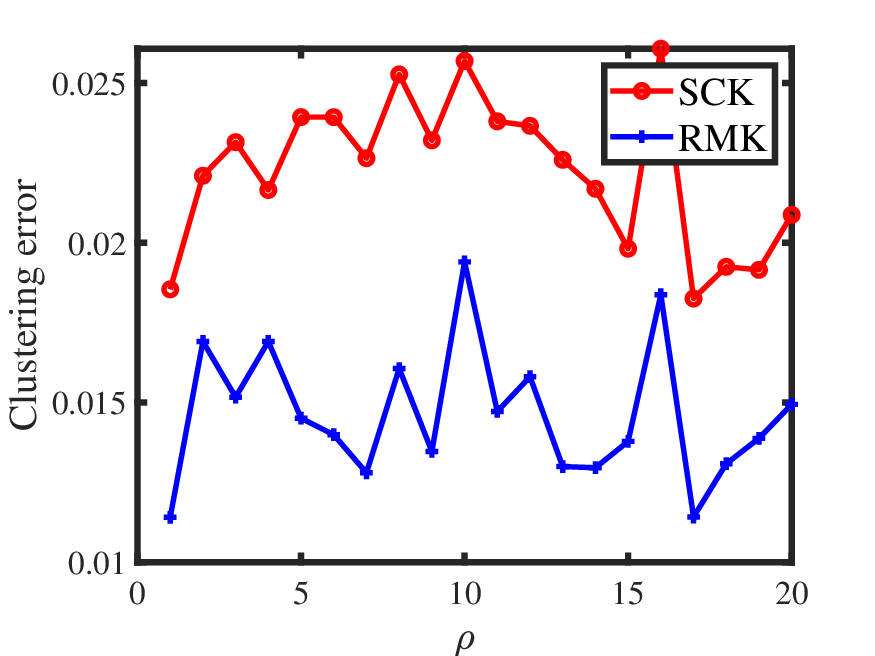}}
\subfigure[Simulation 5(a)]{\includegraphics[width=0.24\textwidth]{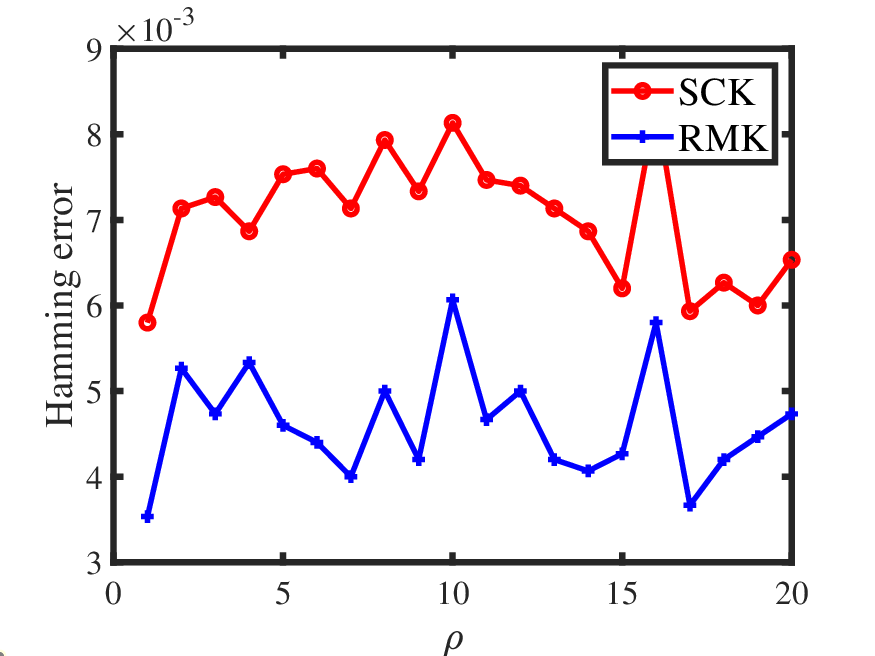}}
\subfigure[Simulation 5(a)]{\includegraphics[width=0.24\textwidth]{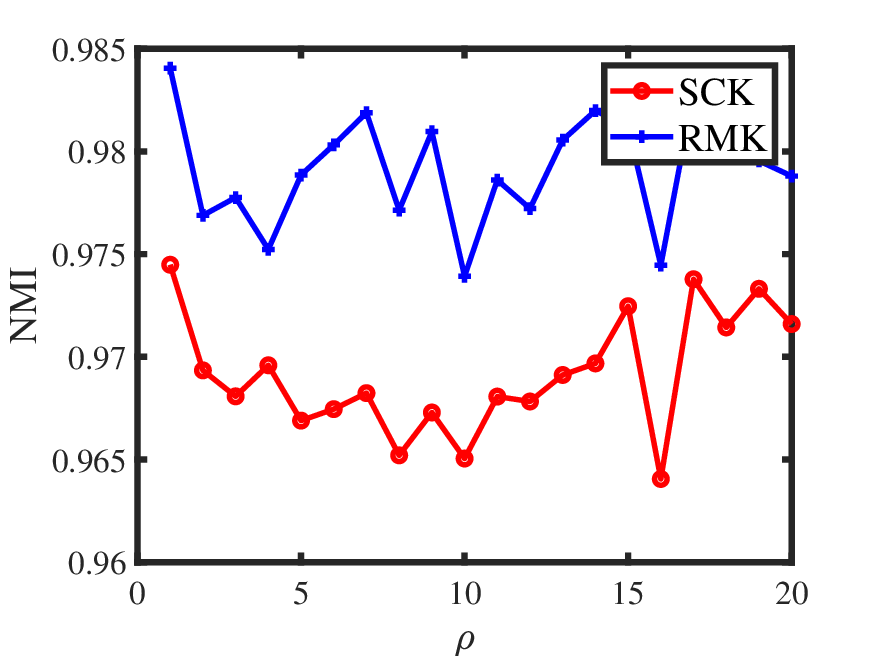}}
\subfigure[Simulation 5(a)]{\includegraphics[width=0.24\textwidth]{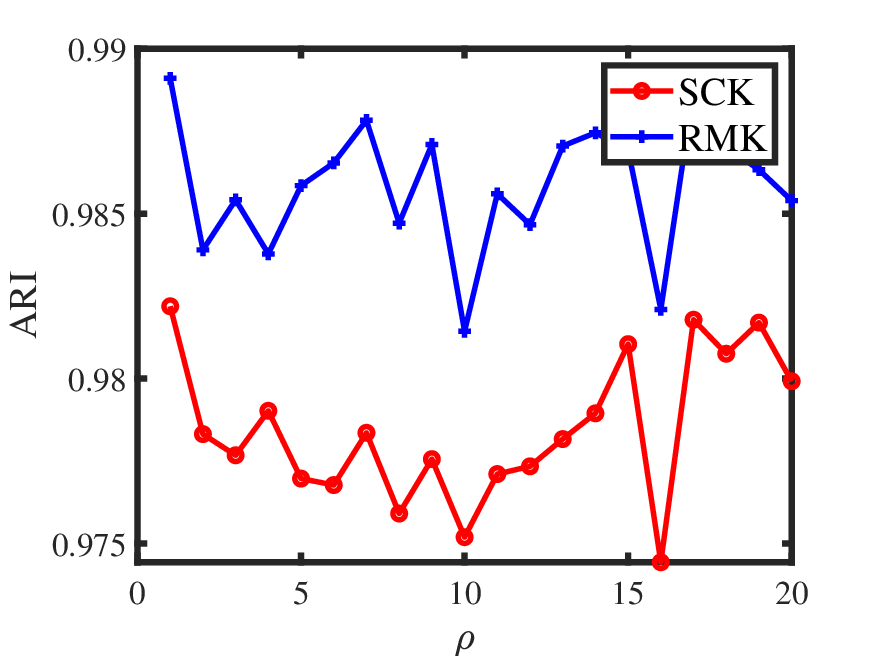}}
\subfigure[Simulation 5(a)]{\includegraphics[width=0.24\textwidth]{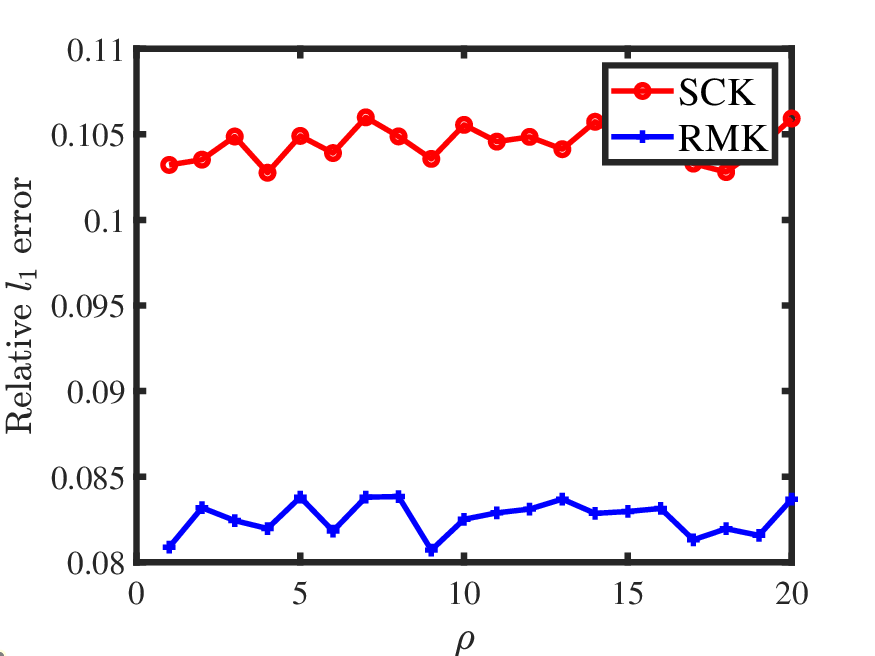}}
\subfigure[Simulation 5(a)]{\includegraphics[width=0.24\textwidth]{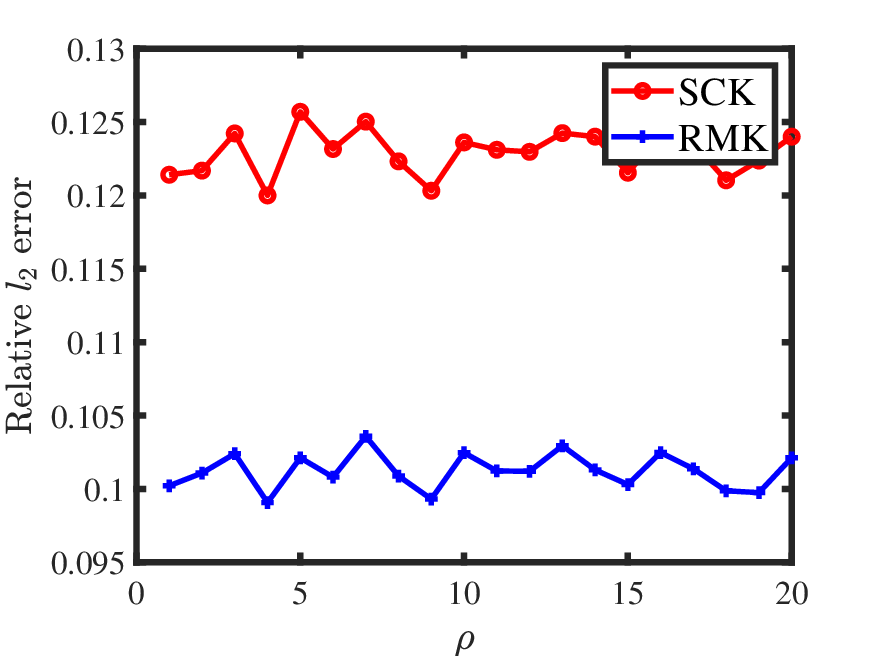}}
\subfigure[Simulation 5(a)]{\includegraphics[width=0.24\textwidth]{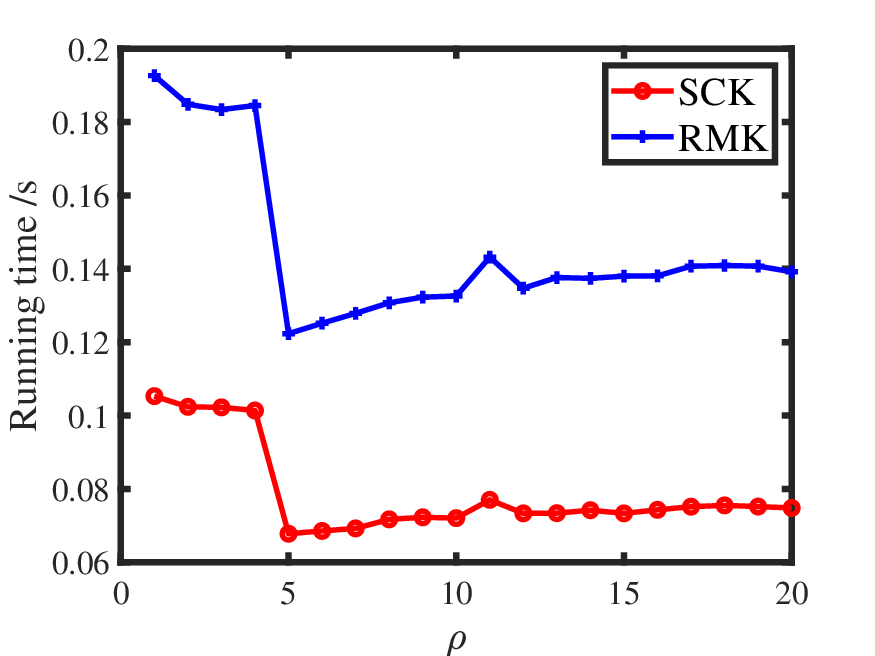}}
\subfigure[Simulation 5(b)]{\includegraphics[width=0.24\textwidth]{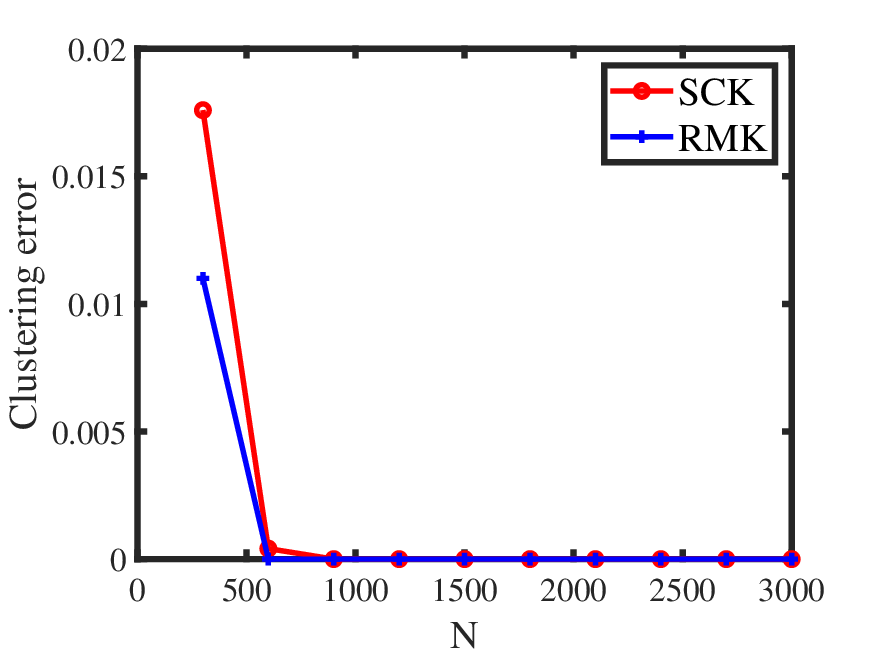}}
\subfigure[Simulation 5(b)]{\includegraphics[width=0.24\textwidth]{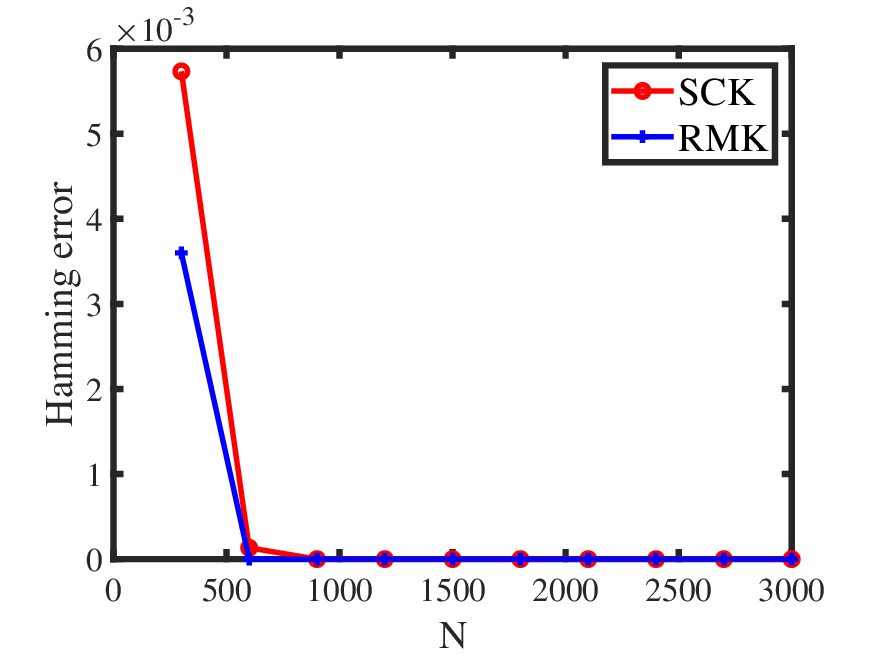}}
\subfigure[Simulation 5(b)]{\includegraphics[width=0.24\textwidth]{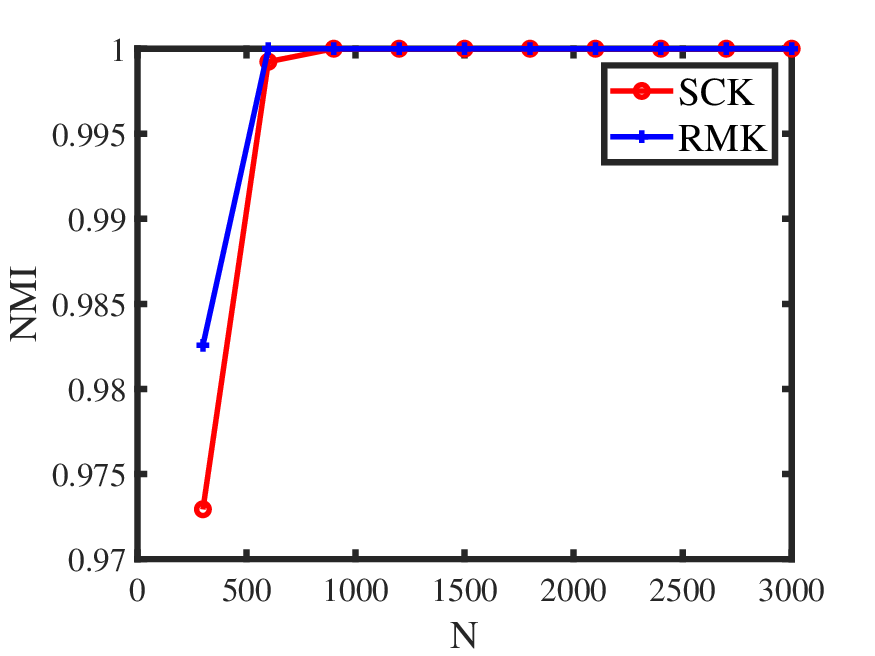}}
\subfigure[Simulation 5(b)]{\includegraphics[width=0.24\textwidth]{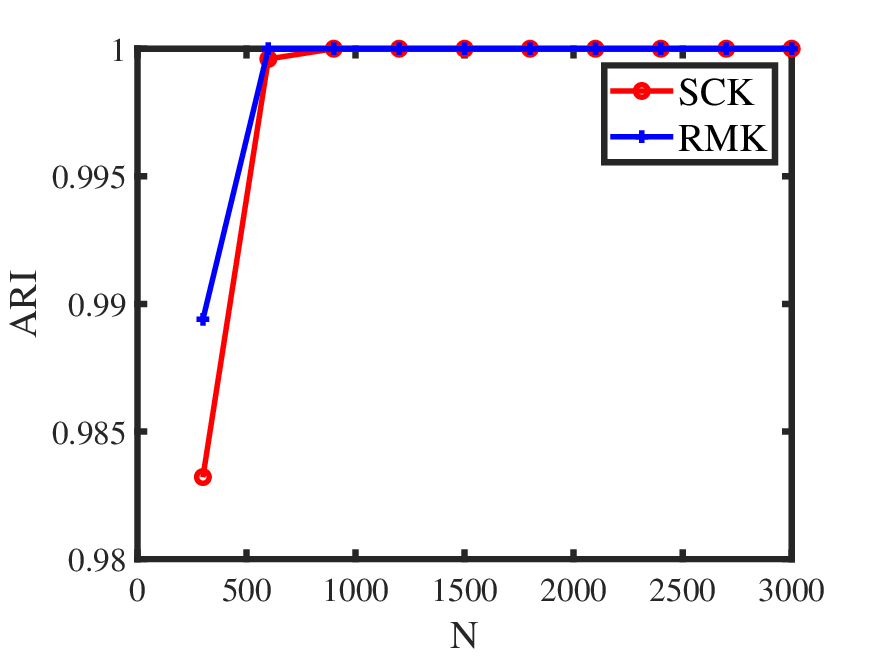}}
\subfigure[Simulation 5(b)]{\includegraphics[width=0.24\textwidth]{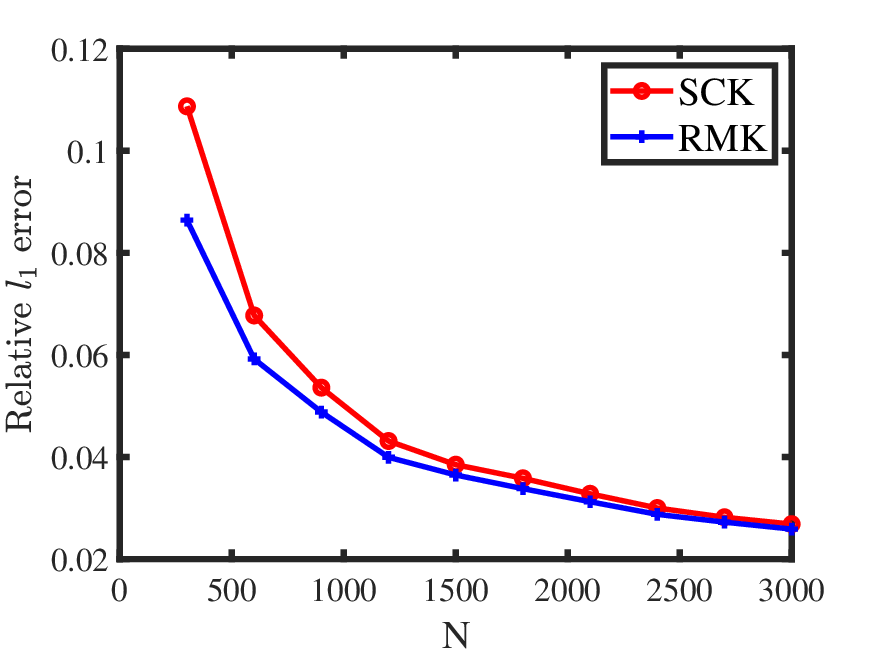}}
\subfigure[Simulation 5(b)]{\includegraphics[width=0.24\textwidth]{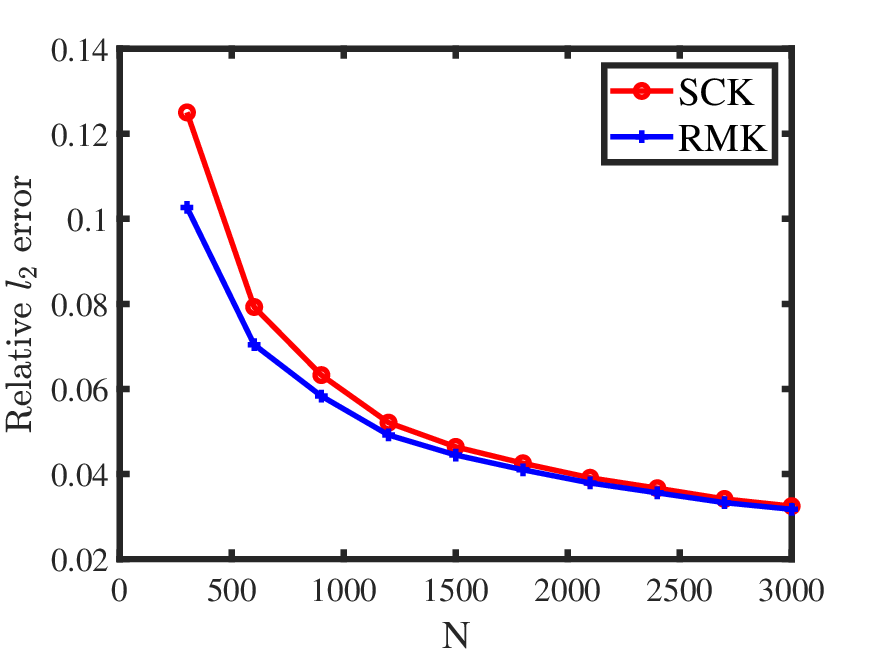}}
\subfigure[Simulation 5(b)]{\includegraphics[width=0.24\textwidth]{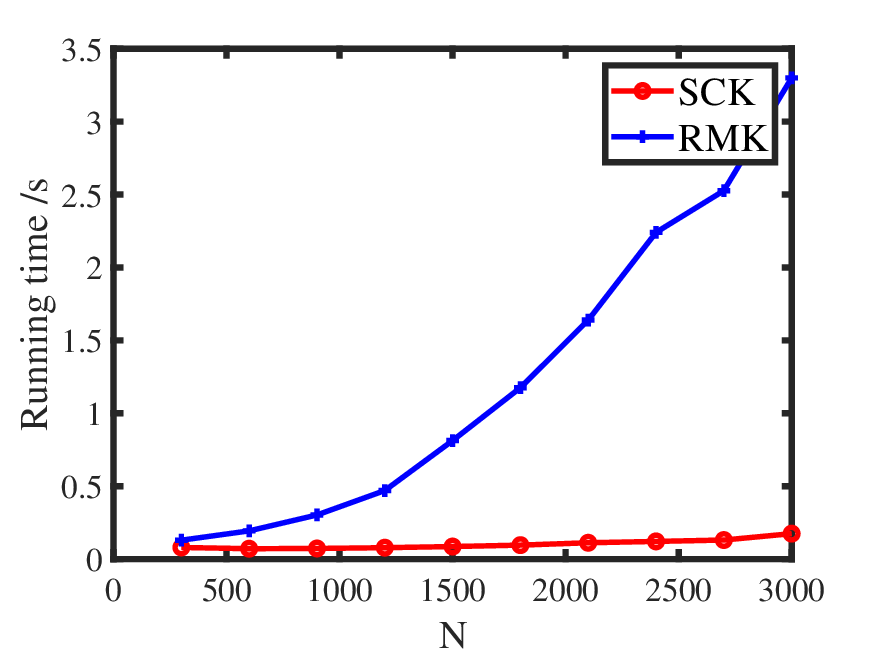}}
\caption{Numerical results of Simulation 5.}
\label{S5} %% label for entire figure
\end{figure}
\subsubsection{Uniform distribution}
When $R(i,j)\sim \mathrm{Uniform}(0,2R_{0}(i,j))$ for $i\in[N], j\in[J]$, we consider the following two simulations.

\textbf{Simulation 6(a): changing $\rho$.} Set $N=120$. According to Example \ref{Uniform}, the scaling parameter $\rho$ can be set as any positive value when $\mathcal{F}$ is Uniform distribution. Here, we let $\rho$ range in $\{1,2,\ldots,20\}$.

\textbf{Simulation 6(b): changing $N$.} Let $\rho=1$ and $N$ range in $\{300,600,\ldots,3000\}$.

Figure \ref{S6} displays the numerical results. We see that increasing $\rho$ does not significantly decrease or increase estimation accuracies of SCK and RMK which verifies our theoretical analysis in Example \ref{Uniform}. For all settings, SCK runs faster than RMK. When increasing $N$, the Clustering error and Hamming error (NMI and ARI) for both approaches are 0 (1), and this suggests that SCK and RMK return the exact estimation of the classification matrix $Z$. This phenomenon occurs because $N$ is set quite large for Uniform distribution in Simulation 6(b). For the estimation of $\Theta$, error rates for both methods decrease when we increase $N$ and this is consistent with our findings following Corollary \ref{AddConditions}.
\begin{figure}
\centering
\subfigure[Simulation 6(a)]{\includegraphics[width=0.24\textwidth]{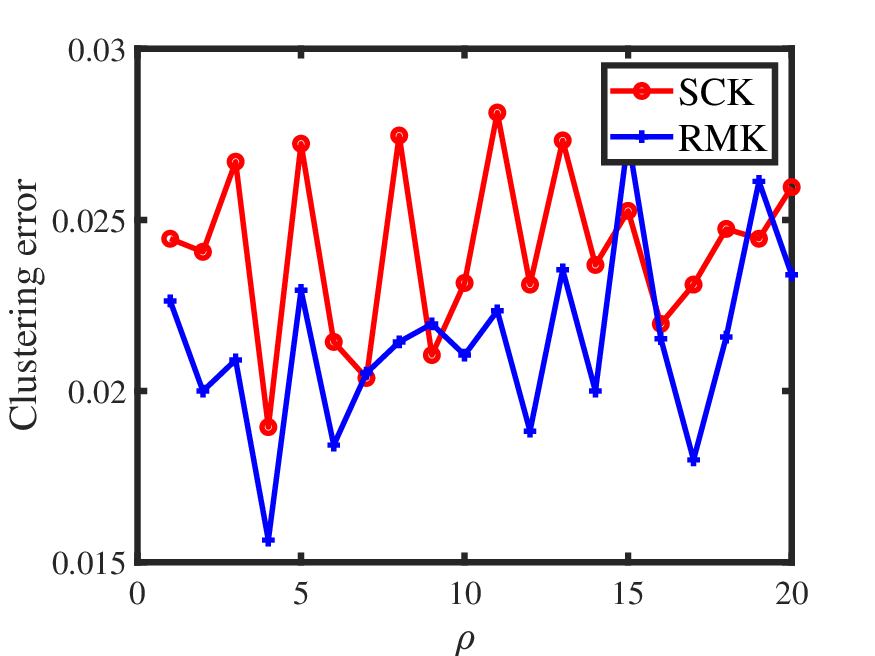}}
\subfigure[Simulation 6(a)]{\includegraphics[width=0.24\textwidth]{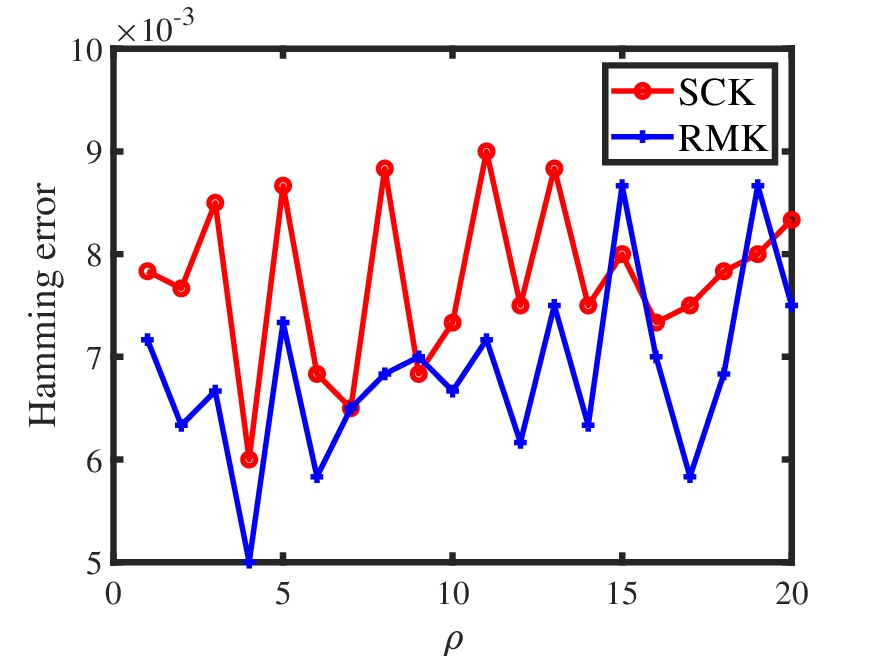}}
\subfigure[Simulation 6(a)]{\includegraphics[width=0.24\textwidth]{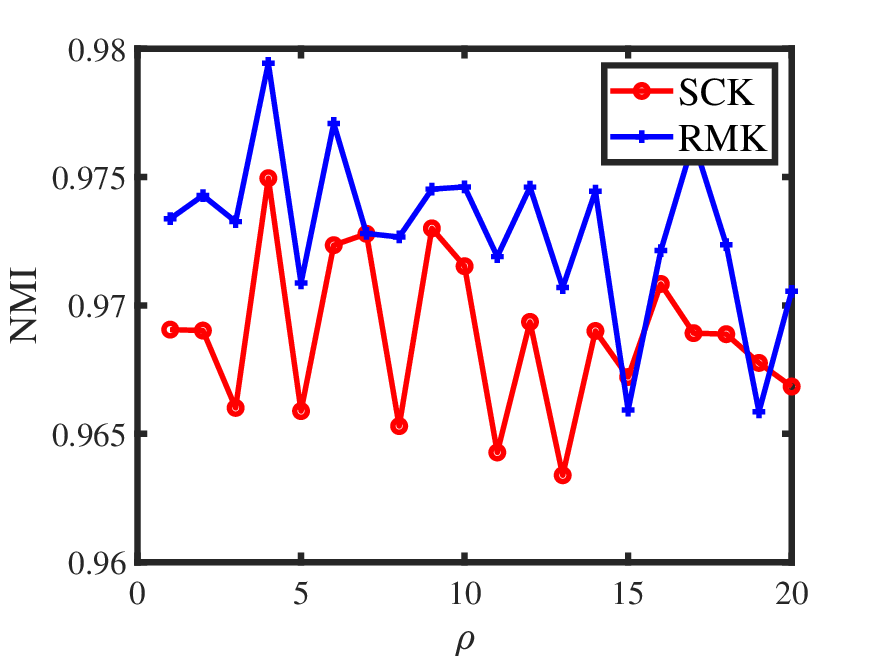}}
\subfigure[Simulation 6(a)]{\includegraphics[width=0.24\textwidth]{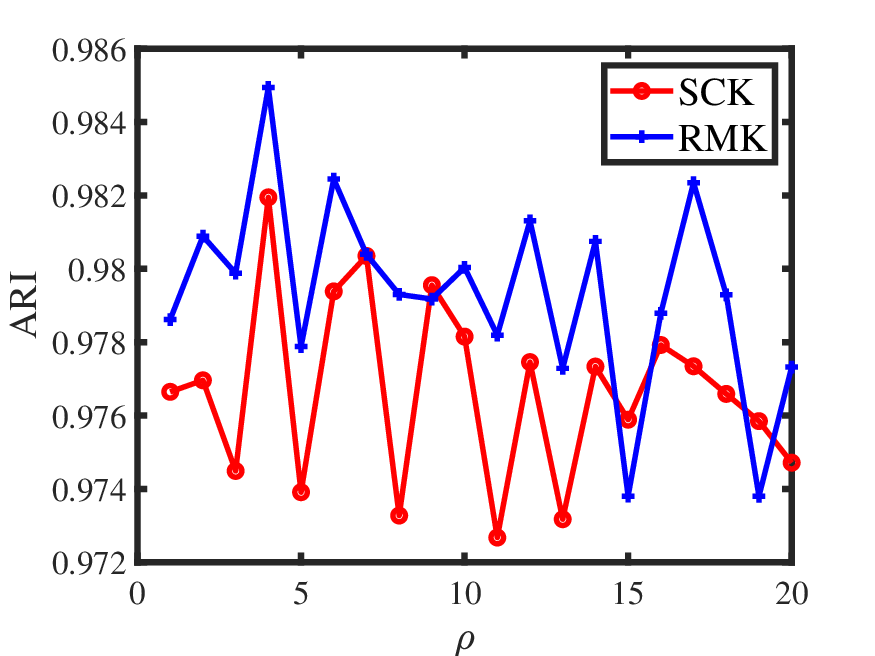}}
\subfigure[Simulation 6(a)]{\includegraphics[width=0.24\textwidth]{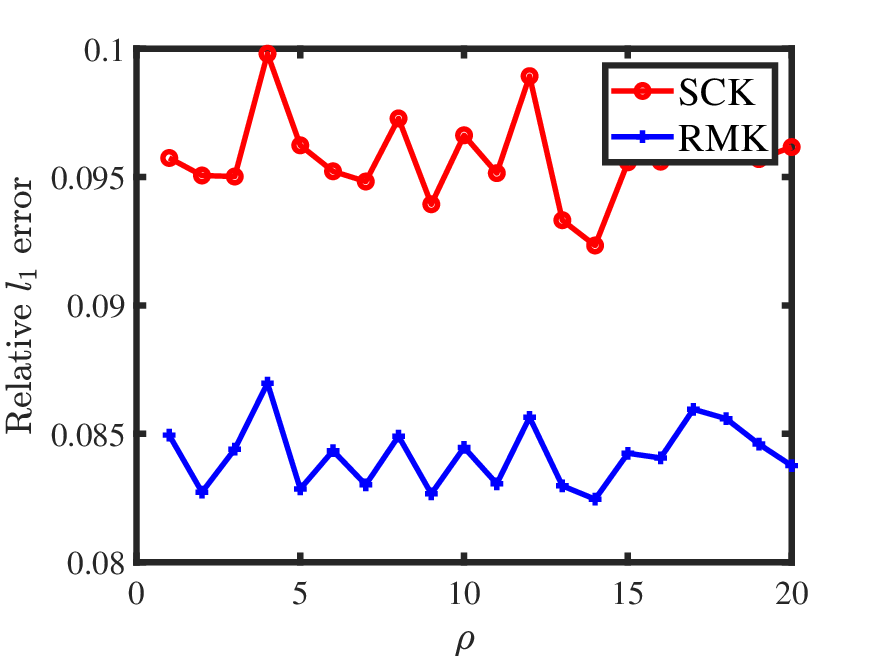}}
\subfigure[Simulation 6(a)]{\includegraphics[width=0.24\textwidth]{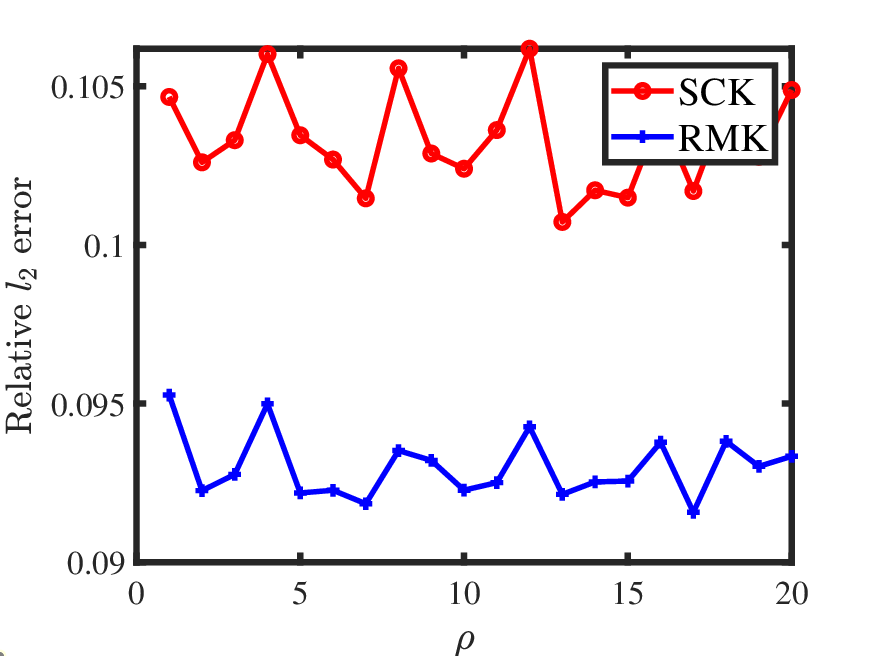}}
\subfigure[Simulation 6(a)]{\includegraphics[width=0.24\textwidth]{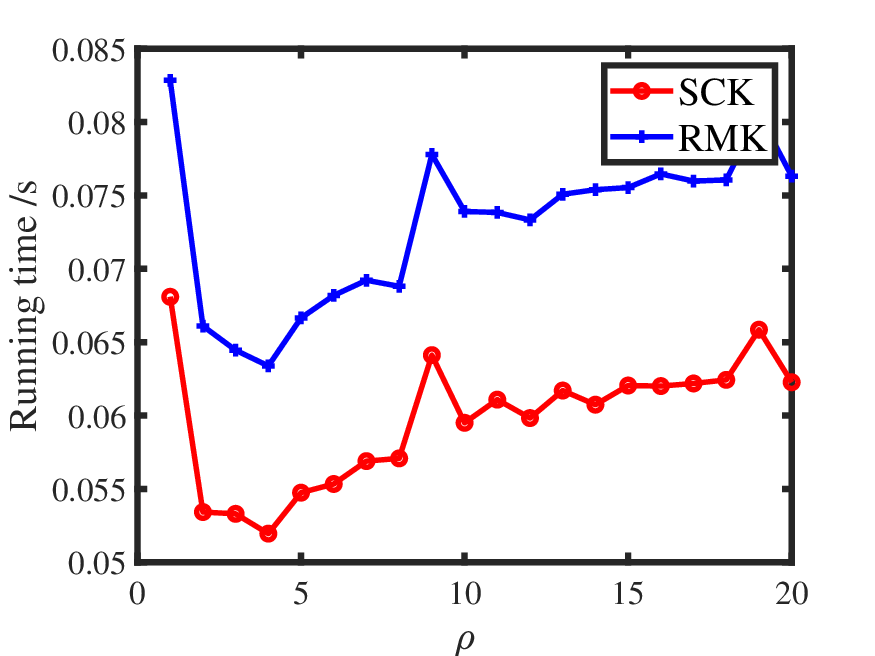}}
\subfigure[Simulation 6(b)]{\includegraphics[width=0.24\textwidth]{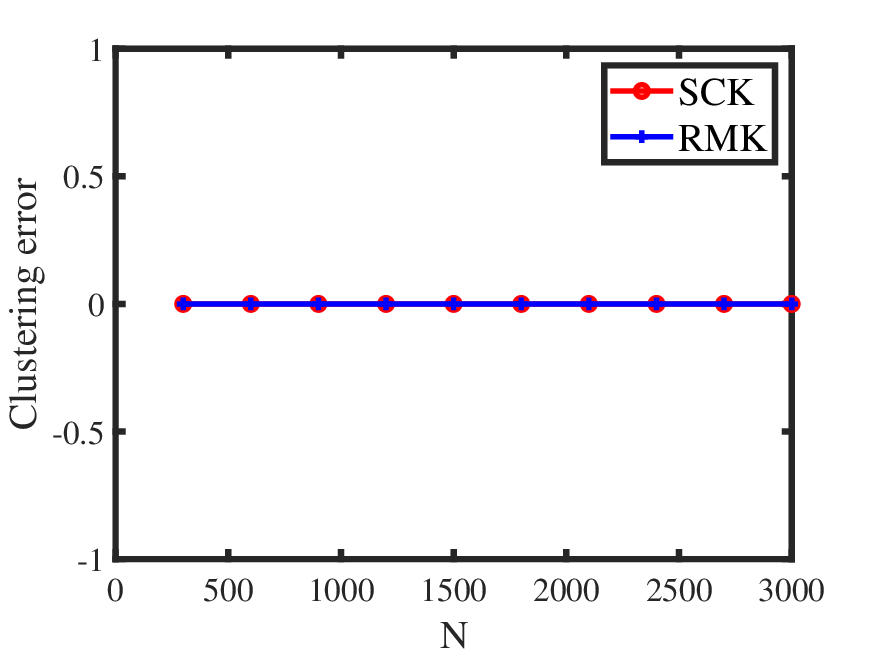}}
\subfigure[Simulation 6(b)]{\includegraphics[width=0.24\textwidth]{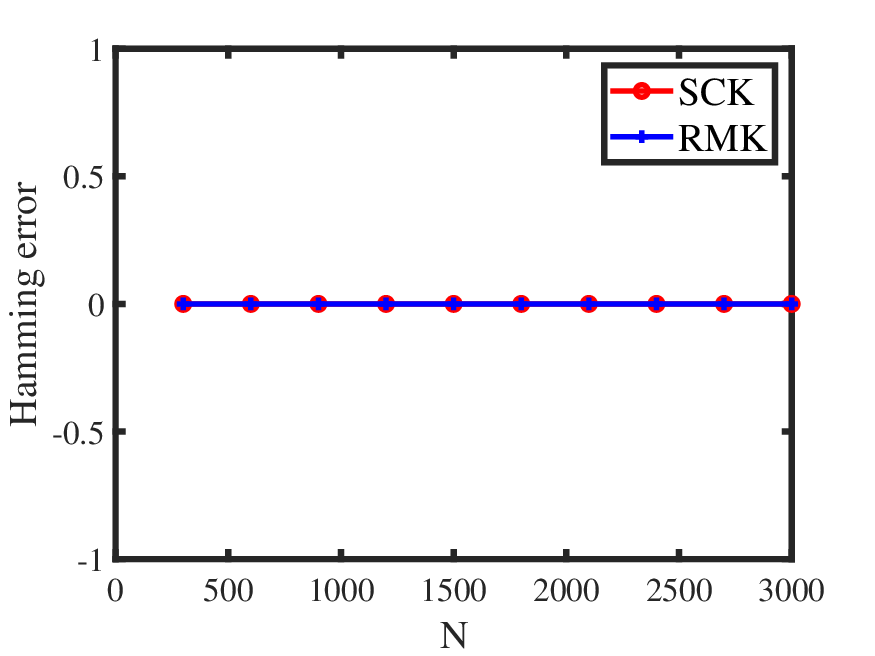}}
\subfigure[Simulation 6(b)]{\includegraphics[width=0.24\textwidth]{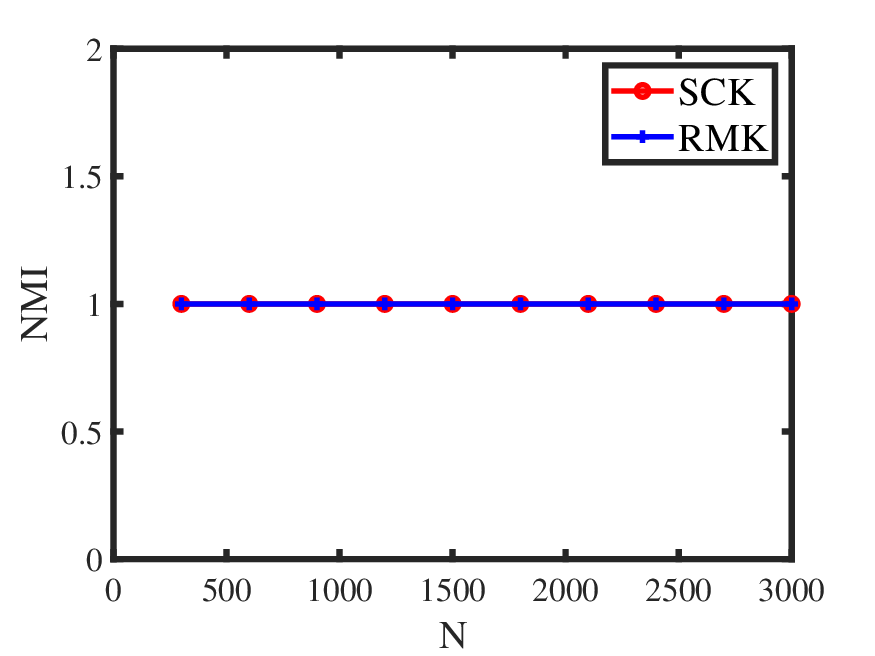}}
\subfigure[Simulation 6(b)]{\includegraphics[width=0.24\textwidth]{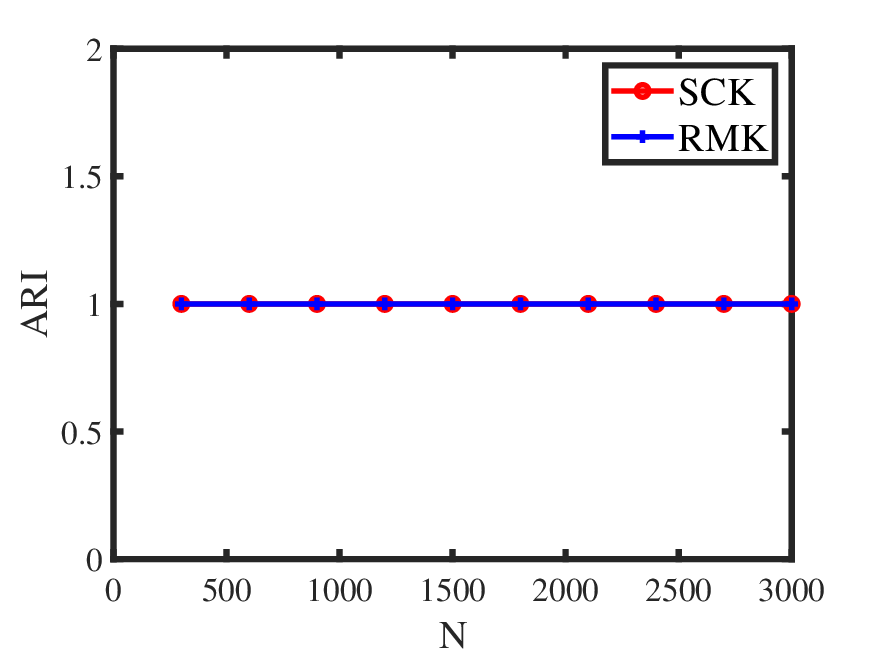}}
\subfigure[Simulation 6(b)]{\includegraphics[width=0.24\textwidth]{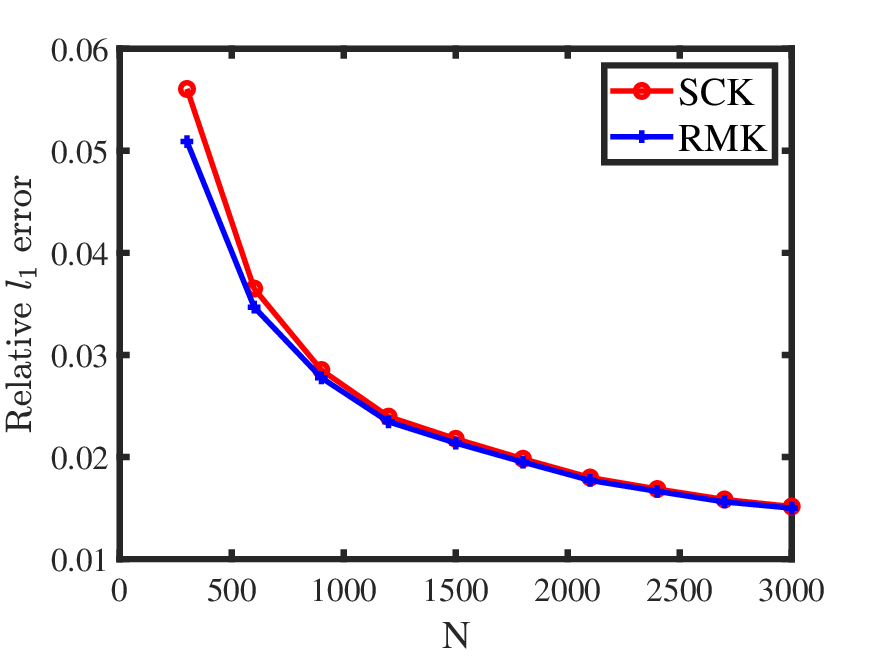}}
\subfigure[Simulation 6(b)]{\includegraphics[width=0.24\textwidth]{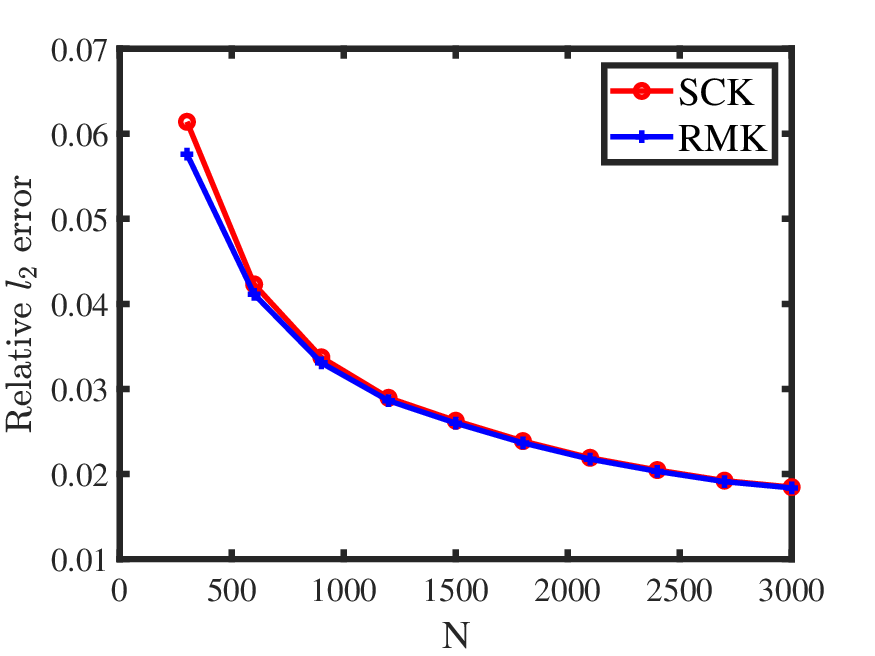}}
\subfigure[Simulation 6(b)]{\includegraphics[width=0.24\textwidth]{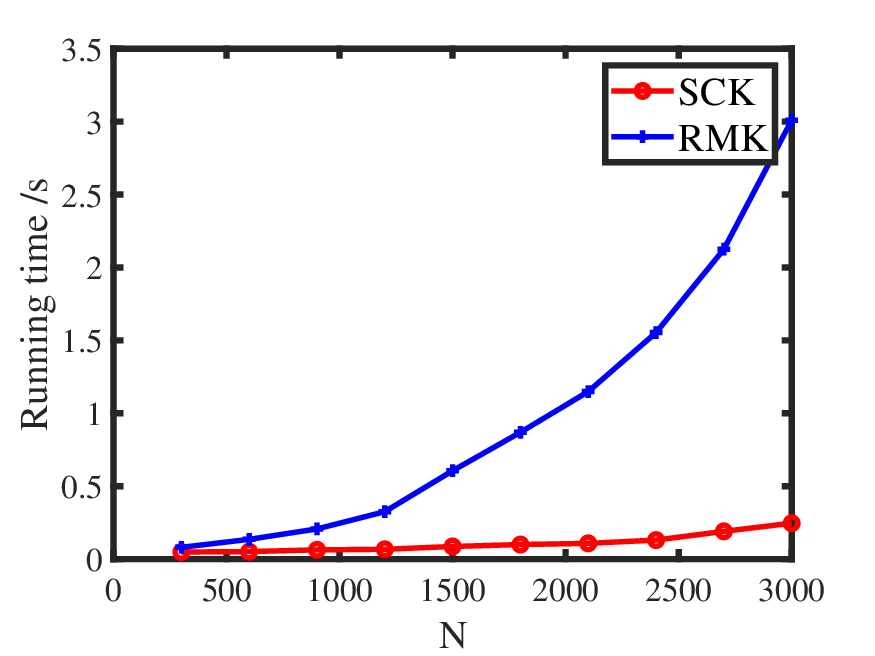}}
\caption{Numerical results of Simulation 6.}
\label{S6} %% label for entire figure
\end{figure}
\subsubsection{Signed response matrix}
For signed response matrices when $\mathbb{P}(R(i,j)=1)=\frac{1+R_{0}(i,j)}{2}$ and $\mathbb{P}(R(i,j)=-1)=\frac{1-R_{0}(i,j)}{2}$ for $i\in[N], j\in[J]$, we consider the following two simulations.

\textbf{Simulation 7(a): changing $\rho$.} Set $N=500$. Recall that the theoretical range of the scaling parameter $\rho$ is $(0,1]$ for signed response matrices according to our analysis in Example \ref{Signed}, here, we let $\rho$ range in $\{0.1,0.2,\ldots,1\}$.

\textbf{Simulation 7(b): changing $N$.} Let $\rho=0.2$ and $N$ range in $\{1000,2000,\ldots,5000\}$.

Figure \ref{S7} shows the results. We see that increasing $\rho$ and $N$ improves the estimation accuracies of SCK and RMK, which confirms our analysis in Example \ref{Signed} and Corollary \ref{AddConditions}. Additionally, it is easy to see that both algorithms enjoy similar performances in estimating $Z$ and $\Theta$, and SCK requires less computation time compared to RMK.
\begin{figure}
\centering
\subfigure[Simulation 7(a)]{\includegraphics[width=0.24\textwidth]{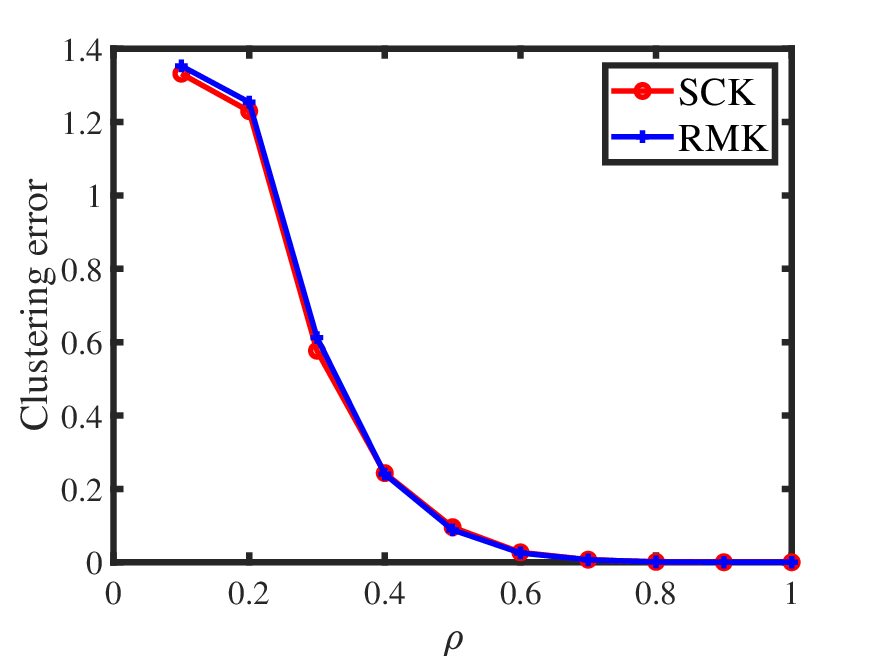}}
\subfigure[Simulation 7(a)]{\includegraphics[width=0.24\textwidth]{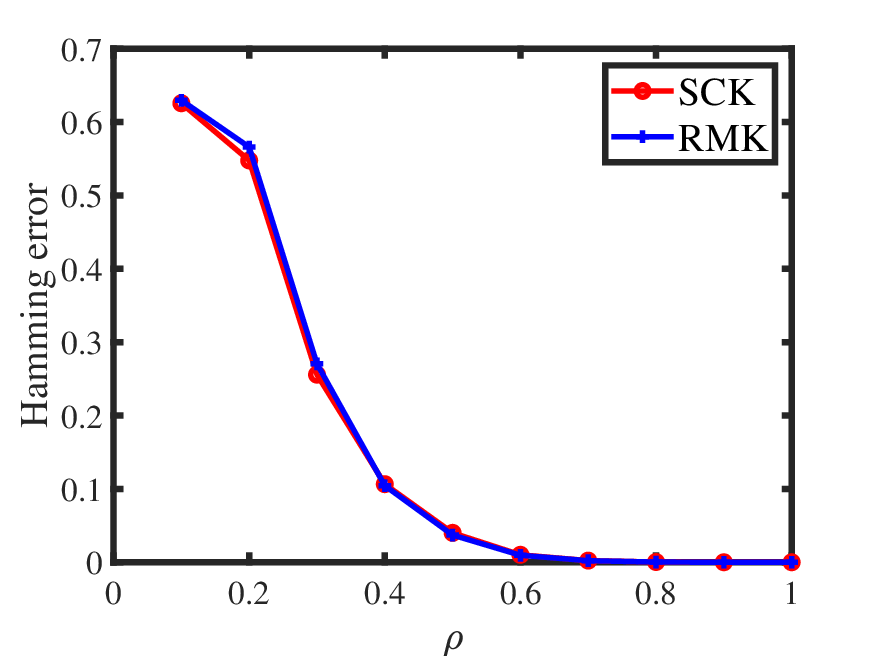}}
\subfigure[Simulation 7(a)]{\includegraphics[width=0.24\textwidth]{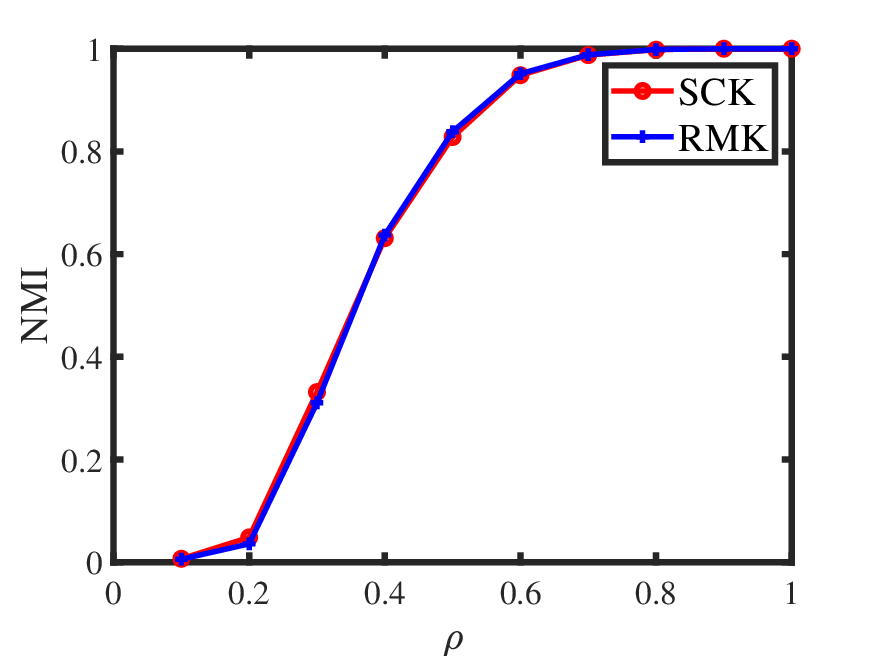}}
\subfigure[Simulation 7(a)]{\includegraphics[width=0.24\textwidth]{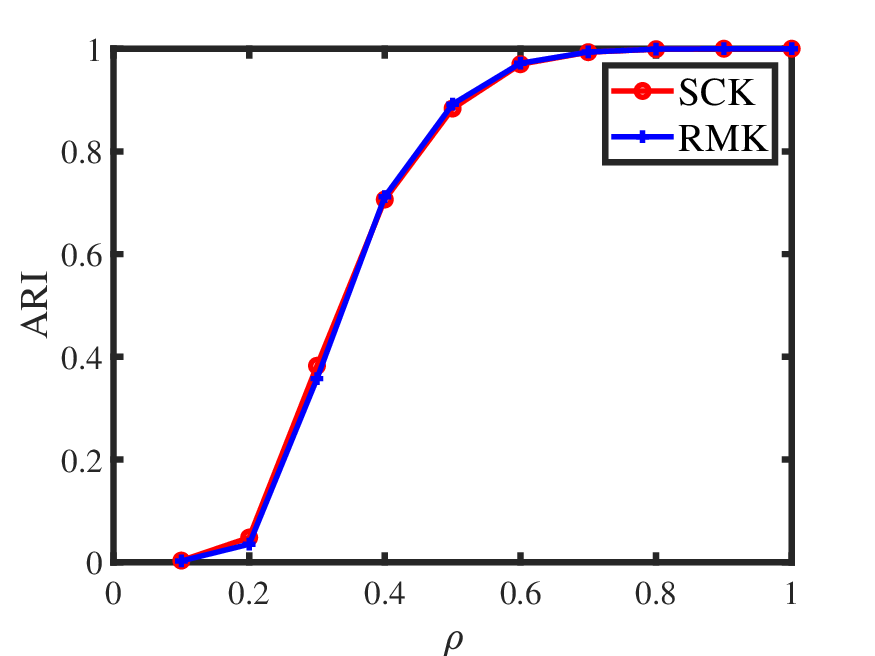}}
\subfigure[Simulation 7(a)]{\includegraphics[width=0.24\textwidth]{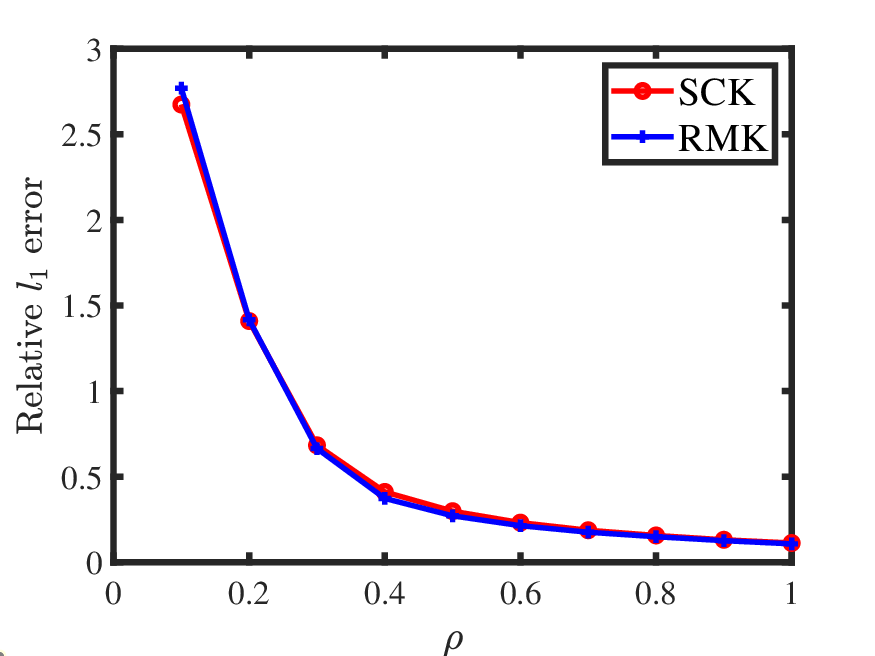}}
\subfigure[Simulation 7(a)]{\includegraphics[width=0.24\textwidth]{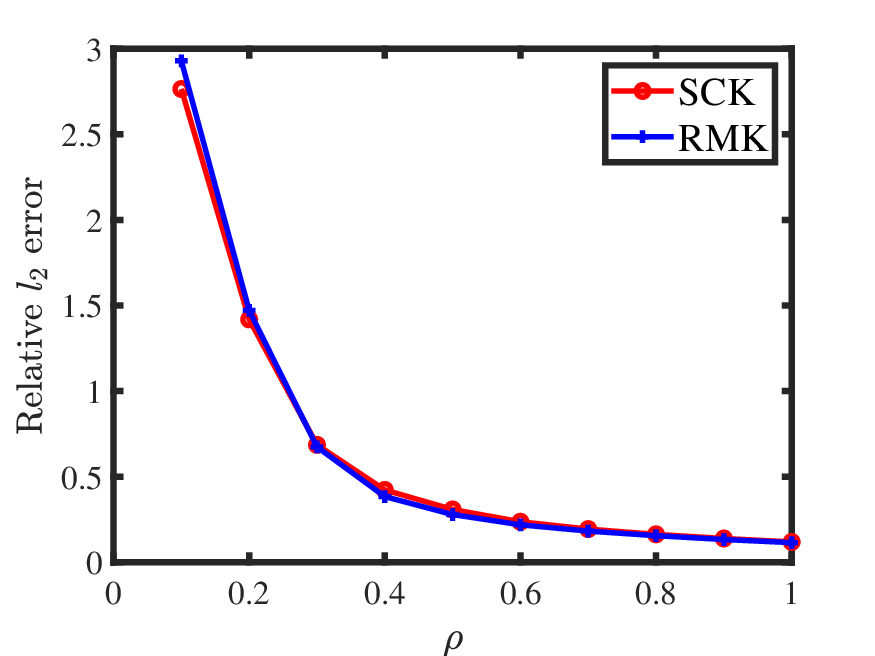}}
\subfigure[Simulation 7(a)]{\includegraphics[width=0.24\textwidth]{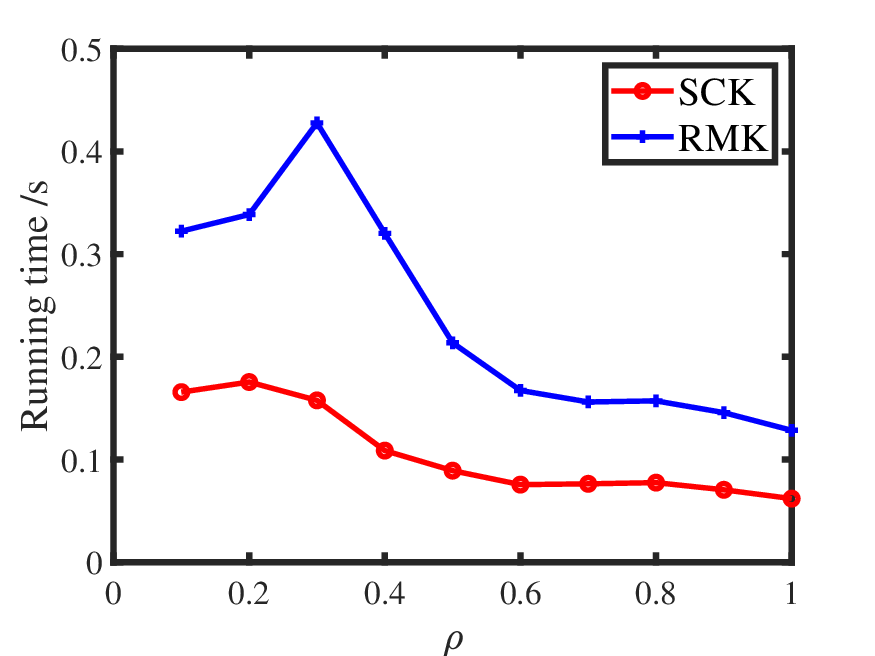}}
\subfigure[Simulation 7(b)]{\includegraphics[width=0.24\textwidth]{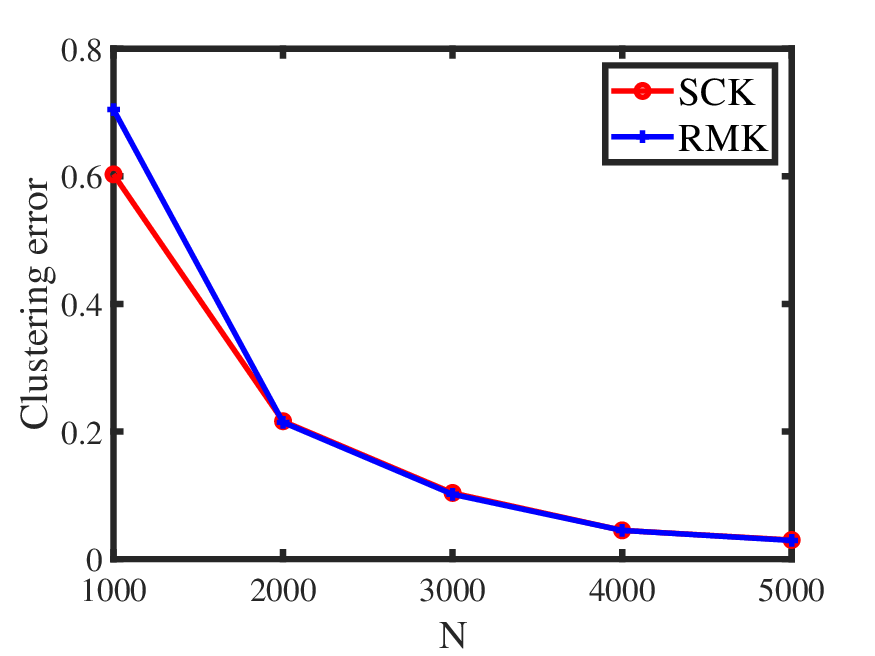}}
\subfigure[Simulation 7(b)]{\includegraphics[width=0.24\textwidth]{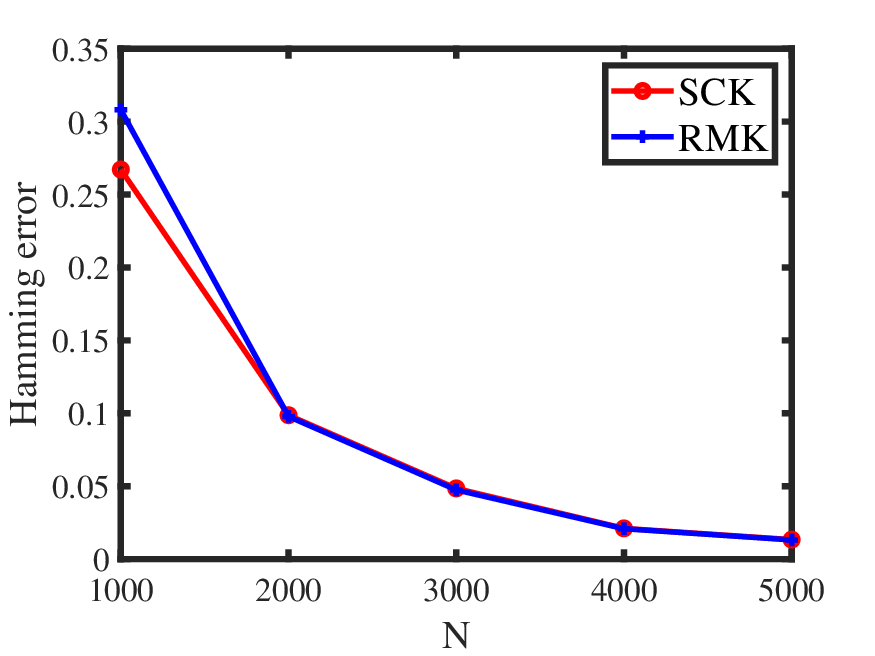}}
\subfigure[Simulation 7(b)]{\includegraphics[width=0.24\textwidth]{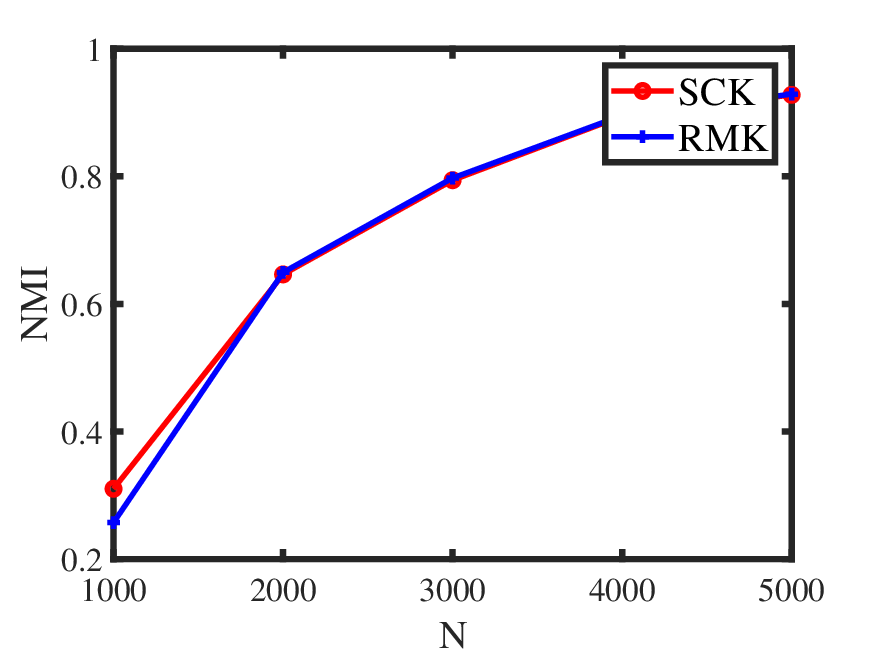}}
\subfigure[Simulation 7(b)]{\includegraphics[width=0.24\textwidth]{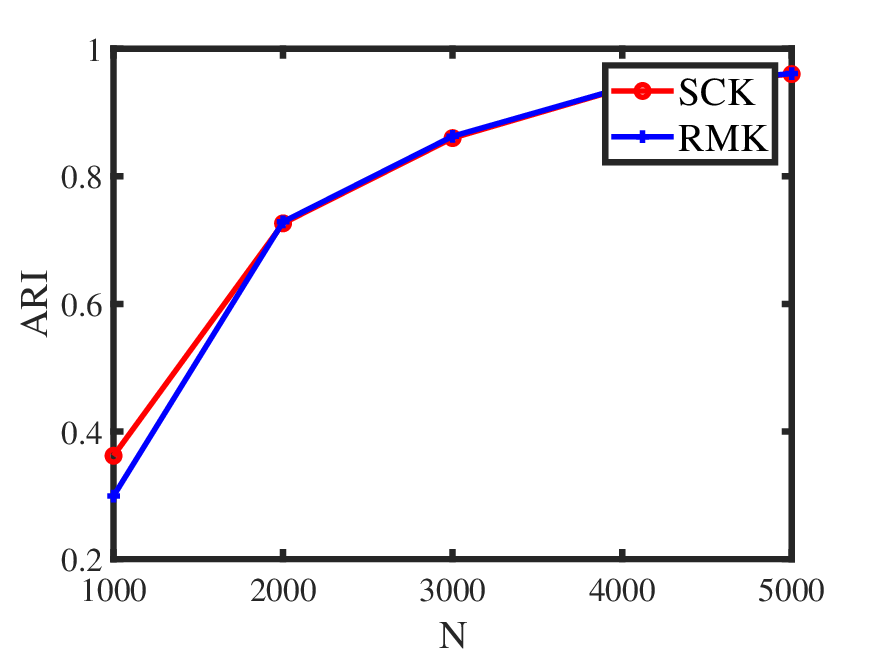}}
\subfigure[Simulation 7(b)]{\includegraphics[width=0.24\textwidth]{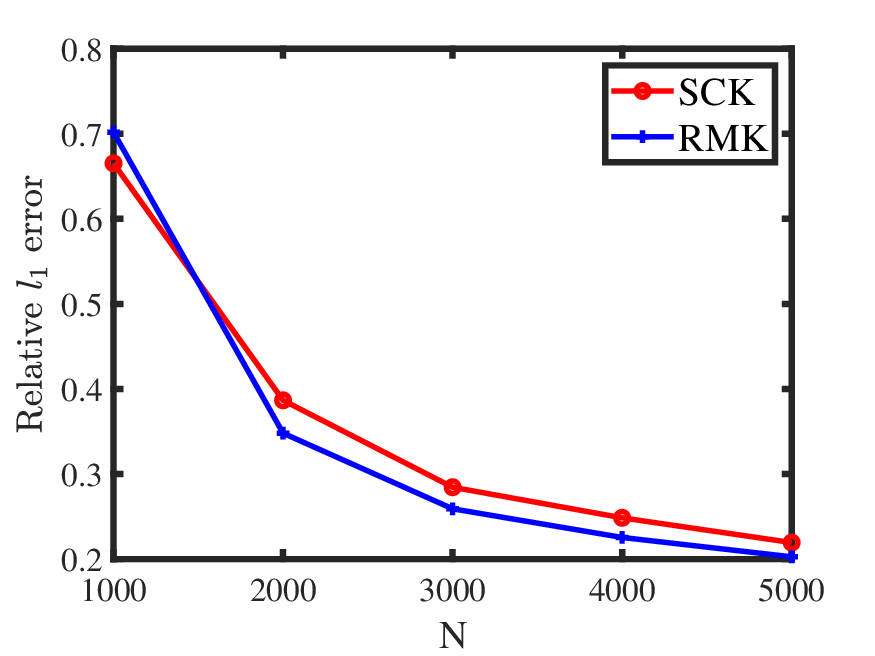}}
\subfigure[Simulation 7(b)]{\includegraphics[width=0.24\textwidth]{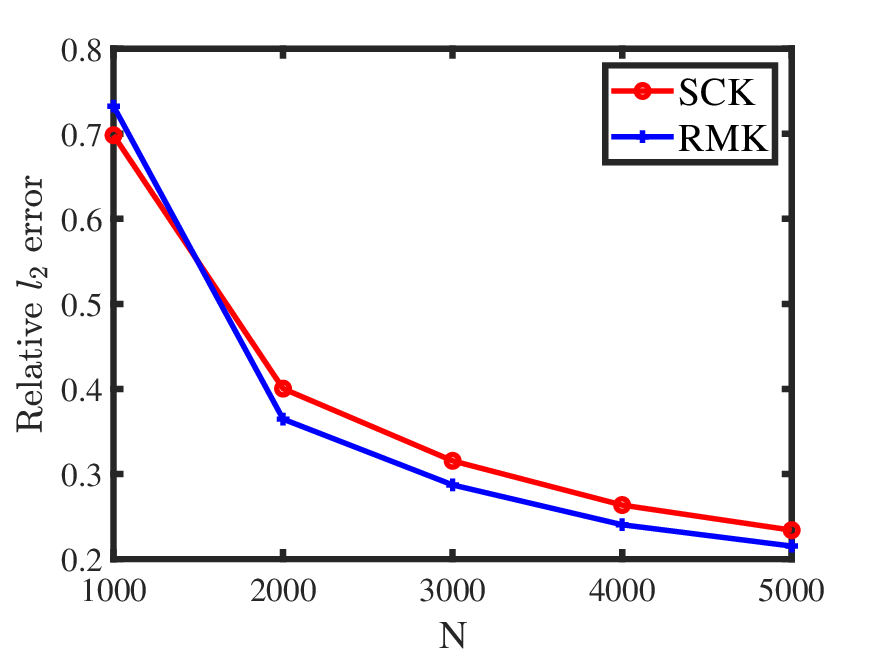}}
\subfigure[Simulation 7(b)]{\includegraphics[width=0.24\textwidth]{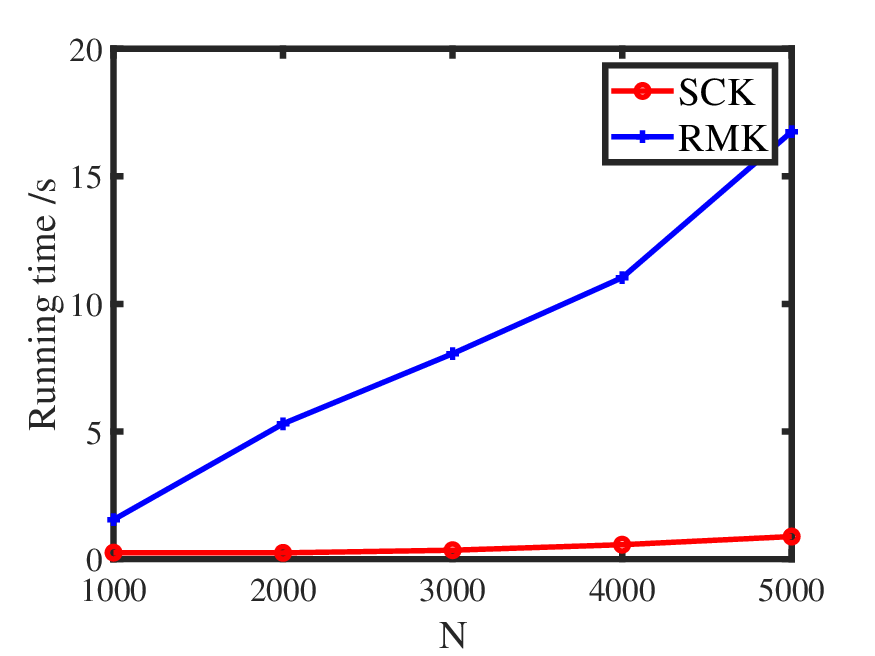}}
\caption{Numerical results of Simulation 7.}
\label{S7} %% label for entire figure
\end{figure}
\subsubsection{Simulated weighted response matrices}
For visuality, we plot two weighted response matrices $R$ generated from the Normal distribution and the Poisson distribution under WLCM. Let $K=2,N=16,J=10,\sigma^{2}=1, \ell(i)=1, \ell(i+8)=2$ for $i\in[8]$, and $\Theta(j,1)=100, \Theta(j,2)=110-10j$ for $j\in[10]$. Because $R_{0}=Z\Theta'$ has been set, we can generate $R$ under different distributions with expectation $R_{0}$ under the proposed WLCM model. Here, we consider the following two settings.

\textbf{Simulation 8 (a):} When $R(i,j)\sim\mathrm{Normal}(R_{0}(i,j),\sigma^{2})$ for $i\in[N],j\in[J]$, the left panel of Figure \ref{S8} displays a weighted response matrix $R$ generated from Simulation 8 (a).

\textbf{Simulation 8 (b):} When $R(i,j)\sim\mathrm{Poisson}(R_{0}(i,j))$ for $i\in[N],j\in[J]$, the right panel of Figure \ref{S8} provides a $R$ generated from Simulation 8 (b).

\begin{figure}
\centering
\subfigure[Normal distribution]{\includegraphics[width=0.45\textwidth]{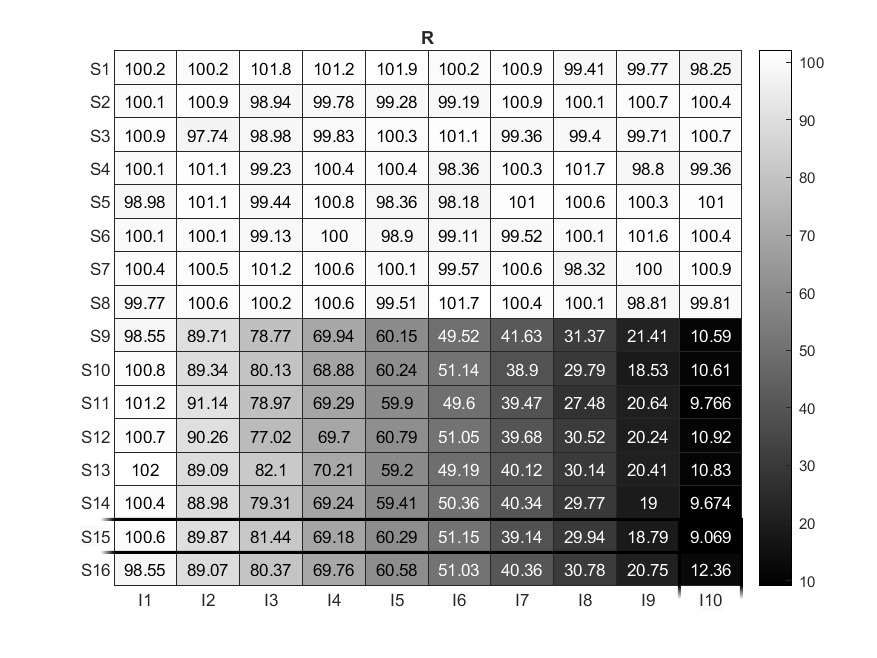}}
\subfigure[Poisson distribution]{\includegraphics[width=0.45\textwidth]{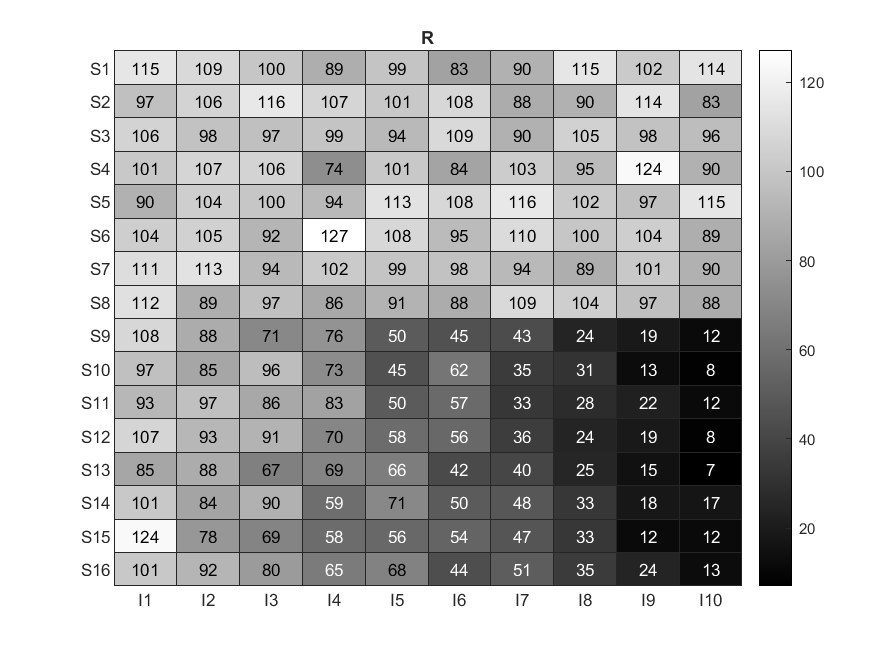}}
\caption{Illustration for weighted response matrices $R$ generated from WLCM. In both panels, S$i$ denote subject $i$ and I$j$ denotes item $j$ for $i\in[16],j\in[10]$.}
\label{S8} %% label for entire figure
\end{figure}

\begin{table}[h!]
\footnotesize
	\centering
	\caption{Error rates of SCK and RMK for $R$ in Figure \ref{S8}. Values outside (and inside) the brackets are results for $R$ in panel (a) (and (b)) of Figure \ref{S8}.}
	\label{ErrorRatesSimulatedR}
%	\resizebox{\columnwidth}{!}{
	\begin{tabular}{cccccccccccc}
\hline\hline&Clustering error&Hamming error&NMI&ARI&Relative $l_{1}$ error&Relative $l_{2}$ error\\
\hline
SCK&0 (0)&0 (0)&1 (1)&1 (1)&0.0024 (0.0254)&0.0032 (0.0295)\\
RMK&0 (0)&0 (0)&1 (1)&1 (1)&0.0024 (0.0245)&0.0032 (0.0295)\\
\hline\hline
\end{tabular}%}
\end{table}
\begin{figure}
\centering
{\includegraphics[width=0.6\textwidth]{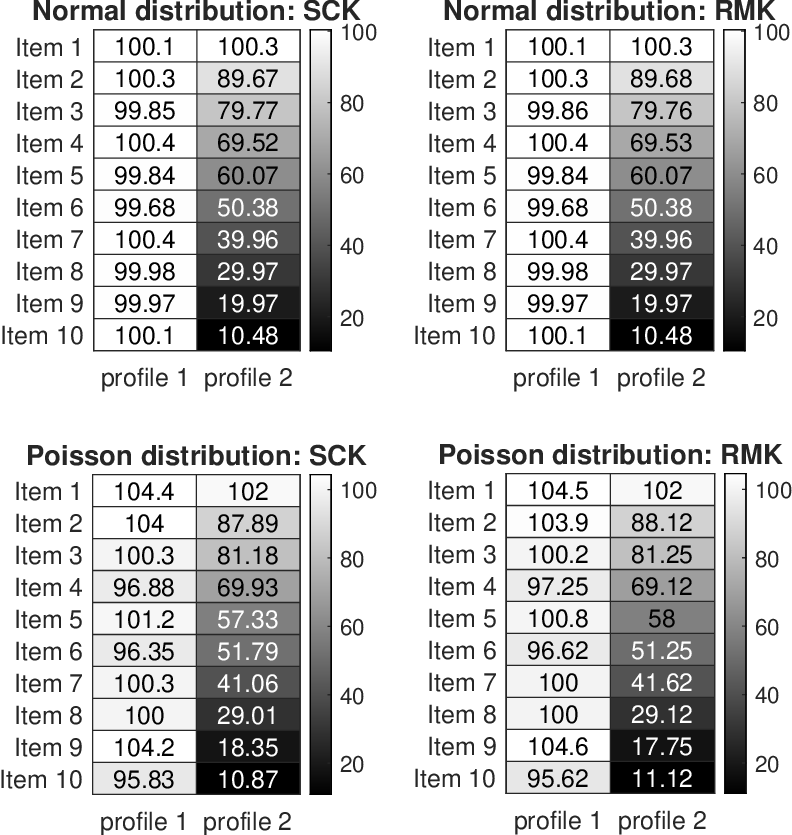}}
\caption{Heatmap of the estimated item parameter matrix $\hat{\Theta}$ of SCK and RMK for $R$ in Figure \ref{S8}.}
\label{heatmapSimulatedThetaHat} %% label for entire figure
\end{figure}

Error rates of the proposed methods for the observed weighted response matrices provided in Figure \ref{S8} are displayed in Table \ref{ErrorRatesSimulatedR}. We also plot the estimated item matrix $\hat{\Theta}$ for both methods in Figure \ref{heatmapSimulatedThetaHat}.  We see that both approaches exactly recover $Z$ from $R$ while they estimate $\Theta$ with slight perturbations. Meanwhile, since $Z,\Theta,$ and $K$ are known for this simulation, $R$ provided in Figure \ref{S8} can be regarded as benchmark weighted response matrices, and readers can apply SCK and RMK (and other methods) to $R$ to check their effectiveness in estimating $Z$ and $\Theta$.
\section{Real data applications}\label{sec6realdata}
As the main goal of this paper is to introduce the proposed WLCM model and the SCK algorithm for weighted response matrices, this section reports empirical results on two data sets with weighted response matrices. Because the true classification matrix and the true item parameter matrix are unknown for real data, and SCK runs much faster than RMK, we only report the outcomes of the SCK approach. For real-world datasets, the number of extreme latent profiles $K$ is often unknown. Here, we infer $K$ for real-world weighted response matrices using the following strategy:
\begin{align}\label{EstimateKRealData}
K=\mathrm{arg~min}_{k\in[\mathrm{rank}(R)]}\|R-\hat{Z}\hat{\Theta}'\|,
\end{align}
where $\hat{Z}$ and $\hat{\Theta}$ are outputs in Algorithm \ref{alg:SCK} with inputs $R$ and $k$. The method specified in Equation (\ref{EstimateKRealData}) selects $K$ by picking the one that minimizes the spectral norm difference between $R$ and $\hat{Z}\hat{\Theta}'$. The determination of the number of extreme latent profiles K in our WLCM model in a rigorous manner with theoretical guarantees remains a future direction.
\subsection{International Personality Item Pool (IPIP) personality test data}
\textbf{Background.} We apply SCK to an experiment personality test data called the International Personality Item Pool (IPIP) personality test, which is obtainable for download at \url{https://openpsychometrics.org/_rawdata/}. This data consists of 1005 subjects and 40 items. The IPIP data also records the age and gender of each subject. After dropping subjects with missing entries in their responses, age, or gender, and dropping two subjects that are neither male nor female, there are 896 subjects left, i.e., $N=896, J=40$. All items are rated on a 5-point scale, where 1=Strongly disagree, 2=Disagree, 3=Neither agree not disagree, 4=Agree, 5=Strongly agree, i.e., $R\in\{1,2,3,4,5\}^{896\times40}$, a weighted response matrix. Items 1-10 measure the personality factor Assertiveness (short as ``AS”); Items 11-20 measure the personality factor Social confidence (short as “SC”); Items 21-30 measure the personality factor Adventurousness (short as ``AD”); Items 31-40 measure the personality factor Dominance (short as ``DO”). The details of each item are depicted in Figure \ref{IPIPHeatmap}.

\textbf{Analysis.} We apply Equation (\ref{EstimateKRealData}) to infer $K$ for the IPIP dataset and find that the estimated value of $K$ is 3. We then apply the SCK algorithm to the response matrix $R$ with $K=3$ to obtain the $896\times 3$ matrix $\hat{Z}$ and the $40\times3$ matrix $\hat{\Theta}$. The running time for SCK on this dataset is around 0.2 seconds.
\begin{table}[h!]
\footnotesize
	\centering
	\caption{Basic information for each estimated extreme latent profile obtained from $\hat{Z}$ for the IPIP data.}
	\label{ResultsBasicIPIP}
%	\resizebox{\columnwidth}{!}{
	\begin{tabular}{cccccccccccc}
\hline\hline&profile 1&profile 2&profile 3\\
\hline
Size&276&226&394\\
\#Male&123&129&241\\
\#Female&153&97&153\\
Average age of male&35.9837&32.8240&35.9004\\
Average age of female&35.5425&31.3814&38.7059\\
\hline\hline
\end{tabular}%}
\end{table}

\textbf{Results.} For convenience, we denote the estimated three extreme latent profiles as profile 1, profile 2, and profile 3. Based on $\hat{Z}$ and the information of age and gender, we can obtain some basic information (shown in Table \ref{ResultsBasicIPIP}) such as the size of each profile, number of males (females) in each profile, and the average age of males (and females) in each profile. From Table \ref{ResultsBasicIPIP}, we see that the number of females is larger than that of males for profile 1 while profiles 2 and 3 have more males. The average age of males (and females) in profile 2 is smaller than that of profiles 1 and 3 while the average age of females in profile 3 is the largest. We can also obtain the average point on each item for males (and females) in each estimated extreme latent profile and the results are shown in panel (a) (and panel (b)) of Figure \ref{IPIPHeatmap}. We observe that males in profile 3 tend to be more confident, more creative, more social, and more open to changes than males in profiles 1 and 2; males in profile 3 are more (less) dominant than males in profile 1 (profile 2). Males in profile 2 are more confident\&creative\&social\&open to changes\&dominant than males in profile 1. Meanwhile, in the three estimated extreme latent profiles, females enjoy similar personalities to males. We also find that males in profile 3 (profile 2) are more (less) confident\&creative\&social\&open to changes\&dominant than females in profile 3 (profile 2). Furthermore, it is interesting to see that, though males in profile 1 are less confident\&creative\&social\&open to changes than females in profile 1, they are more dominant than females in profile 1. We also plot the average point on each item in each estimated extreme latent profile regardless of gender in panel (c) of Figure \ref{IPIPHeatmap} where we can draw similar conclusions as that for male. In panel (d) of Figure \ref{IPIPHeatmap}, we plot the heatmap of the estimated item parameter matrix $\hat{\Theta}$. By comparing panel (c) with panel (d), we see that the $(j,k)$-th element in the matrix shown in panel (c) is close to $\hat{\Theta}(j,k)$ for $j\in[40], k\in[3]$. Such a result implies that the behavior differences on each item for every extreme latent profile are governed by the item parameter matrix $\Theta$.
\begin{rem}\label{Why}
Recall that $\mathbb{E}(R)=R_{0}=Z\Theta'$ under the WLCM model, we have $R_{0}(i,j)=\Theta(j,\ell(i))$ for $i\in[N],j\in[J]$. Then we have $\sum_{\ell(i)\equiv k}R_{0}(i,j)=\sum_{\ell(i)\equiv k}\Theta(j,\ell(i))=\sum_{\ell(i)\equiv k}\Theta(j,k)=N_{k}\Theta(j,k)$ which gives that $\Theta(j,k)=\frac{\sum_{\ell(i)\equiv k}R_{0}(i,j)}{N_{k}}$ for $k\in[K]$. This interprets why the average value on the $j$-th item in the $k$-th estimated extreme latent profile approximates $\hat{\Theta}(j,k)$ for $j\in[J],k\in[K]$.
\end{rem}
\begin{figure}
\centering
\subfigure[Average point on each item for male in each estimated extreme latent profile]{\includegraphics[width=0.45\textwidth]{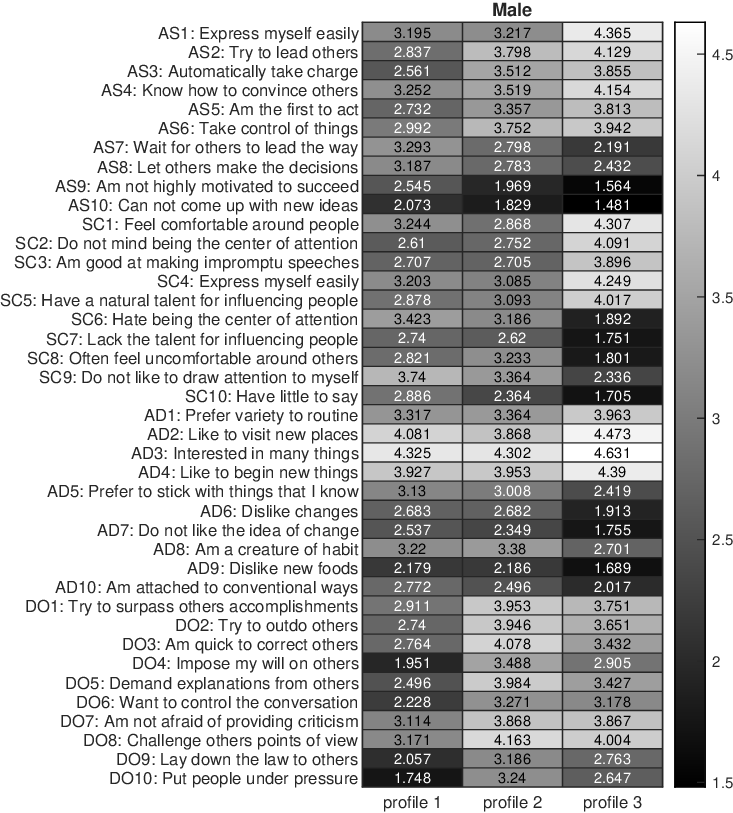}}
\subfigure[Average point on each item for female in each estimated extreme latent profile]{\includegraphics[width=0.45\textwidth]{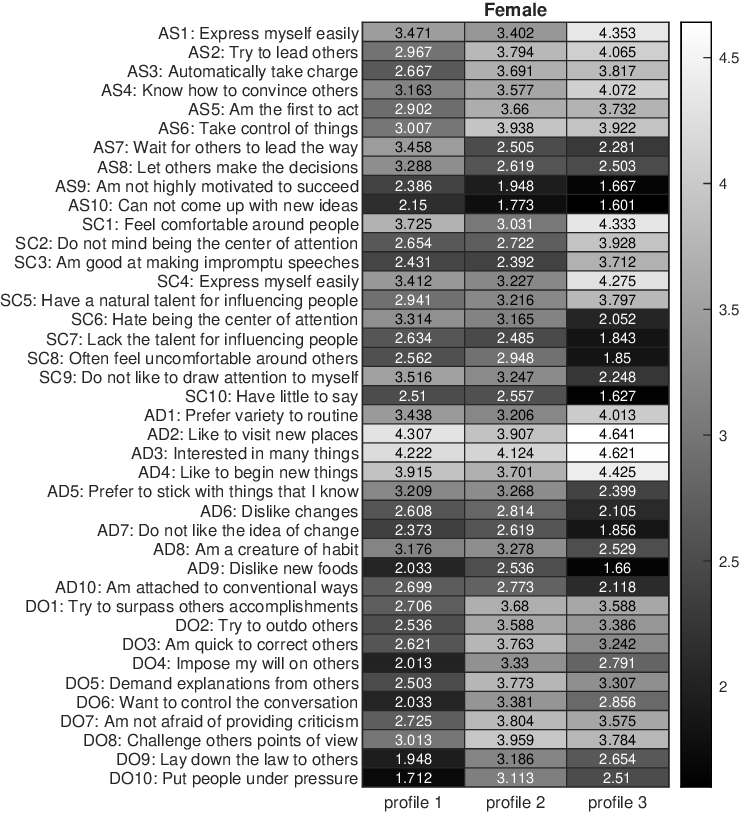}}
\subfigure[Average point on each item in each estimated extreme latent profile]{\includegraphics[width=0.45\textwidth]{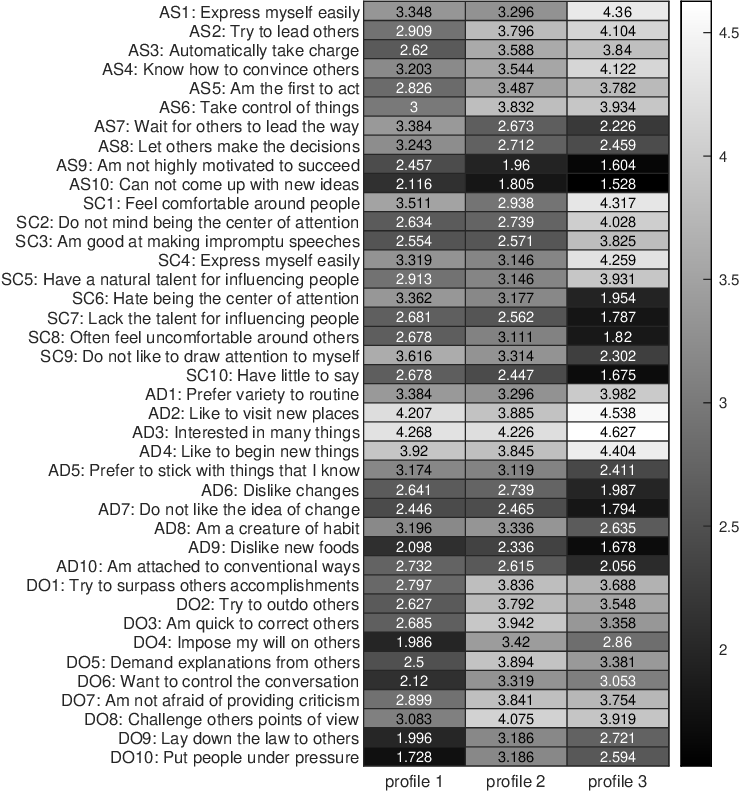}}
\subfigure[Heatmap of $\hat{\Theta}$]{\includegraphics[width=0.45\textwidth]{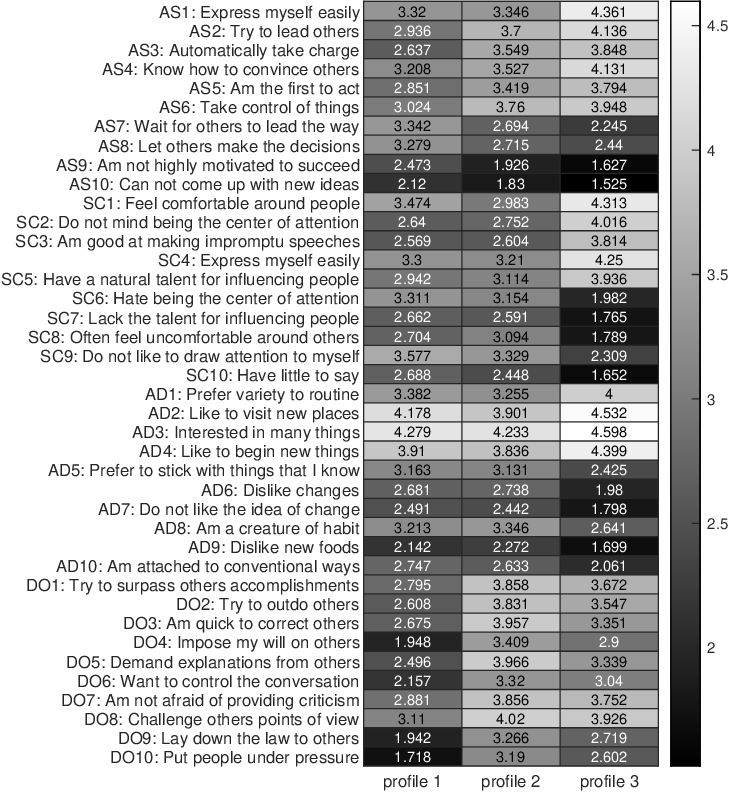}}
\caption{Numerical results for the IPIP data.}
\label{IPIPHeatmap} %% label for entire figure
\end{figure}
\subsection{Big Five Personality Test with Random Number (BFPTRN) data}
\textbf{Background.} Our SCK method is also applied to personality test data: the Big Five Personality Test with Random Number (BFPTRN) data. This dataset can be downloaded from the same URL as the IPIP data. This data asks respondents to generate random numbers in certain ranges attached to 50 personality items. The Big Five personality traits are extraversion (items E1-E10), neuroticism (items N1-N10), agreeableness (items A1-A10), conscientiousness (items C1-C10), and openness (items O1-O10). The original BFPTRN data contains 1369 subjects. After excluding subjects with missing responses or missing random numbers and removing those with random numbers exceeding the specified range, there remain 1155 subjects, i.e., $N=1155, J=50$. All items are rated using the same 5-point scale as the IPIP data, which results in $R\in\{1,2,3,4,5\}^{1155\times50}$ being weighted. The detail of each item and each range for random numbers can be found in Figure \ref{BFPTRNHeatmap}.

\textbf{Analysis.} The estimated number of extreme latent profiles for the BFPTRN dataset is 3. Applying the SCK approach to $R$ with $K=3$ produces the $1155\times 3$ matrix $\hat{Z}$ and the $50\times 3$ matrix $\hat{\Theta}$. SCK takes around 1.6 seconds to process this data.

\textbf{Results.} Without confusion, we also let profile 1, profile 2, and profile 3 represent the three estimated extreme latent profiles. Profile 1,2, and 3 have 409, 320, and 426 subjects, respectively. Similar to the IPIP data, based on $\hat{Z}$ and $\hat{\Theta}$, we can also obtain the heatmap of the average point on each subject for every profile, the heatmap of the average random number on each range for every profile, and the heatmap of $\hat{\Theta}$ as shown in Figure \ref{BFPTRNHeatmap}. We observe that there is no significant connection between the average point and the average random number on each item in each estimated extreme latent profile. From panel (a) of Figure \ref{BFPTRNHeatmap}, we find that: for extraversion, subjects in profile 1 are the most extrovertive while subjects in profile 2 are the most introverted; for neuroticism, subjects in profile 3 are emotionally stable while subjects in profiles 1 and 2 are emotionally unstable; for agreeableness, subjects in profiles 1 and 3 are easier to get along with than subjects in profile 2; for conscientiousness, subjects in profile 3 are more responsible that those in profiles 1 and 2; for openness, subjects in profiles 1 and 3 are more open than those in profile 2. Meanwhile, the matrix shown in panel (a) approximates $\hat{\Theta}$ well, which has been explained in Remark \ref{Why}.
\begin{figure}
\centering
\subfigure[Average point on each item in each estimated extreme latent profile]{\includegraphics[width=0.45\textwidth]{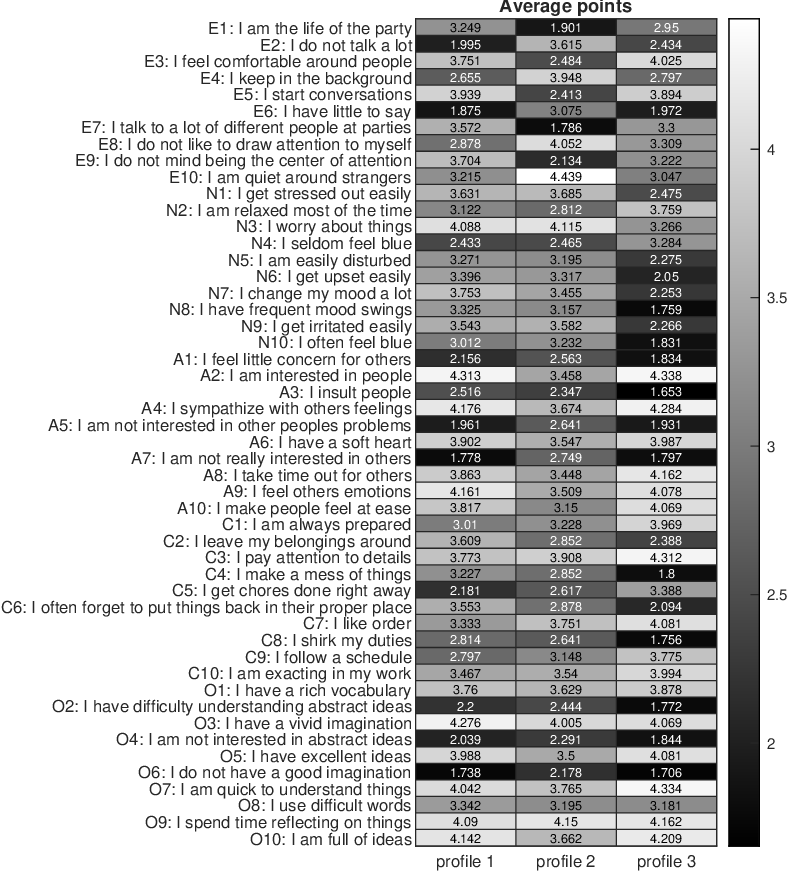}}
\subfigure[Average random number on each range in each estimated extreme latent profile]{\includegraphics[width=0.45\textwidth]{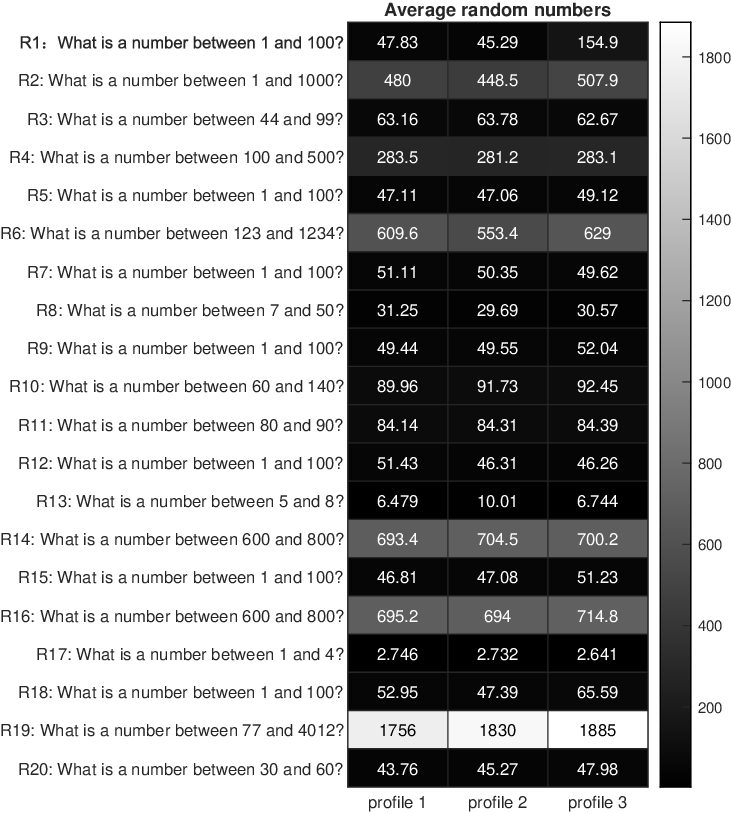}}
\subfigure[Heatmap of $\hat{\Theta}$]{\includegraphics[width=0.45\textwidth]{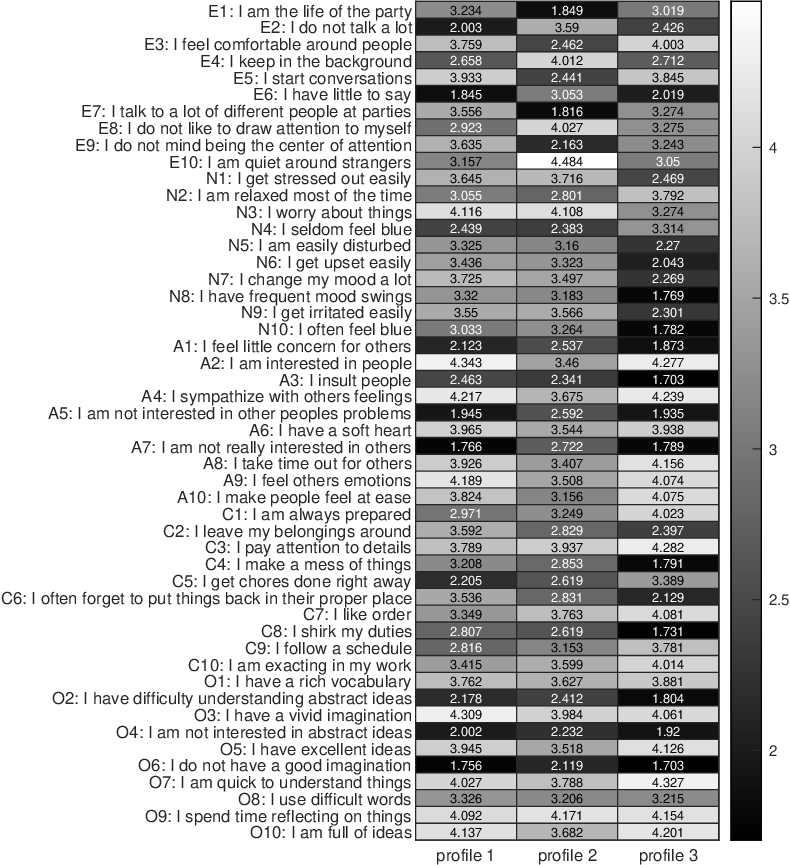}}
\caption{Numerical results for the BFPTRN data.}
\label{BFPTRNHeatmap} %% label for entire figure
\end{figure}
\section{Conclusion and future work}\label{sec7}
In this paper, we introduced the weighted latent class model (WLCM), a novel class of latent class analysis models for categorical data with weighted responses. We studied its model identifiability, developed an efficient inference method SCK to fit WLCM, and built a theoretical guarantee of estimation consistency for the proposed method under WLCM. On the methodology
side, the new model WLCM provides exploratory and useful tools for latent class analysis in applications where the categorical data may have weighted responses. WLCM allows the observed weighted response matrix to be generated from any distribution as long as its expectation follows a latent class structure modeled by WLCM. In particular, the popular latent class model is a sub-model of our WLCM, and categorical data with signed responses can also be modeled by WLCM. Ground-truth latent classes of categorical data with weighted responses generated from WLCM serve as benchmarks for evaluating latent class analysis approaches. On the algorithmic side, the SVD-based spectral method SCK is efficient and easy to implement. SCK requires no tuning parameters and it is applicable for any categorical data with weighted responses. This means that researchers in fields such as social, psychological, behavioral, biological sciences, and beyond can design their tests/evaluations/surveys/interviews without worrying that the response should be binary or positive, as our method SCK is applicable for any weighted response matrices in latent class analysis. On the theoretic side, we established the rate of convergence for our method SCK under the proposed model WLCM. We found that SCK exhibits different behaviors when the weighted response matrices are generated from different distributions, and we conducted extensive experiments to verify our theoretical findings. Empirically, we applied our method to two real categorical datasets with weighted responses. We expect that our WLCM model and SCK method will have broad applications for latent class analysis of data with weighted responses in diverse fields, similar to the widespread use of latent class models in recent years.

There are several future directions worth exploring. First, methods with theoretical guarantees should be designed to determine the number of extreme latent profiles $K$ for observed weighted response matrices generated from any distribution $\mathcal{F}$ under WLCM. Second, the grade of membership (GoM) model \citep{woodbury1978mathematical,erosheva2005comparing} provides a richer modeling capacity than the latent class model since GoM allows a subject to belong to multiple extreme latent profiles. Therefore, following the distribution-free idea developed in this work, it is meaningful to extend the model GoM to categorical data with weighted responses.
Third, like the LCM can be equipped with individual covariates \citep{huang2004building,forcina2008identifiability,reboussin2008locally,vermunt2010latent,di2020random,bakk2022two}, it is worth considering additional individual covariates into the WLCM analysis. Fourth, our WLCM only considers static latent class analysis and it is meaningful to extend WLCM to the dynamic case \citep{asparouhov2017dynamic}. Fifth, our SCK is a spectral clustering method and it is possible to speed up it by applications of the random-projection techniques \citep{zhang2022randomized} or the distributed spectral clustering idea \citep{wu2023distributed} to deal with large-scale categorical data for latent class analysis.
\section*{CRediT authorship contribution statement}
\textbf{Huan Qing} is the sole author of this paper.
\section*{Declaration of competing interest}
The author declares no competing interests.
\section*{Data availability}
Data and code will be made available on request.
%\section*{Acknowledgements}
%Qing's work was supported by the High level personal project of Jiangsu Province NO.JSSCBS20211218.
\appendix
\section{Proofs under WLCM}
\subsection{Proof of Proposition \ref{idWLCM}}
\begin{proof}
According to Lemma \ref{UVWLCM}, we know that $U=ZX$, where $X=\Theta'V\Sigma^{-1}$. Similarly, $U$ can be rewritten as $U=\tilde{Z}\tilde{X}$, where $\tilde{X}=\tilde{\Theta}'V\Sigma^{-1}$. Then, for $i\in[N]$, we have
\begin{align}\label{Ui}
U(i,:)=Z(i,:)X=X(\ell(i),:)=\tilde{Z}(i,:)\tilde{X}=\tilde{X}(\tilde{\ell}(i),:),
\end{align}
where $\tilde{\ell}(i)$ denotes the extreme latent profile that the $i$-th subject belongs in the alternative classification matrix $\tilde{Z}$. For $\bar{i}\in[N]$ and $\bar{i}\neq i$, we have
\begin{align}\label{Uibar}
U(\bar{i},:)=Z(\bar{i},:)X=X(\ell(\bar{i}),:)=\tilde{Z}(\bar{i},:)\tilde{X}=\tilde{X}(\tilde{\ell}(\bar{i}),:).
\end{align}
When $\ell(i)=\ell(\bar{i})$, by the second statement of Lemma \ref{UVWLCM}, we get $U(i,:)=U(\bar{i},:)$. Combining this fact (i.e., $U(i,:)=U(\bar{i},:)$) with Equations (\ref{Ui}) and (\ref{Uibar}) leads to
\begin{align}\label{X4}
X(\ell(i),:)=X(\ell(\bar{i}),:)=\tilde{X}(\tilde{\ell}(i),:)=\tilde{X}(\tilde{\ell}(\bar{i}),:)\mathrm{~when~}\ell(i)=\ell(\bar{i}).
\end{align}
Equation (\ref{X4}) implies that $\tilde{\ell}(i)=\tilde{\ell}(\bar{i})$ when $\ell(i)=\ell(\bar{i})$, i.e., for any two distinct subjects $i$ and $\bar{i}$, they are in the same extreme latent profile under $\tilde{Z}$ when they are in the same extreme latent profile under $Z$. Therefore, we have $\tilde{Z}=Z\mathcal{P}$, where $\mathcal{P}$ is a permutation matrix. Combining $\tilde{Z}=Z\mathcal{P}$ with $Z\Theta'=\tilde{Z}\tilde{\Theta}'$ leads to $Z\Theta'=\tilde{Z}\tilde{\Theta}'=Z\mathcal{P}\tilde{\Theta}'$, which gives that
\begin{align}\label{ZPTheta1}
Z(\Theta'-\mathcal{P}\tilde{\Theta}')=0.
\end{align}
Taking the transposition of Equation (\ref{ZPTheta1}) gives
\begin{align}\label{ZPTheta2}
(\Theta-\tilde{\Theta}\mathcal{P}')Z'=0.
\end{align}
Right multiplying $Z$ at both sides of Equation (\ref{ZPTheta2}) gives
\begin{align}\label{ZPTheta2}
(\Theta-\tilde{\Theta}\mathcal{P}')Z'Z=0.
\end{align}
Since each extreme latent profile is not an empty set, the $N\times K$ classification matrix $Z$ has a rank $K$, which gives that the $K\times K$ matrix $Z'Z$ is nonsingular. Therefore, Right multiplying $(Z'Z)^{-1}$ at both sides of Equation (\ref{ZPTheta2}) gives $\Theta=\tilde{\Theta}\mathcal{P}'$, i.e., $\tilde{\Theta}=\Theta\mathcal{P}$ since $\mathcal{P}$ is a permutation matrix.
\end{proof}
\subsection{Proof of Lemma \ref{UVWLCM}}
\begin{proof}
For the first statement: Since $R_{0}=Z\Theta'=U\Sigma V'$, $V'V=I_{K_{0}\times K_{0}}$, and the $K_{0}\times K_{0}$ diagonal matrix $\Sigma$ is nonsingular, we have $U=Z\Theta'V\Sigma^{-1}\equiv ZX$, where $X=\Theta'V\Sigma^{-1}$. Hence, the first statement holds.

For the second statement: For $i\in[N]$, $U=ZX$ gives $U(i,:)=Z(i,:)X=X(\ell(i),:)$. Then, if $\ell(\bar{i})=\ell(i)$, we have $U(\bar{i},:)=X(\ell(\bar{i}),:)=X(\ell(i),:)=U(i,:)$, i.e., $U$ has $K$ distinct rows. Thus, the second statement holds.

For the third statement: Since $R_{0}=Z\Theta'=U\Sigma V'$, we have $\Theta Z'=V\Sigma U'\Rightarrow \Theta Z'Z=V\Sigma U'Z\Rightarrow\Theta=V\Sigma U'Z(Z'Z)^{-1}$ where the $K\times K$ matrix $Z'Z$ is nonsingular because each extreme latent profile has at least one subject, i.e., $\mathrm{rank}(Z'Z)=\mathrm{rank}(Z)=K$. Thus, the third statement holds.

For the fourth statement: Recall that when $K_{0}=K$, we have $U\in\mathbb{R}^{N\times K},V\in\mathbb{R}^{J\times K}$, $\Sigma$ is a $K\times K$ full-rank diagonal matrix, and $X$ is a $K\times K$ matrix, where $U'U=I_{K\times K}, V'V=I_{K\times K}$. Let $\Delta=\mathrm{diag}(\sqrt{N_{1}},\sqrt{N_{2}},\ldots,\sqrt{N_{K}})$, then
\begin{align}\label{R01}
R_{0}=Z\Theta'=Z\Delta^{-1}\Delta\Theta'.
\end{align}
It is straightforward to verify that $Z\Delta^{-1}$ is a column orthogonal matrix, i.e., $(Z\Delta^{-1})'Z\Delta^{-1}=I_{K\times K}$.

Since $K_{0}=K$, we have $\mathrm{rank}(\Delta\Theta')=K$. Let $\tilde{U}\tilde{\Sigma}\tilde{V}'=\Delta\Theta'$ be the compact SVD of $\Delta\Theta'$, where $\tilde{\Sigma}$ is a $K\times K$ diagonal matrix, $\tilde{U}\in\mathbb{R}^{K\times K}, \tilde{V}\in\mathbb{R}^{J\times K}, \tilde{U}'\tilde{U}=I_{K\times K}$, and $\tilde{V}'\tilde{V}=I_{K\times K}$. Note that $\tilde{U}\in\mathbb{R}^{K\times K}$ and $\tilde{U}'\tilde{U}=I_{K\times K}$ imply $\mathrm{rank}(\tilde{U})=K$. Equation (\ref{R01}) implies
\begin{align}\label{R02}
R_{0}=Z\Theta'=U\Sigma V'=Z\Delta^{-1}\Delta\Theta'=Z\Delta^{-1}\tilde{U}\tilde{\Sigma}\tilde{V}'.
\end{align}
Note that $U, V, Z\Delta^{-1}\tilde{U}$, and $\tilde{V}$ are all orthonormal matrices. Also note that $\Sigma$ and $\tilde{\Sigma}$ are $K\times K$ diagonal matrices. Then we have
\begin{align}\label{R03}
U=Z\Delta^{-1}\tilde{U},\Sigma=\tilde{\Sigma},\mathrm{and~}V=\tilde{V}.
\end{align}
Recall that $U=ZX$, Equation (\ref{R03}) gives that $X=\Delta^{-1}\tilde{U}\in\mathbb{R}^{K\times K}$ and $\mathrm{rank}(X)=K$ because $\mathrm{rank}(\Delta)=K$ and $\mathrm{rank}(\tilde{U})=K$. We can easily verify that the rows of $\Delta^{-1}\tilde{U}$ are perpendicular to each other and the $k$-th row has length $\sqrt{1/N_{k}}$ for $k\in[K]$, i.e., $\sqrt{XX'}=\sqrt{\Delta^{-1}\tilde{U}\tilde{U}'\Delta^{-1}}=\sqrt{\Delta^{-2}}=\Delta^{-1}$. Thus, the fourth statement holds.
\begin{rem}
In this remark, we provide the reason why the fourth statement does not hold when $K_{0}<K$. For this case, the rank of $\Delta\Theta'$ is $K_{0}$, thus $\tilde{U}\in\mathbb{R}^{K\times K_{0}}$ and $\mathrm{rank}(\tilde{U})=K_{0}$. Then we have $X=\Delta^{-1}\tilde{U}\in\mathbb{R}^{K\times K_{0}}$ and $\mathrm{rank}(X)=K_{0}$. Thus, $\mathrm{rank}(XX')=K_{0}<K=\mathrm{rank}(\Delta^{-2})$, which implies $\sqrt{XX'}\neq \Delta^{-1}$ when $K_{0}<K$ and the fourth statement does not hold.
\end{rem}
\end{proof}
\subsection{Proof of Theorem \ref{mainWLCM}}
First, the following two lemmas are provided for our further proof.
\begin{lem}\label{boundUVWLCM}
Under $WLCM(Z,\Theta,\mathcal{F})$, we have
\begin{align*}	\mathrm{max}(\|\hat{U}\hat{O}-U\|_{F},\|\hat{V}\hat{O}-V\|_{F})\leq\frac{2\sqrt{2K}\|R-R_{0}\|}{\rho\sigma_{K}(B)\sqrt{N_{\mathrm{min}}}},
\end{align*}
where $\hat{O}$ is a $K$-by-$K$ orthogonal matrix.
\end{lem}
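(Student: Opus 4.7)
The plan is to combine a standard singular-subspace perturbation inequality with an explicit lower bound on $\sigma_{K}(R_{0})$ obtained from the factorization $R_{0}=\rho ZB'$.

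First I would invoke a Davis--Kahan / Wedin-type $\sin\Theta$ theorem in its Frobenius-norm version. Since $R_{0}$ has rank exactly $K$ (because $K_{0}\equiv K$ and both $Z$ and $\Theta$ have rank $K$), we have $\sigma_{K+1}(R_{0})=0$, so the eigengap in Wedin's bound collapses to $\sigma_{K}(R_{0})$. A standard consequence is the existence of a $K\times K$ orthogonal matrix $\hat{O}$ with
\begin{align*}
\max\bigl(\|\hat{U}\hat{O}-U\|_{F},\,\|\hat{V}\hat{O}-V\|_{F}\bigr)\leq\frac{2\sqrt{2K}\,\|R-R_{0}\|}{\sigma_{K}(R_{0})}.
\end{align*}
(The constant $2\sqrt{2}$ is the standard one; the $\sqrt{K}$ comes from converting a spectral-norm $\sin\Theta$ bound to the Frobenius-norm bound on a $K$-dimensional invariant subspace.)

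Second, I would lower-bound $\sigma_{K}(R_{0})$. Writing $R_{0}=\rho ZB'$ and using the submultiplicative bound for the smallest nonzero singular value of a product of full column-rank matrices,
\begin{align*}
\sigma_{K}(R_{0})=\rho\,\sigma_{K}(ZB')\geq\rho\,\sigma_{K}(Z)\,\sigma_{K}(B).
\end{align*}
Because every subject belongs to exactly one latent class, $Z'Z=\mathrm{diag}(N_{1},\ldots,N_{K})$, so $\sigma_{K}(Z)=\sqrt{N_{\min}}$. Hence $\sigma_{K}(R_{0})\geq \rho\,\sigma_{K}(B)\,\sqrt{N_{\min}}$. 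Substituting this denominator into the Wedin bound yields exactly the stated inequality for both $\hat{U}\hat{O}-U$ and $\hat{V}\hat{O}-V$.

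The only delicate point is the bookkeeping around the sin-theta bound: one must verify that $\hat{R}$ being the best rank-$K$ truncation of $R$ and $R_{0}$ having rank exactly $K$ puts us in the regime where $\sigma_{K+1}(R_{0})=0$, so the gap in Wedin's inequality is simply $\sigma_{K}(R_{0})$, and the left and right singular subspaces can be aligned by the same orthogonal matrix $\hat{O}$ (a standard property of the SVD perturbation). Everything else is a one-line singular-value inequality plus the explicit computation $\sigma_{K}(Z)=\sqrt{N_{\min}}$.
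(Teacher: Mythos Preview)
Your proposal is correct and follows essentially the same route as the paper: a Wedin/Davis--Kahan singular-subspace perturbation bound combined with the lower bound $\sigma_{K}(R_{0})\geq\rho\,\sigma_{K}(B)\sqrt{N_{\min}}$, obtained from $R_{0}=\rho ZB'$ and $Z'Z=\mathrm{diag}(N_{1},\ldots,N_{K})$. The paper's only cosmetic difference is that it quotes the perturbation lemma with $\|\hat{R}-R_{0}\|$ in the numerator and then uses $\|\hat{R}-R_{0}\|\leq 2\|R-R_{0}\|$ (Eckart--Young plus the triangle inequality), which is where the leading factor $2$ comes from.
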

\begin{proof}
According to the proof of Lemma 3 in \cite{zhou2019analysis}, there is a $K\times K$ orthogonal matrix $\hat{O}$ such that
\begin{align*}	\mathrm{max}(\|\hat{U}\hat{O}-U\|_{F},\|\hat{V}\hat{O}-V\|_{F})\leq\frac{\sqrt{2K}\|\hat{R}-R_{0}\|}{\sqrt{\lambda_{K}(R_{0}R'_{0})}}.
\end{align*}
Because $\hat{R}$ is the top $K$ SVD of $R$ and $\mathrm{rank}(R_{0})=K$, we have $\|R-\hat{R}\|\leq\|R-R_{0}\|$. Then we have $\|\hat{R}-R_{0}\|=\|\hat{R}-R+R-R_{0}\|\leq 2\|R-R_{0}\|$, which gives
\begin{align}\label{BoungZhouZhixin} \mathrm{max}(\|\hat{U}\hat{O}-U\|_{F},\|\hat{V}\hat{O}-V\|_{F})\leq\frac{2\sqrt{2K}\|R-R_{0}\|}{\sqrt{\lambda_{K}(R_{0}R'_{0})}}.
\end{align}
For $\lambda_{K}(R_{0}R'_{0})$, because $R_{0}=Z\Theta'=\rho ZB'$ and $\lambda_{K}(Z'Z)=N_{\mathrm{min}}$, we have
\begin{align*}	
\lambda_{K}(R_{0}R'_{0})&=\lambda_{K}(Z\Theta'\Theta Z')=\lambda_{K}(\rho^{2} ZB'BZ')=\rho^{2}\lambda_{K}(B'BZ'Z)\\ &\geq\rho^{2}\lambda_{K}(Z'Z)\lambda_{K}(B'B)=\rho^{2}N_{\mathrm{min}}\lambda_{K}(B'B).
\end{align*}
Combining Equation (\ref{BoungZhouZhixin}) with $\lambda_{K}(R_{0}R'_{0})\geq\rho^{2}N_{\mathrm{min}}\lambda_{K}(B'B)$ gives
\begin{align*} \mathrm{max}(\|\hat{U}\hat{O}-U\|_{F},\|\hat{V}\hat{O}-V\|_{F})\leq\frac{2\sqrt{2K}\|R-R_{0}\|}{\rho\sigma_{K}(B)\sqrt{N_{\mathrm{min}}}}.
\end{align*}
\end{proof}
\begin{lem}\label{boundRR0}
	Under $WLCM(Z,\Theta,\mathcal{F})$, if Assumption \ref{asump} is satisfied, then with probability at least $1-o((N+J)^{-3})$,
	\begin{align*}
	\|R-R_{0}\|\leq C\sqrt{\gamma\mathrm{~max}(N,J)\mathrm{log}(N+J)},
	\end{align*}
	where $C$ is a positive constant.
\end{lem}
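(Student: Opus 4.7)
The plan is to bound the spectral norm of the mean-zero random matrix $W := R - R_{0}$ using a matrix Bernstein-type concentration inequality. By Definition \ref{WLCM}, the entries $W(i,j) = R(i,j) - R_{0}(i,j)$ are independent with mean zero, are uniformly bounded by $\tau$ in absolute value, and have variance at most $\gamma$. So the hypotheses needed for a Bernstein argument are immediate from the model.

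First, I would write $W = \sum_{i=1}^{N}\sum_{j=1}^{J} W(i,j)\, e_{i} e_{j}'$ as a sum of independent, mean-zero, bounded rank-one random matrices and apply the rectangular matrix Bernstein inequality (Tropp), which yields
\[
\mathbb{P}(\|W\| \geq t) \leq (N+J)\exp\!\left(-\frac{t^{2}/2}{\nu + \tau t/3}\right),
\]
where $\nu := \max\{\|\mathbb{E}[WW']\|,\, \|\mathbb{E}[W'W]\|\}$. The variance proxy $\nu$ is easy to control: since the entries of $W$ are independent and mean-zero, $\mathbb{E}[WW']$ is diagonal with $i$-th entry $\sum_{j=1}^{J}\mathrm{Var}(R(i,j)) \leq J\gamma$, and $\mathbb{E}[W'W]$ is diagonal with $j$-th entry $\sum_{i=1}^{N}\mathrm{Var}(R(i,j)) \leq N\gamma$, so $\nu \leq \gamma\max(N,J)$.

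Next, I would set $t = C\sqrt{\gamma\max(N,J)\log(N+J)}$ for a sufficiently large absolute constant $C$ and verify that the quadratic term dominates the linear one in the Bernstein denominator. Assumption \ref{asump} gives $\tau \leq \sqrt{\gamma\max(N,J)/\log(N+J)}$, from which $\tau t/3 \lesssim \gamma\max(N,J)$, comparable to $\nu$. Plugging these bounds into the exponent yields $\mathbb{P}(\|W\| \geq t) \leq (N+J)\exp(-c C^{2}\log(N+J))$ for some absolute $c>0$, and choosing $C$ sufficiently large makes this $o((N+J)^{-3})$, as required.

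The main obstacle is the interaction of the two regimes in matrix Bernstein: the variance term scales like $\sqrt{\gamma\max(N,J)\log(N+J)}$ while the bounded term scales like $\tau\log(N+J)$. Without Assumption \ref{asump} the statement would need an additive $\tau\log(N+J)$ term. Assumption \ref{asump} is precisely calibrated so that the bounded part is absorbed into the variance part, producing the clean single-term bound advertised in the lemma. An alternative route would be to cite an off-the-shelf inhomogeneous random matrix concentration result (for example, a Bandeira--van Handel non-commutative Khintchine-type bound for rectangular matrices with independent entries), but the Tropp route above is self-contained and only relies on the two inputs already collected.
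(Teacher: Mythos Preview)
Your proposal is correct and follows essentially the same approach as the paper: the paper's proof is a one-line citation of an external lemma that is itself obtained from Tropp's rectangular matrix Bernstein inequality, exactly the tool you apply. Your write-up simply unpacks the details (the variance-proxy computation and the role of Assumption~\ref{asump} in absorbing the $\tau\log(N+J)$ term) that the paper leaves implicit in the citation.
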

\begin{proof}
This lemma holds by setting $\alpha$ in Lemma 2 \cite{qing2023community} as 3, where Lemma 2 of \cite{qing2023community} is obtained from the rectangular version of Bernstein inequality in \cite{tropp2012user}.
\end{proof}
\begin{proof}
Now, we prove the first statement of Theorem \ref{mainWLCM}. Set $\varsigma>0$ as a small value, by Lemma 2 of \cite{joseph2016impact} and the fourth statement of Lemma \ref{UVWLCM}, if
\begin{align}\label{holdWLCM}	\frac{\sqrt{K}}{\varsigma}\|U-\hat{U}\hat{O}\|_{F}(\frac{1}{\sqrt{N_{k}}}+\frac{1}{\sqrt{N_{l}}})\leq \sqrt{\frac{1}{N_{k}}+\frac{1}{N_{l}}}, \mathrm{~for~each~}1\leq k\neq l\leq K,
\end{align}
then the Clustering error $\hat{f}=O(\varsigma^{2})$ using the K-means algorithm. By setting $\varsigma=\sqrt{\frac{2K N_{\mathrm{max}}}{N_{\mathrm{min}}}}\|U-\hat{U}\hat{O}\|_{F}$, we see that Equation (\ref{holdWLCM}) always holds for all $1\leq k\neq l\leq K$. So we get $\hat{f}=O(\varsigma^{2})=O(\frac{K N_{\mathrm{max}}\|U-\hat{U}\hat{O}\|^{2}_{F}}{N_{\mathrm{min}}})$. According to Lemma \ref{boundUVWLCM}, we have
\begin{align*}
\hat{f}=O(\frac{K^{2}N_{\mathrm{max}}\|R-R_{0}\|^{2}}{\rho^{2}\sigma^{2}_{K}(B)N^{2}_{\mathrm{min}}}).
\end{align*}
By Lemma \ref{boundRR0}, we have
\begin{align*}
\hat{f}=O(\frac{\gamma K^{2}N_{\mathrm{max}}\mathrm{max}(N,J)\mathrm{log}(N+J)}{\rho^{2}\sigma^{2}_{K}(B)N^{2}_{\mathrm{min}}}).
\end{align*}
Next, we prove the second statement of Theorem \ref{mainWLCM}. Since $U=ZX$ by Equation (\ref{UX}) in Lemma \ref{UVWLCM} and $U'U=I_{K\times K}$, we have $X'Z'ZX=I_{K\times K}$ which gives that $(Z'Z)^{-1}=XX'$ and $\lambda_{1}(XX')=\sigma^{2}_{1}(XX')=\frac{1}{\lambda_{K}(Z'Z)}=\frac{1}{N_{\mathrm{min}}}$. We also have $Z(Z'Z)^{-1}=ZXX'=UX'$. Similarly, we have $\hat{Z}(\hat{Z}'\hat{Z})^{-1}\approx \hat{U}\hat{X}'$, where $\hat{X}$ is the $K\times K$ centroid matrix returned by K-means method for $\hat{U}$. Recall that $\hat{R}=\hat{U}\hat{\Sigma}\hat{V}'$, combine it with Equation (\ref{ThetaZ}) and Lemma \ref{boundRR0}, we have
\begin{align*}
\|\hat{\Theta}-\Theta\mathcal{P}\|&=\|\hat{V}\hat{\Sigma}\hat{U}'\hat{Z}(\hat{Z}'\hat{Z})^{-1}-V\Sigma U'Z(Z'Z)^{-1}\mathcal{P}\|=\|\hat{R}'\hat{Z}(\hat{Z}'\hat{Z})^{-1}-R'_{0}Z(Z'Z)^{-1}\mathcal{P}\|\\
&=\|(\hat{R}'-R'_{0})\hat{Z}(\hat{Z}'\hat{Z})^{-1}+R'_{0}(\hat{Z}(\hat{Z}'\hat{Z})^{-1}-Z(Z'Z)^{-1}\mathcal{P})\|\leq\|(\hat{R}'-R'_{0})\hat{Z}(\hat{Z}'\hat{Z})^{-1}\|+\|R'_{0}(\hat{Z}(\hat{Z}'\hat{Z})^{-1}-Z(Z'Z)^{-1}\mathcal{P})\|\\
&\leq\|\hat{R}-R_{0}\|\|\hat{Z}(\hat{Z}'\hat{Z})^{-1}\|+\|R_{0}\|\hat{Z}(\hat{Z}'\hat{Z})^{-1}-Z(Z'Z)^{-1}\mathcal{P}\|\leq2\|R-R_{0}\|\|\hat{Z}(\hat{Z}'\hat{Z})^{-1}\|+\|R_{0}\|\hat{Z}(\hat{Z}'\hat{Z})^{-1}-Z(Z'Z)^{-1}\mathcal{P}\|\\
&=2\|R-R_{0}\|\|\hat{Z}(\hat{Z}'\hat{Z})^{-1}\|+\|\rho ZB'\|\|\hat{Z}(\hat{Z}'\hat{Z})^{-1}-Z(Z'Z)^{-1}\mathcal{P}\|\leq2\|R-R_{0}\|\|\hat{Z}(\hat{Z}'\hat{Z})^{-1}\|+\rho \|Z\|\|B\|\|\hat{Z}(\hat{Z}'\hat{Z})^{-1}-Z(Z'Z)^{-1}\mathcal{P}\|\\
&=2\|R-R_{0}\|\|\hat{Z}(\hat{Z}'\hat{Z})^{-1}\|+\rho\sigma_{1}(B)\sqrt{N_{\mathrm{max}}}\|\hat{Z}(\hat{Z}'\hat{Z})^{-1}-Z(Z'Z)^{-1}\mathcal{P}\|\\
&=2\|R-R_{0}\|\|\hat{Z}(\hat{Z}'\hat{Z})^{-1}-Z(Z'Z)^{-1}\mathcal{P}+Z(Z'Z)^{-1}\mathcal{P}\|+\rho\sigma_{1}(B)\sqrt{N_{\mathrm{max}}}\|\hat{Z}(\hat{Z}'\hat{Z})^{-1}-Z(Z'Z)^{-1}\mathcal{P}\|\\
&\leq2\|R-R_{0}\|(\|\hat{Z}(\hat{Z}'\hat{Z})^{-1}-Z(Z'Z)^{-1}\mathcal{P}\|+\|Z(Z'Z)^{-1}\mathcal{P}\|)+\rho\sigma_{1}(B)\sqrt{N_{\mathrm{max}}}\|\hat{Z}(\hat{Z}'\hat{Z})^{-1}-Z(Z'Z)^{-1}\mathcal{P}\|\\
&\leq2\|R-R_{0}\|(\|\hat{Z}(\hat{Z}'\hat{Z})^{-1}-Z(Z'Z)^{-1}\mathcal{P}\|+\|Z(Z'Z)^{-1}\|\|\mathcal{P}\|)+\rho\sigma_{1}(B)\sqrt{N_{\mathrm{max}}}\|\hat{Z}(\hat{Z}'\hat{Z})^{-1}-Z(Z'Z)^{-1}\mathcal{P}\|\\
&=2\|R-R_{0}\|(\|\hat{Z}(\hat{Z}'\hat{Z})^{-1}-Z(Z'Z)^{-1}\mathcal{P}\|+\frac{1}{\sqrt{N_{\mathrm{min}}}})+\rho\sigma_{1}(B)\sqrt{N_{\mathrm{max}}}\|\hat{Z}(\hat{Z}'\hat{Z})^{-1}-Z(Z'Z)^{-1}\mathcal{P}\|\\
&=O(\|R-R_{0}\|(\|\hat{U}\hat{X}'-UX'\mathcal{P}\|+\frac{1}{\sqrt{N_{\mathrm{min}}}}))\leq O(\|R-R_{0}\|(\|\hat{U}\hat{X}'\|+\|UX'\mathcal{P}\|+\frac{1}{\sqrt{N_{\mathrm{min}}}}))\\
&\leq O(\|R-R_{0}\|(\|\hat{U}\|\|\hat{X}\|+\|U\|\|X\|\|\mathcal{P}\|+\frac{1}{\sqrt{N_{\mathrm{min}}}}))=O(\|R-R_{0}\|(\|\hat{X}\|+\|X\|+\frac{1}{\sqrt{N_{\mathrm{min}}}}))\\
&=O(\frac{\|R-R_{0}\|}{\sqrt{N_{\mathrm{min}}}})=O(\sqrt{\frac{\gamma\mathrm{~max}(N,J)\mathrm{log}(N+J)}{N_{\mathrm{min}}}})\\
\end{align*}
Since $(\hat{\Theta}-\Theta\mathcal{P})$ is a $J\times K$ matrix and $K\ll J$ in this paper, we have $\mathrm{rank}(\hat{\Theta}-\Theta\mathcal{P})=K$. Since $\|M\|_{F}\leq\sqrt{\mathrm{rank}(M)}\|M\|$ holds for any matrix $M$, we have $\|\hat{\Theta}-\Theta\mathcal{P}\|_{F}\leq\sqrt{K}\|\hat{\Theta}-\Theta\mathcal{P}\|$. Thus, we have
\begin{align}\label{l1}
\|\hat{\Theta}-\Theta\mathcal{P}\|_{F}=O(\sqrt{\frac{\gamma K\mathrm{~max}(N,J)\mathrm{log}(N+J)}{N_{\mathrm{min}}}}).
\end{align}
Combing Equation (\ref{l1}) with the fact that $\|\Theta\|_{F}\geq\|\Theta\|=\|\rho B\|=\rho\|B\|=\rho\sigma_{1}(B)\geq\rho\sigma_{K}(B)$ gives
\begin{align*}
\frac{\|\hat{\Theta}-\Theta\mathcal{P}\|_{F}}{\|\Theta\|_{F}}\leq\frac{\|\hat{\Theta}-\Theta\mathcal{P}\|_{F}}{\rho\sigma_{K}(B)}=O(\frac{\sqrt{\gamma K\mathrm{max}(N,J)\mathrm{log}(N+J)}}{\rho\sigma_{K}(B)\sqrt{N_{\mathrm{min}}}}).
\end{align*}
Recall that the $J$-by-$K$ matrix $B$ satisfies $\mathrm{max}_{j\in[J],k\in[K]}|B(j,k)|=1$, we have $\sigma_{K}(B)$ is at least of the order $\sqrt{J}-\sqrt{K-1}$ with high probability by applying the lower bound of the smallest singular value of a random rectangular matrix in \cite{rudelson2009smallest}. Since $K\ll J$ in this paper, we have
\begin{align*}
&\hat{f}=O(\frac{\gamma K^{2}N_{\mathrm{max}}\mathrm{max}(N,J)\mathrm{log}(N+J)}{\rho^{2}N^{2}_{\mathrm{min}}J}) \mathrm{~and~}\frac{\|\hat{\Theta}-\Theta\mathcal{P}\|_{F}}{\|\Theta\|_{F}}=O(\frac{\sqrt{\gamma K\mathrm{max}(N,J)\mathrm{log}(N+J)}}{\rho\sqrt{N_{\mathrm{min}}J}}).
\end{align*}
\end{proof}
\bibliographystyle{elsarticle-num}
\bibliography{refWLCM}
\end{document}